\providecommand{\tabularnewline}{\\}
\theoremstyle{plain}
\newtheorem{thm}{\protect\theoremname}
  \theoremstyle{definition}
  \newtheorem{defn}[thm]{\protect\definitionname}
  \theoremstyle{plain}
  \newtheorem{prop}[thm]{\protect\propositionname}
  \providecommand{\definitionname}{Definition}
  \providecommand{\propositionname}{Proposition}
\providecommand{\theoremname}{Theorem}
\begin{document}

\title{Quality and Cost of Deterministic Network Calculus --\\
Design and Evaluation of an Accurate and Fast Analysis\date{}}

\author{Steffen Bondorf\,\thanks{This work is supported by a Carl Zeiss Foundation grant.}
, Paul Nikolaus, Jens B. Schmitt\\
Distributed Computer Systems (DISCO) Lab,\\
University of Kaiserslautern, Germany}

\maketitle
\begin{abstract}
Networks are integral parts of modern safety-critical systems and
certification demands the provision of guarantees for data transmissions.
Deterministic Network Calculus (DNC) can compute a worst-case bound
on a data flow's end-to-end delay. Accuracy of DNC results has been
improved steadily, resulting in two DNC branches: the classical algebraic
analysis and the more recent optimization-based analysis. The optimization-based
branch provides a theoretical solution for tight bounds. Its computational
cost grows, however, (possibly super-)exponentially with the network
size. Consequently, a heuristic optimization formulation trading accuracy
against computational costs was proposed. In this article, we challenge
optimization-based DNC with a new algebraic DNC algorithm. 

We show that: 
\begin{enumerate}
\item no current optimization formulation scales well with the network size
and {\small \par}
\item algebraic DNC can be considerably improved in both aspects, accuracy
and computational cost.{\small \par}
\end{enumerate}
To that end, we contribute a novel DNC algorithm that transfers the
optimization's search for best attainable delay bounds to algebraic
DNC. It achieves a high degree of accuracy and our novel efficiency
improvements reduce the cost of the analysis dramatically. In extensive
numerical experiments, we observe that our delay bounds deviate from
the optimization-based ones by only 1.142\% on average while computation
times simultaneously decrease by several orders of magnitude.
\end{abstract}

\section{Introduction}

\label{sec:Introduction}Accurately bounding timing constraints is
a fundamental problem of systems analysis. Applied in the design phase
of communication networks, it allows for their certification while
preventing over-provisioning of resources. For networks, the main
resource is the forwarding capability of links and the crucial metric
to consider is the end-to-end delay of data flows. An example of
a networked system requiring certification are the Ethernet-based
Avionics Full-Duplex (AFDX) data networks embedded in modern Airbus
aircraft. These have to be verified against strict deadlines in order
to attain the necessary certification. Given these demands and a
formal worst-case model of the network, Deterministic Network Calculus
(DNC) can provide worst-case bounds on the communication delay in
general feed-forward networks. Indeed, DNC was used for the certification
of the AFDX backbone as found in the Airbus~A380~\cite{Grieu_PhDthesis,Geyer_CommMag}.
Other recent example applications for DNC can be found in shared networked
storage in order to meet tail latency QoS~\cite{PriorityMeister}
and in multi-tenant data centers~\cite{Jang:2015:SPM:2785956.2787479}.
\begin{figure}
\begin{centering}
\includegraphics[width=0.55\columnwidth]{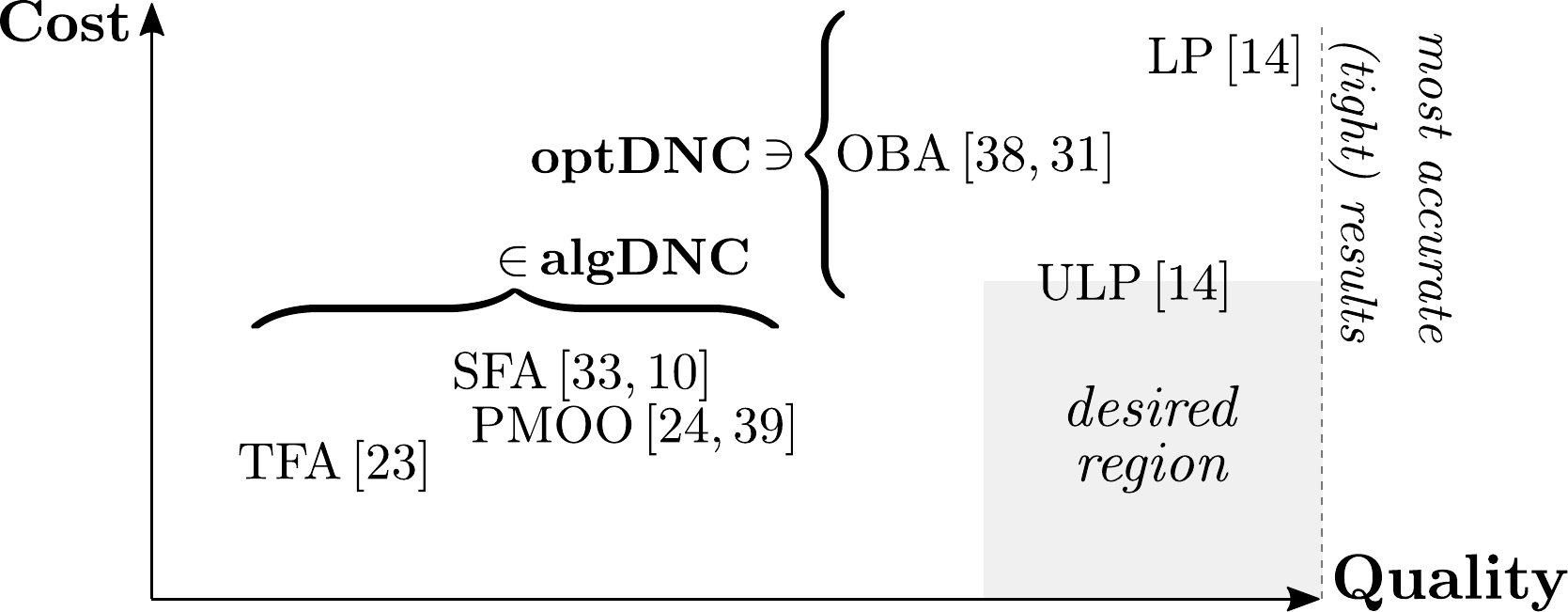}
\par\end{centering}
\centering{}\caption{\textbf{\label{fig:DNC-QandC-before}}Assumed quality and cost of
Deterministic Network Calculus (DNC) analyses, based on insufficient
evaluation data. While all analyses from the algebraic branch (algDNC)
are ``believed'' to be of insufficient accuracy, the optimization's
(optDNC's) ULP is currently regarded as the only analysis of good
quality as well as just feasible to execute.}
\end{figure}

DNC network analysis already allowed to derive end-to-end delay bounds
at an early stage~\cite{Cruz_ptTwo}, yet, achieving accurate results
has turned out to be a hard problem. DNC's evolution in the pursuit
of ever more accurate delay bounds has led to two branches: algebraic~DNC~(algDNC),
as used in the verification of AFDX, followed by optimization-based~DNC
(optDNC). Both share the same network model, yet, vastly differ in
the tools they apply to derive delay bounds, i.e., the actual network
analysis. AlgDNC uses its composition rules for operators to derive
an equation bounding the delay. OptDNC, on the other hand, derives
a linear program formulation whose solution bounds the end-to-end
delay.

OptDNC, the latest evolutionary step, is the more accurate branch
of DNC. In theory, it can compute most accurate, i.e., tight, bounds,
yet the effort to execute the according LP analysis grows \mbox{(possibly
super-)}exponentially with the network size. In fact, it becomes
computationally infeasible to analyze networks of even rather small
size and its authors accompanied their exact theoretical solution
with a practical variant, an optDNC heuristic called ULP~\cite{BouillardINFOCOM2010}. 

Unfortunately, the current plethora of DNC analyses, most of them
heuristics, was never evaluated comprehensively. Questions regarding
their fundamental tradeoff between quality (accuracy) and cost remain
open. Especially w.r.t. computational cost, there are crucial knowledge
gaps. Figure~\ref{fig:DNC-QandC-before} sketches the currently ``believed''
quality and cost relations of DNC analyses. These are vague estimates
based on very scarce evaluation data. We set out to fill that gap.
This allows us to make the following contributions:

\begin{enumerate}

\item Current optDNC design does not provide a fast heuristic for attaining delay bounds in feed-forward networks.

\end{enumerate}Both optDNC analyses of~\cite{BouillardINFOCOM2010},
LP and ULP, follow the same non-compositional design principle. The
given heuristic provides the only known way to trade accuracy against
computational efficiency and was believed to be just in the desired
region of feasibility and high quality (see Figure~\ref{fig:DNC-QandC-before}).
In this article, we show that this heuristic is actually very costly
and only applicable to small networks~\textendash{} though, it constitutes
the maximum reduction of effort by reducing the optimizations to
a single (but huge) linear program. The design principle thus inherently
limits the efficiency of the analysis.

\begin{enumerate}[resume]

\item Identification of current algDNC analyses' problems. Their impact needs to be minimized to improve algDNC accuracy.

\end{enumerate}The DNC branch of choice in the industrial context,
especially avionics, is the inaccurate algDNC~\cite{Geyer_CommMag,Optimize-AFDX-NC-ERTS-06,4638728,AFDX-Offset,Pegase_ERTSS}.
Therefore, we derive in-depth knowledge of its problems, the impact
of which needs to be minimized. We turn this knowledge into an analysis
design that achieves our objective.

The algDNC is computationally attractive due to its compositional
approach, yet, its inaccuracy was the very reason for DNC's branching
into optimization-based analyses. Regarding quality, algDNC, represented
by the recent SFA of~\cite{BouillardHDR}, cannot compete with optDNC's
ULP heuristic. Figure~\ref{fig:Comp-SFA-ULP-20DevExample} depicts
flow delay bounds in a small network of $20$ devices. Already in
this small network, the algebraically derived SFA bounds oscillate
wildly with a large amplitude above the ULP results. This behavior
can be observed in larger networks as well. We have taken statistics
over $9$ different topology sizes, amounting to $12376$ individual
flow delay bounds (cf.~Table~\ref{tab:NetServerFlow}, left, in
Appendix~\ref{subsec:SetNetworks}). This first comprehensive comparison
of an algDNC network analysis and an optDNC one reveals the gap in
accuracy we set to overcome in this article. SFA delay bounds deviate
from the respective ULP bounds by an average of $\approx\negmedspace15.2$\%,
the maximum is even as large as \textbf{$72.65$}\%. Figure~\ref{fig:Comp-SFA-ULP-DeviationDelay}
shows a more fine-grained distribution of deviations.

Hence, DNC currently offers an accurate, yet, computationally infeasible
branch and a reasonably fast, but inaccurate branch of network analyses.
\begin{figure}
\begin{centering}
\subfloat[\textbf{\label{fig:Comp-SFA-ULP-20DevExample}}Sample network with
20 devices.]{\begin{centering}
\includegraphics[width=0.45\textwidth]{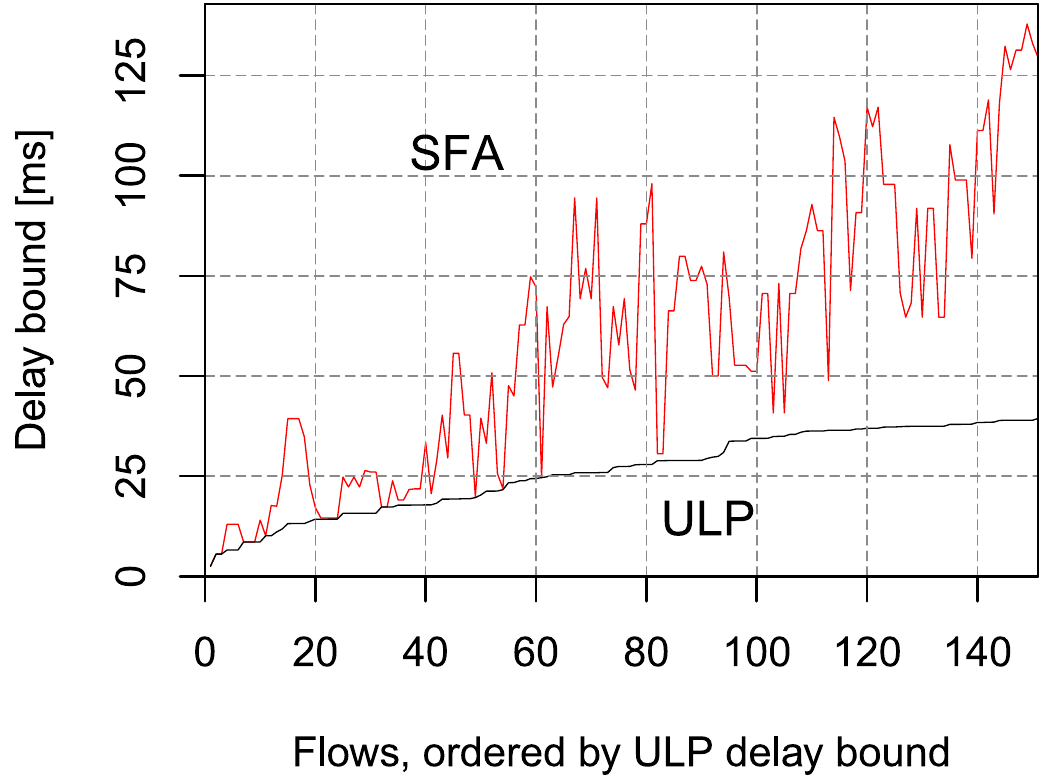}
\par\end{centering}
}\hspace*{7.5mm}\subfloat[\label{fig:Comp-SFA-ULP-DeviationDelay}Delay bound deviations across
$12376$ delay bounds.]{\begin{centering}
\includegraphics[width=0.45\textwidth]{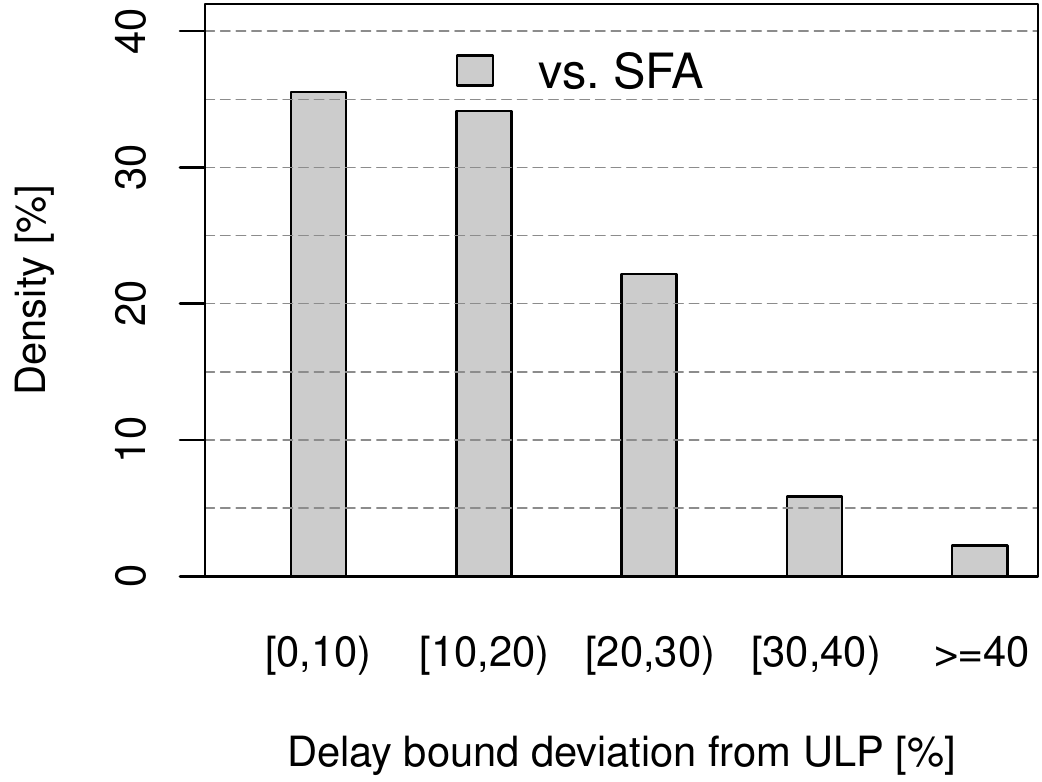}
\par\end{centering}
}
\par\end{centering}
\centering{}\caption{\label{fig:Delay-optDNC-algDNC-before}Delay bounds, algDNC (SFA)
vs. optDNC (ULP).}
\end{figure}

\begin{enumerate}[resume]

\item Design of a novel algDNC analysis algorithm, the first DNC heuristic that is accurate and fast at the same time.

\end{enumerate}Our new algDNC analysis achieves its accuracy by
incorporating concepts from optDNC's analysis design. Consequently,
it becomes computationally infeasible at first sight. Yet, we provide
an algorithm that vastly reduces the computational cost while accuracy
of delay bounds is unchanged. Both is shown by extensive benchmarks
where our new analysis achieves delay bounds that deviate from the
ULP by an average of only $1.142$\%. Moreover, our algorithm bounds
delays several orders of magnitude faster than the ULP.\medskip{}

The remainder of this article is structured as follows: Section~\ref{sec:NC-Background}
describes the modeling background of DNC. In Section~\ref{sec:OptimizationNC},
we evaluate the cost of optDNC and in Section~\ref{sec:Analysis_Framework}
we derive the problems causing loss of accuracy in algDNC, leading
to an analysis procedure that minimizes their impact. In Section~\ref{sec:algDNCwithOpt}
we design a novel algorithm for this analysis that is accurate and
fast at the same time. Section~\ref{sec:Evaluation} presents a comprehensive
numerical evaluation of all presented algDNC and optDNC analyses,
ranking them regarding their quality and cost. Section~\ref{sec:RelatedWork}
relates our contribution to the literature and Section~\ref{sec:Conclusion}
concludes the article.

\section{The Common Network Calculus Model}

\label{sec:NC-Background}

\subsection{Data Arrivals and Forwarding Service}

\label{subsec:System-Description}In DNC, flows are characterized
by functions that cumulatively count their data~\cite{Cruz_ptOne}.
These functions are non-negative, wide-sense increasing and pass through
the origin:
\begin{align*}
\mathcal{F}_{0}\,=\,\left\{ \left.f:\mathbb{R}\rightarrow\mathbb{R}_{\infty}^{+}\;\right|\;f\left(0\right)=0,\;\forall\tau\le t\,:\,f(\tau)\!\leq\!f(t)\right\} , & \;\mathbb{R}_{\infty}^{+}\coloneqq\text{\ensuremath{\left[0,+\infty\right)\cup\left\{  +\infty\right\} } }.
\end{align*}

A flow's functions $A(t)$ and $A'(t)$ describe its data put into
a server $s$ and its data put out from $s$, both from the start
of arrivals up until time $t$. Given these functions, the expected
delay of a data unit arriving at $s$ can be computed.
\begin{defn}
\label{def:Backlog-Delay-Generic}(Delay) Assume a flow with input
function $A$ traverses a server $s$ and results in the output function
$A'$. The \emph{(virtual)} delay for a data unit arriving at $s$
at time $t$ is
\[
D(t)=\inf\left\{ \tau\geq0\;\left|\;A(t)\leq A'(t+\tau)\right.\right\} \!.
\]
\end{defn}

Note, that this definition requires the server to preserve the order
of the flow's arrivals when forwarding its data.

In a second modeling step, network calculus introduces bounding functions
for data arrivals, so-called arrival curves.
\begin{defn}
\label{def:Arrival-Curve}(Arrival Curve) Let a flow have input function
$A\in\mathcal{F}_{0}$, then $\alpha\in\mathcal{F}_{0}$ is an arrival
curve for $A$ iff it bounds $A$ in any observation window of duration
$d$, i.e.,
\[
\forall t\,\forall d,\,0\leq d\le t\,:\,A(t)-A(t-d)\leq\alpha(d).
\]
\end{defn}

Using arrival curves, only the duration of observation is required
to obtain an upper bound on the cumulative arrivals of data. I.e.,
flow arrivals' absolute starting point in time as well as the history
up until time $t$ need not be known anymore.

A useful basic shape for an arrival curve is the so-called token bucket.
It is enforced by the eponymous traffic regulation algorithm. These
curves are from the set $\mathcal{F}_{\text{TB}}\subseteq\mathcal{F}_{0}$,
\[
\mathcal{F}_{\text{TB}}\!=\!\left\{ \!\left.\gamma_{r,b}\,\right|\,\gamma_{r,b}(d)=\begin{cases}
0 & \text{if }d=0\\
b+r\cdot d & \text{otherwise}
\end{cases},\,r,b\in\mathbb{R}_{\infty}^{+}\right\} \!,
\]
where $r$ denotes the maximum arrival rate and $b$ is the maximum
burstiness (bucket size).

Scheduling and buffering at a server result in the output function~$A'(t)$.
Network calculus captures the minimum forwarding capabilities that
lead to $A'$ in interval time as well:
\begin{defn}
\label{def:Service-Curve}(Service Curve) If the service provided
by a server~$s$ for a given input~$A$ results in an output~$A'$,
then~$s$ is said to offer a service curve $\beta\in\mathcal{F}_{0}$
iff
\[
\forall t\,:\,A'(t)\,\geq\,{\displaystyle \inf_{0\leq d\leq t}}\left\{ A(t-d)+\beta(d)\right\} \!.
\]
\end{defn}

A number of servers fulfill a stricter definition of service curves
by considering their state in addition to input~$A$.
\begin{defn}
\label{def:BackloggedPeriod}(Backlogged Period) A server~$s$ with
input~$A$ and output~$A'$ is backlogged during period $(\underline{t},\overline{t})$,
if $\forall t\in(\underline{t},\overline{t})\,:\,A(t)>A'(t)$.
\end{defn}

Servers offering strict service guarantees have a higher output guarantee
during any backlogged period.
\begin{defn}
\label{def:Strict-Service-Curve}(Strict Service Curve) If, during
any backlogged period of duration $d$, a server~$s$ with input~$A$
and output~$A'$ guarantees an output of at least $\beta(d)$, it
offers a \emph{strict} service curve $\beta\in\mathcal{F}_{0}$.
\end{defn}

A basic shape for service curves is the rate latency function defined
by the set $\mathcal{F}_{\text{RL}}\subseteq\mathcal{F}_{0}$,
\[
\mathcal{F}_{\text{RL}}\!=\!\left\{ \beta_{R,T}\,\left|\,\beta_{R,T}(d)\!=\!\max\!\left\{ 0,\,R\!\cdot\!(d-T)\right\} \!,\;R,T\!\in\!\mathbb{R}_{\infty}^{+}\right.\!\right\} \!,
\]
where $R$ denotes the minimum service rate and $T$ is the maximum
latency. Networks where curves are exclusively from the sets $\mathcal{F}_{\text{TB}}$
and $\mathcal{F}_{\text{RL}}$ have been in the focus of recent advances
in DNC~\cite{Bouillard2015270,BS15}. 

Given these bounding curves, the maximum delay experienced by a flow
when crossing a server is simply computed by the horizontal deviation
between arrival curve~$\alpha$ and service curve~$\beta$: 
\[
\sup_{d\geq0}\left\{ \inf\left\{ \tau\geq0\,:\,\alpha(d)-\beta(d+\tau)\leq0\right\} \right\} .
\]

\subsection{The Network Model}

\label{subsub:Server-Graph}

\subsubsection{Arbitrary Multiplexing}

Consistent with our worst-case perspective, we make no assumption
on the ordering between flows when they are multiplexed at servers.
That is we employ the so-called arbitrary multiplexing when computing
left-over service for a flow of interest at a shared server. Certainly,
if we could, for instance, assume FIFO multiplexing, then the performance
bounds could be improved. However, in applications requiring worst-case
guarantees such as, e.g., for the certification in avionics, a FIFO
assumption may already be considered too optimistic. In fact, many
network switches use highly optimized internal switching fabrics that
can lead to reordering in order to avoid head-of-line blocking~\cite{karol1987input}.
Overall, by using arbitrary multiplexing we play safe with respect
to the worst-case.

\subsubsection{The Feed-forward Property}

A second assumption of current DNC analyses for networks is the absence
of cyclic dependencies between flows. Work departing from this assumption
can be found in~\cite{Schioler_NCcyclDependencies,Thiele_CyclDependPerfAnalysis}
but is not covered by this article. Therefore, we focus on networks
that guarantee the feed-forward property by design. I.e., the analyzed
network does not allow for cycles. In Appendix~\ref{sec:GenerateTestNetworks},
we present a means to generate feed-forward networks for evaluation
of DNC analyses.

\section{Optimization-based DNC}

\label{sec:OptimizationNC}In this section, we perform a first cost
evaluation of optDNC in larger feed-forward networks. Bounding worst-case
delays tightly is a NP-hard problem~\cite{BouillardINFOCOM2010}.
Therefore, we examine optDNC's LP analysis that derives these tight
bounds and provide deeper insights into the computational cost of
optDNC's less accurate heuristic, the ULP. We compare our results
to the literature's algebraic analysis SFA, as used in recent work
of~\cite{BouillardINFOCOM2010,BouillardHDR}. The comparison reveals
that that DNC currently requires to choose between a computationally
barely feasible, yet accurate optimization analysis and a feasible,
but inaccurate algebraic analysis.

\subsection{The Tight LP Analysis}

\label{subsec:LPcombinatorialExplosion}OptDNC's LP analysis~\cite{BouillardINFOCOM2010}
takes the following steps to derive a given flow of interest's (foi's)
tight delay bound:
\begin{enumerate}
\item Starting at the foi's sink server, \emph{backtrack} all flows in order
to derive the dependencies between starts of backlogged periods of
servers. For simple tandems of servers, this step results in a total
order. However, in more general feed-forward networks, it derives
a partial order.
\item The partial order is extended to the set of all \emph{compatible total
orders}. This step enumerates all relations~of servers' backlogged
period beginnings. It is subject to restrictions caused by flows that
rejoin again after demultiplexing.
\item Each total order is converted to one linear program that also includes
the network description's constraints such as arrival curves and service
curves~(Section~\ref{subsec:System-Description}). The \emph{maximum
of all their solutions} constitutes the tight delay.
\end{enumerate}
The third step shows that the LP is an all-or-nothing analysis where
we cannot judge validity of a delay bound before computing all of
them. Unfortunately, step~2 is prone to a combinatorial explosion
that constitutes the underlying reason for this DNC analysis' \mbox{(possibly
super-)}exponential growth in effort. To illustrate this problem,
let us briefly discuss on the number of linear programs (LPs) that
have to be solved in a sink-tree network: In the best case, we have
a pure tandem network of $n$ servers and then a single LP results;
in the worst case, we have a so-called fat tree with one root node
and $n-1$ leaf nodes directly connected to it, resulting in $(n-1)!$
LPs. In a full binary tree, the number of LPs is lower bounded by
$\Omega\left(\left(\frac{n}{2}\right)!\right)$ \cite{Rus03}. In
general, calculating the number of total orders being compatible with
a given partial order is itself not a simple problem. One solution
is the Varol-Rotem algorithm~\cite{VR81}; we implemented this algorithm
to provide some numbers for the case of full $k$-ary trees (Table~\ref{tab:LinearExtensionsTrees}).
It is obvious that the computational cost to solve such large numbers
of linear programs becomes prohibitive quickly, even for moderately
sized networks.
\begin{table}[H]
\centering{}\caption{\label{tab:LinearExtensionsTrees}Number of LPs to solve for full
$k$-ary trees of moderate size.}
\begin{tabular}{rr|c|c|c|c|c}
 & \multicolumn{1}{r}{} & \multicolumn{4}{c}{Height $h$} & \tabularnewline
 &  & \multicolumn{1}{c}{0} & \multicolumn{1}{c}{1} & \multicolumn{1}{c}{2} & 3 & \tabularnewline
\cline{2-6} 
 & 1 & 1 & 1 & $1$ & $1$ & \rule{0pt}{3.25mm}\tabularnewline
\cline{3-6} 
Outdegree $k$ & 2 & 1 & 2 & $80$ & $21,964,800$ & \rule{0pt}{3.25mm}\tabularnewline
\cline{3-6} 
 & 3 & 1 & 6 & $7,484,400$ & $3.54\cdot10^{37}$ & \rule{0pt}{3.25mm}\tabularnewline
\cline{3-6} 
 & 4 & 1 & 24 & $3.89\cdot10^{15}$ & $1.12\cdot10^{110}$ & \rule{0pt}{3.25mm}\tabularnewline
\cline{2-6} 
\end{tabular}
\end{table}

\subsection{The Accurate, but Costly ULP Heuristic}

\label{subsec:optDNC} Based on the LP's optimization formulation
and the analysis design it was accompanied with, a heuristic called
Unique LP (ULP) was proposed \cite{BouillardINFOCOM2010}. It circumvents
the combinatorial explosion by skipping step~2 from above. I.e.,
it always derives a single linear program that is directly based on
the partial order from step~1. Thus, the ULP does not attain tight
delays but it was shown to stay very close to these in a (small) sample
network. Its computational cost in terms of analysis run time~\cite{FinitaryRTC,LBS16}
in larger, more general feed-forward networks had not been investigated
yet.

\subsubsection{Inaccuracy due to Paying Segregation more than Once}

\label{subsec:PSOOvilation_generic}The ULP models multiplexing of
flows with joint backlogged periods and has access to global knowledge
in the optimization step. Yet, recent work~\cite{BS16-1} shows how
skipping step~2 reduces the accuracy of bounds attained by the ULP.
In the following situation, these two features of optDNC are only
utilized by the LP analysis: Assume flows multiplex at a server and
demultiplex after it, then they take different paths and multiplex
later at a different server again. In such a network, the analysis
should strive for the Pay Segregation Only Once (PSOO) principle~\textendash{}
separated paths should not result in segregated flow analysis and
overly pessimistic service allocation to flows. Yet, the ULP cannot
capture the (global) interdependency between the former and the latter
server where the flows jointly cause a backlogged period. Instead,
it indeed derives an overly-pessimistic worst-case result by violating
the PSOO principle. This PSOO violation, however, occurs in all known
DNC heuristics, optimization and algebraic analysis.

\subsubsection{Computational Cost Evaluation of the ULP}

We implemented the ULP to measure its analysis execution time~\textendash{}
deriving the partial order, the linear program and solving it~\textendash{}
for our set of feed-forward networks (Appendix~\ref{subsec:SetNetworks}).
For the optimization, we employ two different solvers: the open-source
LpSolve~5.5.2.0 and IBM~CPLEX 12.6.2. All computations were executed
on a physical machine equipped with two Intel Xeon E5420 server CPUs
(4~physical cores each) running with a clock speed of 2.5GHz and
a total of 12GB~RAM.
\begin{figure}
\begin{centering}
\subfloat[\label{fig:Comp-SFA-ULP-ExecTime}Scaling of network analysis time.]{\begin{centering}
\includegraphics[width=0.45\textwidth]{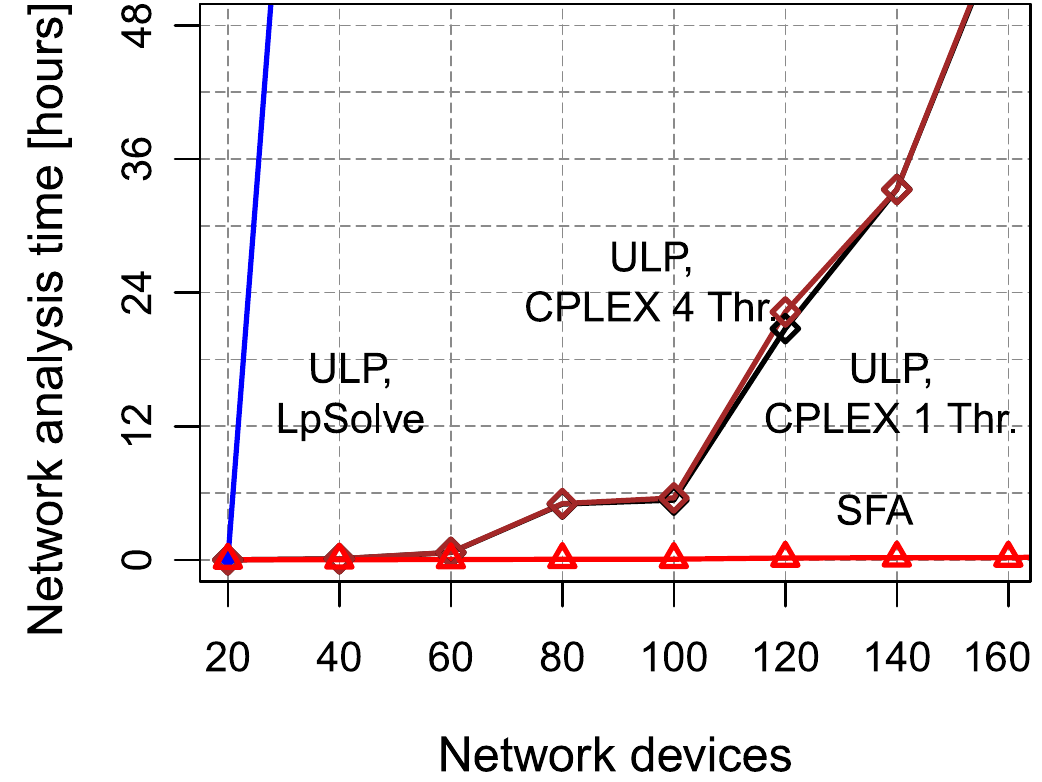}
\par\end{centering}
}\hspace*{7.5mm}\subfloat[\label{fig:ULP-Share-DiscoDNC-CPLEX}Share of CPLEX execution time.]{\begin{centering}
\includegraphics[width=0.45\textwidth]{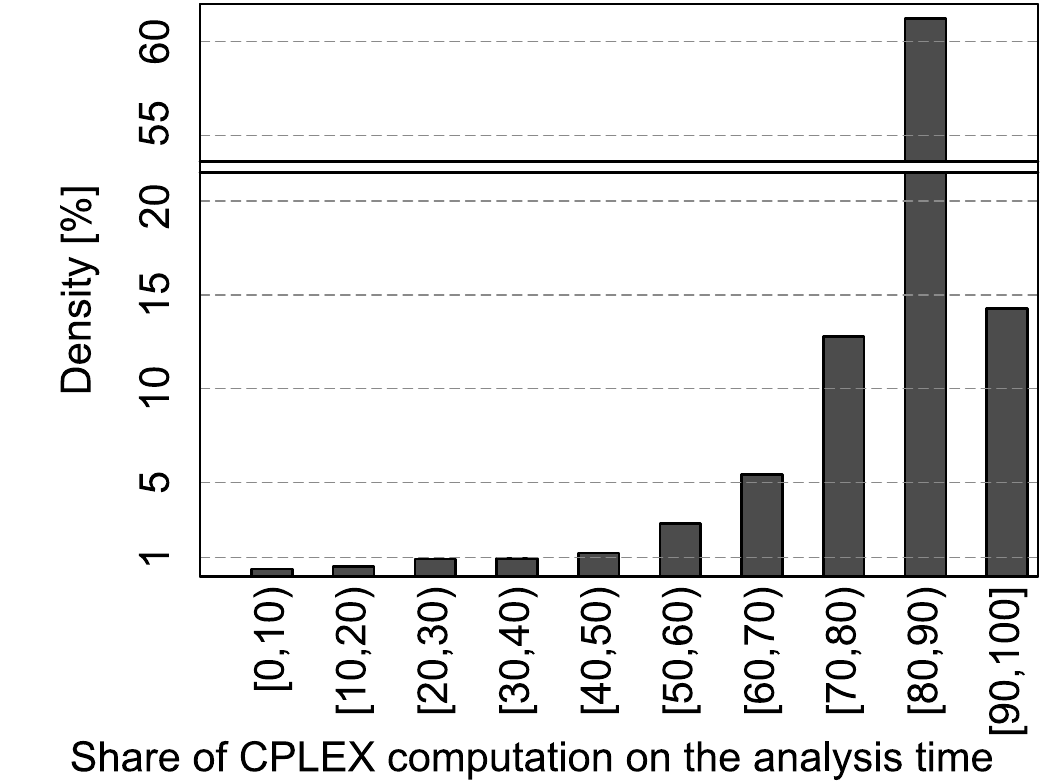}
\par\end{centering}
}
\par\end{centering}
\centering{}\caption{\label{fig:Comparison_SFA_ULP}Analysis cost (run time of a network
analysis).}
\end{figure}

Figure~\ref{fig:Comp-SFA-ULP-ExecTime} shows our results and compares
them to algDNC's SFA (see Section~\ref{subsec:Tandem-Analyses},~\cite{BouillardHDR}),
the contender used in~\cite{BouillardINFOCOM2010}. These results
defeat the hope in the ULP. The choice of tooling becomes crucial
quickly as LpSolve already struggles with $40$ devices, with regard
to the computation time as well as derivations of results. IBM~CPLEX
performs considerably better, yet, the ULP analysis time still increases
very fast with the network size. This trend is decisively negative
as, among the three steps, optimization with CPLEX takes on average
$>80$\% of the analysis time (cf. Figure~\ref{fig:ULP-Share-DiscoDNC-CPLEX}).
For instance, the ULP analysis already requires $13$ days at less
than \textbf{$200$} devices. Moreover, applying it in a design space
exploration for a network with as little as $80$ devices does not
seem feasible either as the analysis time of a single alternative
exceeds $5$ hours (cf. Figure~\ref{fig:Comp-SFA-ULP-ExecTime}).
The SFA, on the other hand, finishes analyzing these networks in just
a small fraction of these times.

\section{Quality of Algebraic DNC: Problems and Prospects}

\label{sec:Analysis_Framework}Algebraic DNC (algDNC) derives a (min-plus
algebraic) equation that computes the delay bound for the flow of
interest. We show that the algDNC analyses currently do not derive
the best equation possible and we provide the theoretical foundation
to solve this issue. In consequence, the best algDNC delay bound can
now be obtained.

\subsection{(min,+)-Algebraic Calculus}

\label{subsec:Network-Calculus-Analysis}First, we provide the basic
results we rely on in algDNC and thus in our accurate and fast analysis
design.

\subsubsection{Basic Operations}

AlgDNC's equations consist of (min,+)-operations \cite{LeBoudec_NCbook,Chang_NCbook}
that are computationally attractive. E.g., for the above curves in
$\mathcal{F}_{\text{TB}}$ and $\mathcal{F}_{\text{RL}}$, they can
be implemented in $\mathcal{O}(1)$~\cite{Bouillard_JDEDS}.
\begin{defn}[\emph{$(\min,+)$-Operations}]
\label{def:MinPlusOperations-1}\emph{The $(\min,+)$-}algebraic
operations for two network calculus curves $f,g\in\mathcal{F}_{0}$
are:
\begin{itemize}
\item \emph{Aggregation: $\left(f+g\right)\!(t)=f(t)+g(t)$}
\item \emph{Convolution: $\left(f\otimes g\right)\!(t)=\inf_{0\le u\le t}\left\{ f(t-u)+g(u)\right\} $}
\item \emph{Deconvolution: $\left(f\oslash g\right)(d)=\sup_{u\geq0}\left\{ f(d+u)-g(u)\right\} $}
\end{itemize}
\end{defn}

The service curve definition then translates to $A'\geq A\otimes\beta$,
the arrival curve definition to $A\otimes\alpha\ge A$, and performance
bounds can be derived using the deconvolution operation:
\begin{thm}[Performance Bounds]
\emph{\label{thm:Performance-Bounds}}Consider a server $s$ that
offers a service curve $\beta$. Assume flow $f$ with arrival curve
$\alpha$ traverses $s$. We get the following bounds for $f$:
\begin{itemize}
\item Delay: $\forall t\in\mathbb{R}^{+}\hspace{-0.3mm}:\;D\left(t\right)\leq\inf\left\{ d\geq0\,|\,\left(\alpha\oslash\beta\right)(-d)\leq0\right\} $
\item Output: $\forall d\in\mathbb{R}^{+}\hspace{-0.3mm}:\hspace{0.8mm}\alpha'\hspace{-0.3mm}\hspace{-0.4mm}\left(d\right)\!=\hspace{0.25mm}\left(\alpha\oslash\beta\right)\left(d\right)$,
where $\alpha'$ is an arrival curve\footnote{Note, that arrival curves need to pass through the origin. In a slight
abuse of notation, we use the symbol $\oslash$ for both, deconvolution
and output bounding.} for $A'$.
\end{itemize}
\end{thm}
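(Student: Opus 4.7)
The plan is to prove both performance bounds by combining the service curve inequality $A' \geq A \otimes \beta$, the arrival curve property $A(t) - A(t-s) \leq \alpha(s)$, and the implicit data conservation property $A' \leq A$ (which follows from the fact that a server cannot emit more data than it receives; it is standard in DNC to assume this). The two proofs are structurally dual: one uses the service curve to lower bound $A'$ into the future, the other combines a lower bound on $A'$ at an earlier time with an upper bound at a later time to produce a difference.

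For the delay bound, I would fix any $d \geq 0$ for which $(\alpha \oslash \beta)(-d) = \sup_{u \geq 0}\{\alpha(u-d) - \beta(u)\} \leq 0$, which means $\beta(u) \geq \alpha(u-d)$ for every $u \geq 0$. The goal is to show $A(t) \leq A'(t+d)$, since this directly yields $D(t) \leq d$ by Definition~\ref{def:Backlog-Delay-Generic}, and then taking the infimum over such $d$ concludes. Using the service curve definition,
\[
A'(t+d) \;\geq\; \inf_{0 \leq s \leq t+d}\bigl\{A(t+d-s) + \beta(s)\bigr\}.
\]
For each $s$ in this range I would argue $A(t+d-s) + \beta(s) \geq A(t)$: when $s \leq d$, monotonicity of $A$ and non-negativity of $\beta$ give it directly; when $s > d$, substituting $\beta(s) \geq \alpha(s-d)$ and then applying the arrival curve property $A(t) - A(t-(s-d)) \leq \alpha(s-d)$ (i.e., $A(t+d-s) + \alpha(s-d) \geq A(t)$) yields the same conclusion. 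Taking the infimum over $s$ finishes this part.

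For the output bound, I would bound $A'(t) - A'(t-d)$ directly. Using data conservation $A'(t) \leq A(t)$ as an upper bound and the service curve as a lower bound on $A'(t-d)$,
\[
A'(t) - A'(t-d) \;\leq\; A(t) \;-\; \inf_{0 \leq s \leq t-d}\bigl\{A(t-d-s) + \beta(s)\bigr\} \;=\; \sup_{0 \leq s \leq t-d}\bigl\{A(t) - A(t-d-s) - \beta(s)\bigr\}.
\]
The arrival curve property gives $A(t) - A(t-d-s) \leq \alpha(d+s)$, so the right-hand side is bounded by $\sup_{s \geq 0}\{\alpha(d+s) - \beta(s)\} = (\alpha \oslash \beta)(d)$, which is exactly the claimed $\alpha'(d)$. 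A brief justification that $\alpha'(0) = 0$ and that $\alpha'$ is wide-sense increasing (hence lies in $\mathcal{F}_0$) would complete the output statement, modulo the footnote's abuse of notation.

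The calculations themselves are routine once the right bounds are chosen; the main obstacle to presentation is the case split $s \leq d$ versus $s > d$ in the delay proof, which is needed because the assumption $\alpha(u-d) \leq \beta(u)$ only gives useful information when $u \geq d$. A clean way to avoid two cases would be to extend $\alpha$ by $\alpha(x) = 0$ for $x \leq 0$, which is the standard convention in $\mathcal{F}_0$; under that convention the single inequality $\beta(s) \geq \alpha(s-d)$ suffices for every $s \geq 0$ and the proof collapses to one line of substitution followed by an application of the arrival curve definition.
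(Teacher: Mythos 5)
Your proof is correct and is the standard argument: the paper itself states Theorem~\ref{thm:Performance-Bounds} without proof, as a classical result imported from the network calculus literature, and your derivation (service-curve lower bound plus arrival-curve property for the delay, combined with causality $A'\leq A$ for the output) is exactly the textbook proof it relies on. One tiny correction: you cannot "justify that $\alpha'(0)=0$," since $(\alpha\oslash\beta)(0)=\sup_{u\geq0}\{\alpha(u)-\beta(u)\}$ is in general strictly positive (it is the output burstiness bound); instead one simply sets $\alpha'(0)=0$ by convention, which is precisely the abuse of notation the paper's footnote flags, and your argument for $d>0$ together with the trivial case $A'(t)-A'(t)=0$ already gives everything needed.
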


\subsubsection{Tandem Analysis}

\label{subsec:Tandem-Analyses} AlgDNC focuses the delay analysis
on a specific flow of interest (foi), end-to-end on its path (a tandem
of servers).
\begin{thm}[Concatenation of Servers]
\emph{\label{thm:ConcatenationThmTandem}}Let a flow $f$ cross a
tandem of servers $\mathcal{T}=\left\langle s_{1},\ldots,s_{n}\right\rangle $
and assume that each $s_{i}$, $i\in\left\{ 1,\ldots,n\right\} $,
offers a service curve $\beta_{s_{i}}$. The overall service curve
offered to $f$ is their concatenation 
\begin{eqnarray*}
\beta_{\mathcal{T}} & = & \bigotimes_{i=1}^{n}\beta_{s_{i}}.
\end{eqnarray*}
\end{thm}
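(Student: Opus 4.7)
The plan is to proceed by induction on the number of servers $n$, with the base case $n=1$ being immediate. For the inductive step, it suffices to establish the two-server case: if a flow $f$ traverses $s_1$ then $s_2$, with input $A$, intermediate output $A'$, and final output $A''$, then I would show that $\beta_{s_1} \otimes \beta_{s_2}$ serves as a service curve between $A$ and $A''$. The concatenation for arbitrary $n$ then follows by applying the two-server result repeatedly and invoking associativity of the convolution operator.

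For the two-server step, I would first apply Definition~\ref{def:Service-Curve} at server $s_2$ to obtain $A'' \geq A' \otimes \beta_{s_2}$, and at server $s_1$ to obtain $A' \geq A \otimes \beta_{s_1}$. Substituting the second inequality into the first (and invoking monotonicity of $\otimes$ in its first argument, which follows directly from the infimum definition in Definition~\ref{def:MinPlusOperations-1}) yields
\[
A''(t) \;\geq\; \bigl( (A \otimes \beta_{s_1}) \otimes \beta_{s_2} \bigr)(t).
\]
A chase through the definitions then gives
\[
\bigl( (A \otimes \beta_{s_1}) \otimes \beta_{s_2} \bigr)(t)
\;=\; \inf_{0 \leq u \leq t} \inf_{0 \leq v \leq t-u} \bigl\{ A(t-u-v) + \beta_{s_1}(v) + \beta_{s_2}(u) \bigr\},
\]
and after the change of variables $w = u+v$ this equals $\bigl( A \otimes (\beta_{s_1} \otimes \beta_{s_2}) \bigr)(t)$, which is the associativity of $\otimes$. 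Combining these gives $A'' \geq A \otimes (\beta_{s_1} \otimes \beta_{s_2})$, so $\beta_{s_1} \otimes \beta_{s_2}$ is a valid service curve for the two-server tandem, as required.

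The main obstacle is really a bookkeeping issue rather than a conceptual one: verifying that the two nested infima can be combined into a single infimum over $w \in [0,t]$ and split again, and confirming that all curves involved remain in $\mathcal{F}_0$ so that the domains of the infima behave as expected (in particular, that $\beta_{s_1}(v) = 0$ for $v \leq 0$ is not an issue because the infima are taken only over non-negative arguments). Once associativity and monotonicity of $\otimes$ on $\mathcal{F}_0$ are in place, the induction step to arbitrary $n$ is a one-line application of the two-server result to the pair $(\bigotimes_{i=1}^{n-1} \beta_{s_i}, \, \beta_{s_n})$, yielding the claim $\beta_{\mathcal{T}} = \bigotimes_{i=1}^{n} \beta_{s_i}$.
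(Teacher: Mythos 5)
Your proof is correct: the two-server step (service-curve definition at each hop, monotonicity of $\otimes$ in its first argument, then associativity via the change of variables $w=u+v$) followed by induction on $n$ is exactly the standard argument for this classical concatenation result, which the paper itself states without proof and imports from the network-calculus literature (Le Boudec/Chang). Nothing is missing; your care about the nested infima and non-negative arguments covers the only bookkeeping subtlety.
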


\begin{thm}[Left-Over Service Curve]
\label{thm:Single-System-Left-Over-Service-Curve}Consider a server
$s$ that offers a strict service curve $\beta_{s}$. Let $s$ be
crossed by two flow aggregates $\mathbb{F}_{0}$ and $\mathbb{F}_{1}$
with arrival curves $\alpha^{\mathbb{F}_{0}}$ and $\alpha^{\mathbb{F}_{1}}$,
respectively. Then $\mathbb{F}_{1}$'s worst-case resource share
under arbitrary multiplexing at $s$, i.e., its left-over service
curve, is\emph{
\[
\beta_{s}^{\text{l.o.}\mathbb{F}_{1}}=\beta_{s}\ominus\alpha^{\mathbb{F}_{0}}
\]
}where $\left(\beta\ominus\alpha\right)(d)=\sup_{0\le u\le d}\left(\beta-\alpha\right)(u)$
denotes the non-decreasing upper closure of $\left(\beta-\alpha\right)(d)$.
\end{thm}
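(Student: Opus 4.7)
The plan is to verify that the curve $\beta_s^{\text{l.o.}\mathbb{F}_1} = \beta_s \ominus \alpha^{\mathbb{F}_0}$ satisfies the defining $(\min,+)$-convolution inequality of Definition~\ref{def:Service-Curve} for the output of $\mathbb{F}_1$ at $s$. Write $A_{\mathbb{F}_i}, A_{\mathbb{F}_i}'$ for the cumulative input/output of $\mathbb{F}_i$, and $A = A_{\mathbb{F}_0} + A_{\mathbb{F}_1}$, $A' = A_{\mathbb{F}_0}' + A_{\mathbb{F}_1}'$. Fix $t \geq 0$ and let $t_0 \leq t$ be the start of the last backlogged period of $s$ ending at or after $t$ (Definition~\ref{def:BackloggedPeriod}), so that $A(t_0) = A'(t_0)$; if $s$ is not backlogged at $t$, take $t_0 = t$, in which case $A_{\mathbb{F}_1}'(t) = A_{\mathbb{F}_1}(t)$ and the service curve inequality is trivially met with $d = 0$. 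Otherwise, set $d = t - t_0$.

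For every $u \in [0, d]$ the interval $(t_0, t_0 + u]$ lies inside the backlogged period $(t_0, t]$, so the strict-service-curve property of Definition~\ref{def:Strict-Service-Curve} yields $A'(t_0 + u) - A'(t_0) \geq \beta_s(u)$. Substituting $A'(t_0) = A(t_0)$, splitting the aggregate output and using the causality bound $A_{\mathbb{F}_0}'(t_0 + u) \leq A_{\mathbb{F}_0}(t_0 + u)$, together with Definition~\ref{def:Arrival-Curve} to bound the increment $A_{\mathbb{F}_0}(t_0 + u) - A_{\mathbb{F}_0}(t_0) \leq \alpha^{\mathbb{F}_0}(u)$, produces the pointwise lower bound $A_{\mathbb{F}_1}'(t_0 + u) \geq A_{\mathbb{F}_1}(t_0) + (\beta_s - \alpha^{\mathbb{F}_0})(u)$. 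Since $A_{\mathbb{F}_1}'$ is wide-sense increasing and $t_0 + u \leq t$, this bound is inherited by $A_{\mathbb{F}_1}'(t)$ uniformly in $u$.

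Taking the supremum over $u \in [0, d]$ -- legitimate because the right-hand side couples $u$ only to the difference curve and not to $A_{\mathbb{F}_1}(t_0) = A_{\mathbb{F}_1}(t - d)$ -- yields $A_{\mathbb{F}_1}'(t) \geq A_{\mathbb{F}_1}(t - d) + (\beta_s \ominus \alpha^{\mathbb{F}_0})(d)$. Taking the infimum over $d \in [0, t]$ only weakens the right-hand side, so the required convolutional inequality $A_{\mathbb{F}_1}'(t) \geq (A_{\mathbb{F}_1} \otimes \beta_s^{\text{l.o.}\mathbb{F}_1})(t)$ holds, and admissibility $\beta_s \ominus \alpha^{\mathbb{F}_0} \in \mathcal{F}_0$ is immediate from the sup-definition (value $0$ at the origin, wide-sense increasing by expansion of the sup range).

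The main technical obstacle is the passage to the upper closure $\ominus$. A direct application of the strict service curve only at $u = d$ would give $(\beta_s - \alpha^{\mathbb{F}_0})(d)$, which can fall strictly below $(\beta_s \ominus \alpha^{\mathbb{F}_0})(d)$ whenever the difference is non-monotone -- the canonical case being a rate-latency curve reduced by a token-bucket burst, which dips below zero before climbing again. The workaround is to run the strict-service argument at \emph{every} intermediate instant $t_0 + u$ within the backlogged period and to lift each of those pointwise bounds up to time $t$ by monotonicity of $A_{\mathbb{F}_1}'$, so that the $\sup_{u}$ can be absorbed on the right. Carefully picking $t_0$ so that $A(t_0) = A'(t_0)$ (which lets the aggregate-output split cleanly into the input-side terms) and handling the degenerate no-backlog case at $t$ round out the argument.
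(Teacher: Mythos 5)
Your proof is correct. The paper itself states Theorem~\ref{thm:Single-System-Left-Over-Service-Curve} without proof, as a known result from the arbitrary-multiplexing DNC literature, and your argument is exactly the standard one used there: take the start $t_0$ of the backlogged period containing $t$ (so the aggregate satisfies $A(t_0)=A'(t_0)$), apply the strict service curve, causality of $\mathbb{F}_0$'s output and the arrival-curve bound on $(t_0,t_0+u]$, and recover the non-decreasing closure $\ominus$ by running this at every intermediate instant $t_0+u$ and lifting to $t$ via monotonicity of $A'_{\mathbb{F}_1}$ before taking the supremum. Your handling of the non-backlogged case and of the admissibility of $\beta_s\ominus\alpha^{\mathbb{F}_0}$ is also sound, so there is nothing to correct.
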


In this article, we use an open-source implementation of these operations
\cite{BS14}. An algDNC analysis of the flow's tandem then defines
rules for their composition such that the model's worst case is retained.

\paragraph{Separate Flow Analysis (SFA)~\cite{LeBoudec_NCbook}}

The first tandem analysis of algDNC, Separate Flow Analysis (SFA),
constitutes a straight-forward application of Theorem~\ref{thm:Single-System-Left-Over-Service-Curve}
and Theorem~\ref{thm:ConcatenationThmTandem}: In order to derive
the foi's service curve on its path $\mathcal{T}=\left\langle \ldots\right\rangle $,
cross-traffic is subtracted server by server and the per-server $\beta^{\text{l.o.}}$
are (min-plus) convolved. The resulting $\beta_{\mathcal{T}}^{\text{l.o.}}$
is used to bound the foi's delay. This procedure has a disadvantage:
If cross-flows share longer \mbox{(sub-)}paths with the foi, then
multiplexing is considered more than once in the equation bounding
the foi's delay.

\paragraph{Pay Multiplexing Only Once (PMOO) Analysis~\cite{Fidler:2003:ENC:646464.691395,SZM08-1}}

PMOO is an analysis principle that tackles the SFA's problem. It suggests
to convolve the analyzed sub-tandems of servers as much as possible
before subtracting cross-traffic. We denote this computation by
\[
\beta_{\mathcal{T}}^{\text{l.o.}}=\beta_{\mathcal{T}}\ominus\left(\alpha^{\mathbb{F}_{i}},\ldots,\alpha^{\mathbb{F}_{i+j}}\right)
\]
where $\mathcal{T}$ is the analyzed tandem, $\beta_{\mathcal{T}}$
is its convolved service curve and $\left(\mathbb{F}_{i},\ldots,\mathbb{F}_{i+j}\right)$
denotes the cross-flow aggregates to subtract. For single-server tandems
$\left\langle s\right\rangle $, the derivation equals Theorem~\ref{thm:Single-System-Left-Over-Service-Curve},
i.e., $\beta_{\left\langle s\right\rangle }^{\text{l.o.}}=\beta_{\left\langle s\right\rangle }\ominus\left(\alpha^{\mathbb{F}_{i}},\ldots,\alpha^{\mathbb{F}_{i+j}}\right)=\beta_{s}\ominus\sum_{k=0}^{j}\alpha^{\mathbb{F}_{i+k}}$.
For larger tandems and more involved interference patterns, \cite{SZM08-1}~provides
the first theoretical implementation of this principle (code is in
\cite{BS14}). We use this PMOO analysis in our article. It results
in more accurate bounds than SFA in many scenarios, with some noticeable
exceptions which we detail later~\cite{Jens_ArbMuxLurch}.

\paragraph{Validity of PMOO Results}

The PMOO\textquoteright s underlying principle to convolve service
curves before subtracting the impact of cross-traffic arrivals is,
in fact, not covered by the basic rules to compose algDNC operations.
The left-over service curve operation can only be applied to \emph{strict}
service curves as only these guarantee a bound on the backlogged period.
In the worst case, this bound needs to be reached in order to assume
left-over service. Yet, the convolution of two strict service curves
does not necessarily result in a strict service curve. Correctness
of the PMOO result is guaranteed nonetheless. The algDNC literature
offers multiple justifications: On the one hand, it was proved that
the algebraic manipulations are inherently pessimistic, such that
the optimistically assumed bound on the backlogged period of simple
service curves is set off. This is shown by proving that (valid) results
of an optimization analysis are always better than PMOO results~\cite{Jens_ArbMuxLurch,KGS10-1}.
Hence, the PMOO result constitutes a valid bound as well. On the other
hand, more direct proofs arguing over input/output relations of flows
crossing a tandem of servers can be found in the literature, too~\cite{Bouillard_OptRoutingJournal,Bouillard_SCdo_dont}.

\subsection{Compositional Feed-forward Analysis: From Tandem Analyses to Network
Analyses}

\label{subsec:Feed-forward-Analyses}\label{subsec:alg_comp_DNC}\label{subsec:Problems-of-algDNC}The
algDNC tandem analyses are used in a compositional fashion to derive
delay bounds in feed-forward networks (compFFA). Conceptually, the
procedure can be split into two parts \cite{BS15}:
\begin{description}
\item [{i)}] The analysis abstracts from the feed-forward network to the
foi's path. In this part, arrivals of cross-traffic are bounded at
the locations of interference with the foi.
\item [{ii)}] A tandem analysis derives the foi's delay bound.
\end{description}
Whereas the tandem to analyze in the second part is known from the
start, the tandems in the first part need to be derived. I.e., the
network must to be decomposed into a sequence of tandems to analyze
(derive $\beta^{\text{l.o.}}$s). This decomposition depends on the
applied analysis. Tandems are then interfaced with the output bound
operation $\oslash$. We illustrate the decomposition in Figure~\ref{fig:Sample-server-topology},
a minimal network that requires both parts of the procedure. Given
the global view in this sample network, we see that SFA decomposes
each flow's paths into smallest possible tandems (i.e., individual
servers). In contrast, the PMOO analysis ``decomposes'' into
longest possible tandems to apply its left-over service curve derivation
to.

These decompositions define orders of algebraic operators in order
to attain worst-case results. However, current algDNC analyses do
not exploit all degrees of freedom to decompose a feed-forward network
into tandems. This can cause pessimism and reduce the accuracy of
results. In the following, we provide detailed insight into these
aspects. Last, note that the presented optDNC analyses are not compositional.
They directly derive the foi's delay bound.

\subsubsection*{Causes for Overly-pessimistic Results of compFFA Heuristics}

We provide an analysis of problems encountered in compFFA. There are
several situations that algebraic DNC's compositional feed-forward
analysis can only handle by overly-pessimistic worst-case approximations.
Detailed knowledge about them enables us to derive mitigation strategies
that reduce occurrences to a minimum. We present newly found phenomena
as well as known ones that lack comprehensive evaluation. All are
considered in the design of our novel algebraic analysis algorithm.
Our mitigation strategies exploit unused degrees of freedom in composing
algDNC equations. Thus, the derived worst-case bounds are less pessimistic
than the known ones but still more pessimistic than optDNC's ULP results.
In Section~\ref{sec:Evaluation}, we evaluate the respective accuracy
differences.
\begin{figure}
\begin{centering}
\subfloat[\label{fig:Sample-server-topology}Minimal network with both compFFA
parts.]{\begin{centering}
\includegraphics[width=0.45\textwidth]{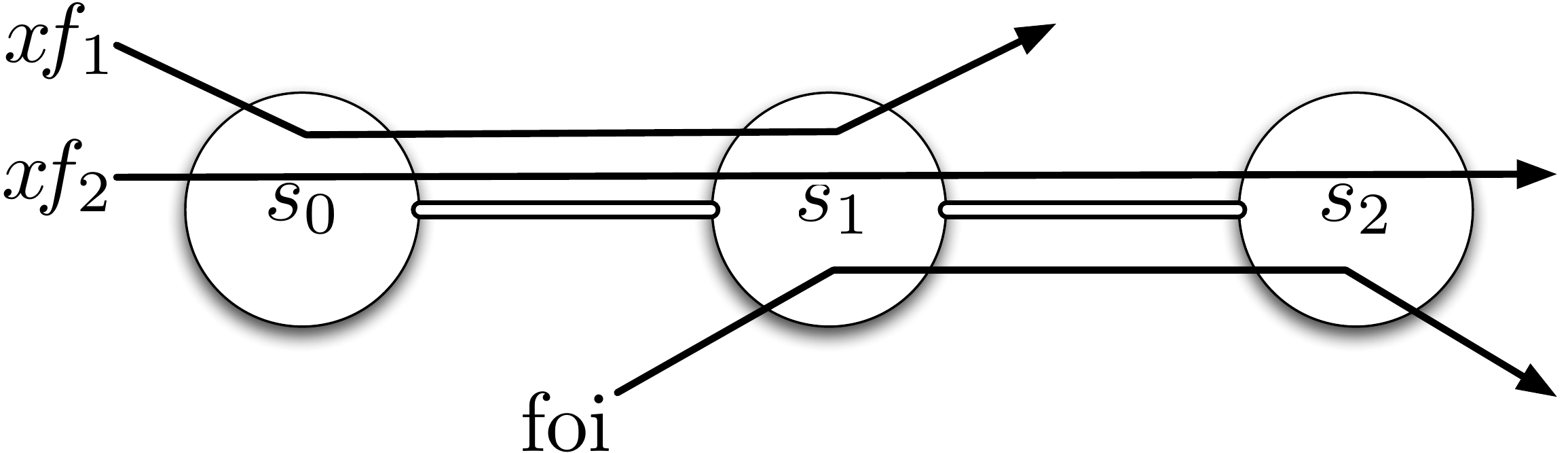}
\par\end{centering}
}
\par\end{centering}
\begin{centering}
\subfloat[\label{fig:BetaLo_PMOO}Construction of PMOO's (min,+)-equation.]{\begin{centering}
\includegraphics[width=0.45\textwidth]{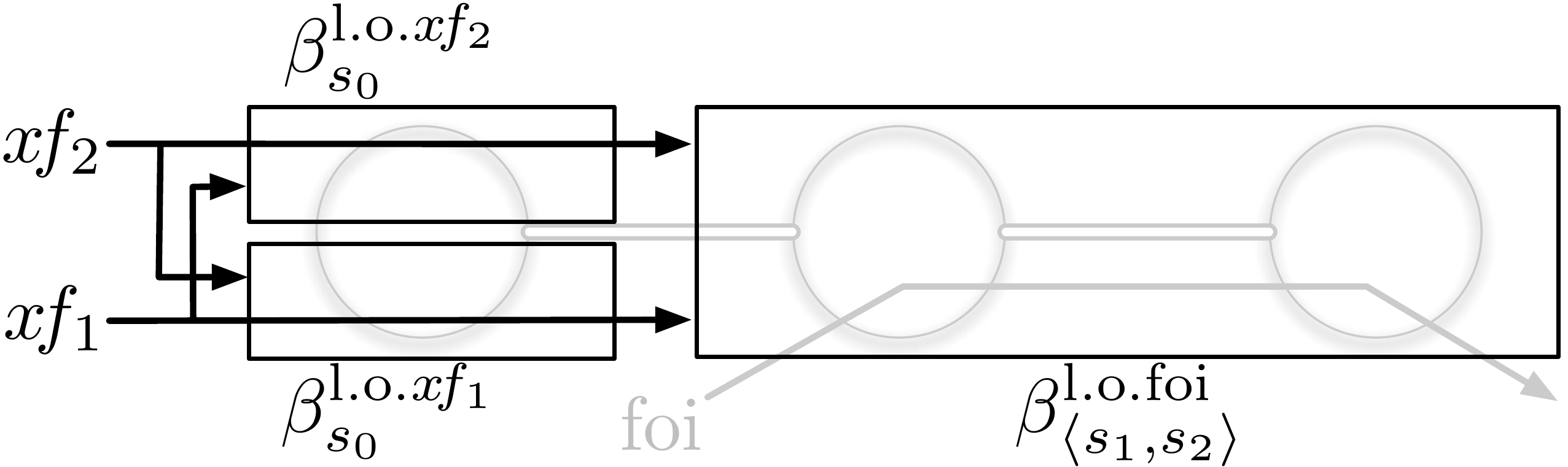}
\par\end{centering}
\centering{}}\hspace*{10mm}\subfloat[\label{fig:BetaLo_SFA}Construction of SFA's (min,+)-equation.]{\begin{centering}
\includegraphics[width=0.45\textwidth]{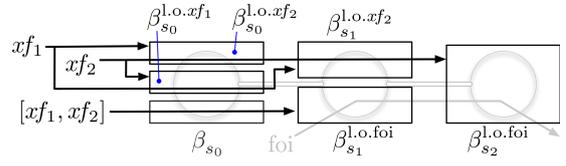}
\par\end{centering}
}
\par\end{centering}
\begin{centering}
\caption{Network with derivations of (min,+)-equations that illustrate the
restricting design of compFFA (Section~\ref{subsec:ProblemDesign-compFFA}).}
\vspace{1mm}
\par\end{centering}
\raggedright{}\emph{Interpretation:} Boxes depict tandems for $\beta^{\text{l.o.}}$
computation and arrows depict flows. Flows pointing at a box are subtracted
from a $\beta$, flows crossing a box use the respective $\beta^{\text{l.o.}}$
to bound their output from it.
\end{figure}

\subsubsection{Pay Segregation Only Once Violations in compFFA}

\label{subsec:EnforcedSegregations}Compositional feed-forward analysis
compFFA causes the same PSOO violations as presented in Section~\ref{subsec:PSOOvilation_generic}.
When backtracking flows, the left-over service curve operation computes
a worst-case service share for this flow. I.e., the result is computed
under local information only; interdependencies between flows that
multiplex at a server $s_{n}$, demultiplex after it, take different
paths and multiplex later at a different server $s_{m}$ again is
not considered. If two flows $f_{i}$ and $f_{j}$ are each backtracked
from $s_{m}$ to $s_{n}$, they compute their respective left-over
service curves $\beta_{s_{n}}\ominus f_{i}$ and $\beta_{s_{n}}\ominus f_{j}$
at server~$s_{n}$. Both flows simultaneously assume that the opposite
flow has a higher priority and is served first. This composition of
algebraic operations is permissible, meaning it represents a worst
case, leading to valid results. However, the total service assumed
to be available to both flows is less than the server's minimum guarantee
$\beta_{s_{n}}$. It is shown that PMOO and ULP suffer from this PSOO
violation equally, i.e., compositional PMOO analysis cannot outperform
the ULP in this basic scenario~\cite{BS16-1}.

However, in compFFA, there are additional causes for flow segregation.
These can eventually result in PSOO violations and thus reduce accuracy
of algebraically derived bounds. In pursuit of the PSOO principle,
i.e., the least pessimism attainable while preserving correctness
of worst-case bounds, we attempt to mitigate these. Aggregate bounding
of cross-flows was shown to be such a mitigation strategy~\cite{BS15-2}.
Yet, the PMOO's $\beta_{\mathcal{T}}^{\text{l.o.}}$-derivation enforces
a segregation of cross-flows that interfere with the foi on different
subpaths. This is the case for $x\!f_{1}$ and $x\!f_{2}$ in Figure~\ref{fig:BetaLo_PMOO}.
We obtain 
\[
\beta_{\left\langle s_{1},s_{2}\right\rangle }^{\text{l.o.foi}}\,=\,\beta_{\left\langle s_{1},s_{2}\right\rangle }\ominus\left(\alpha_{s_{1}}^{x\!f_{1}},\alpha_{s_{1}}^{x\!f_{2}}\right)\,=\,\beta_{\left\langle s_{1},s_{2}\right\rangle }\ominus\left(\alpha^{x\!f_{1}}\oslash\beta_{s_{0}}^{\text{l.o.}x\!f_{1}},\alpha^{x\!f_{2}}\oslash\beta_{s_{0}}^{\text{l.o.}x\!f_{2}}\right).
\]
The foi sees overly pessimistic cross-traffic arrival bounds compared
to a hypothetical version that aggregates cross-traffic during arrival
bounding. AlgDNC does not provide a tandem $\beta_{\mathcal{T}}^{\text{l.o.}}$
achieving $\beta_{\left\langle s_{1},s_{2}\right\rangle }\ominus\left(\alpha_{s_{1}}^{\left[x\!f_{1},x\!f_{2}\right]}\right)=\beta_{\left\langle s_{1},s_{2}\right\rangle }\ominus\left(\left(\alpha^{x\!f_{1}}+\alpha^{x\!f_{2}}\right)\oslash\beta_{s_{0}}\right)$
in Fig.~\ref{fig:Sample-server-topology}. Other DNC tandem analyses
implementing the PMOO principle for arbitrary multiplexing servers,
namely (min,+) multi-dimensional convolution~\cite{Bouillard_OptRoutingJournal,Bouillard:2008:CMC:1536956.1537043}
and OBA~\cite{Jens_ArbMuxLurch,KGS10-1}, also require segregate
bounding of cross-traffic and thus cause the same PSOO violation. 

The SFA looses awareness of the foi's path in its single-server analysis.
Therefore, it may even segregate cross-flows on this tandem. On the
other hand, it can aggregate all cross-flows sharing a single hop
with the foi. Both situations are illustrated in Figure~\ref{fig:BetaLo_SFA}:
Deriving $\beta_{s_{1}}^{\text{l.o.foi}}$ can use the cross-traffic
aggregate $\left[x\!f_{1},x\!f_{2}\right]$. $\beta_{s_{2}}^{\text{l.o.foi}}$'s
computation enforces the a PSOO violation, indicated by the presence
of $\beta_{s_{0}}^{\text{l.o.}x\!f_{1}}$ and $\beta_{s_{0}}^{\text{l.o.}x\!f_{2}}$,
like in the PMOO equation:
\[
\beta_{\left\langle s_{1},s_{2}\right\rangle }^{\text{l.o.foi}}\,=\,\beta_{s_{1}}^{\text{l.o.foi}}\otimes\beta_{s_{2}}^{\text{l.o.foi}}\,=\,\left(\beta_{s_{1}}\ominus\left(\alpha^{\left[x\!f_{1},x\!f_{2}\right]}\oslash\beta_{s_{0}}\right)\right)\otimes\left(\beta_{s_{2}}\ominus\left(\alpha^{x\!f_{2}}\oslash\left(\beta_{s_{0}}^{\text{l.o.}x\!f_{2}}\otimes\left(\beta_{s_{1}}\ominus\left(\alpha^{x\!f_{1}}\oslash\beta_{s_{0}}^{\text{l.o.}x\!f_{1}}\right)\right)\right)\right)\right)
\]

\subsubsection{Restricting Design of compFFA}

\label{subsec:ProblemDesign-compFFA}The above compFFA's strict separation
of its two parts is also reflected in the algDNC equations. Both,
SFA of Figure~\ref{fig:BetaLo_SFA} and PMOO of Figure~\ref{fig:BetaLo_PMOO},
assume that tandems cannot reach over the foi's path on the left.
A global view on the network reveals that another decomposition is
permissible without violating worst-case assumptions. I.e., we can
match tandems onto the network in a different way:
\begin{center}
\begin{figure}[H]
\begin{centering}
\vspace{-3.5mm}
\includegraphics[width=0.5\columnwidth]{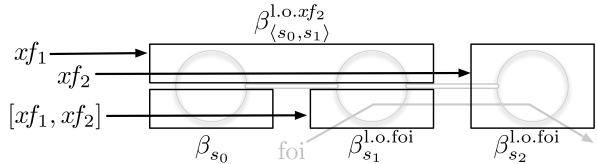}\vspace{-3mm}
\par\end{centering}
\caption{\label{fig:Fig4Alt3}Third alternative to match tandems onto Figure~\ref{fig:Sample-server-topology}.}
\vspace{-8.5mm}
\end{figure}
\par\end{center}

The according (min,+)-equation shows a reduction of enforced segregation
in an SFA-like derivation:
\[
\beta_{\left\langle s_{1},s_{2}\right\rangle }^{\text{l.o.foi}}\,=\,\beta_{s_{1}}^{\text{l.o.foi}}\otimes\beta_{s_{2}}^{\text{l.o.foi}}\,=\,\left(\beta_{s_{1}}\ominus\left(\alpha^{\left[x\!f_{1},x\!f_{2}\right]}\oslash\beta_{s_{0}}\right)\right)\otimes\Bigg(\beta_{s_{2}}\ominus\Bigg(\alpha^{x\!f_{1}}\oslash\underbrace{\left(\beta_{\left\langle s_{0},s_{1}\right\rangle }\ominus\left(\alpha^{x\!f_{2}}\right)\right)}_{\beta_{\left\langle s_{0},s_{1}\right\rangle }^{\text{l.o.}x\!f_{2}}}\Bigg)\Bigg)
\]
where $\beta_{\left\langle s_{0},s_{1}\right\rangle }^{\text{l.o.}x\!f_{2}}$
stretches over the boundaries of both compFFA parts. Thus, with global
knowledge about the network, we can identify more degrees of freedom
when decomposing the network into a sequence of tandem analyses. In
the above example, this can result in arbitrarily better delay bounds
(see Appendix~\ref{sec:Appendix-Quality-algDNC}), similar to the
results presented in~\cite{BS16,BS16-1}. Yet, in Section~\ref{sec:Evaluation}
we evaluate our finding in a broader scope than a selected extreme
case. Moreover, we add a cost evaluation as our finding demands additional
analysis effort to be exploited.
\begin{figure}[b]
\begin{centering}
\subfloat[\label{fig:Sample-server-topology-1}Minimal network with decomposition
effects.]{\begin{centering}
\includegraphics[width=0.45\textwidth]{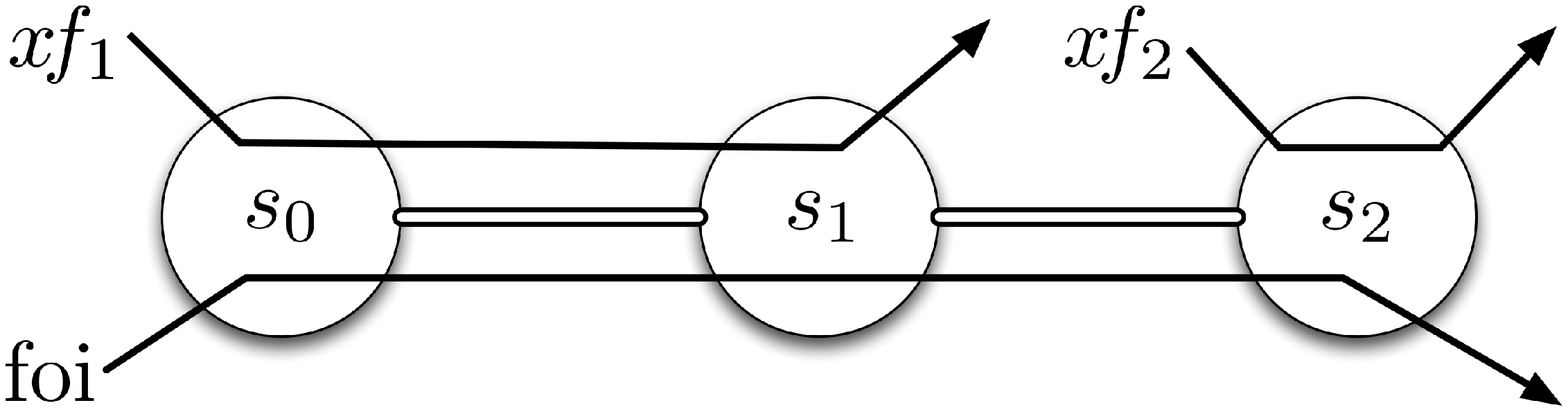}
\par\end{centering}
}
\par\end{centering}
\begin{centering}
\subfloat[\label{fig:BetaLo_PMOO-1}Construction of PMOO's (min,+)-equation.]{\begin{centering}
\includegraphics[width=0.45\textwidth]{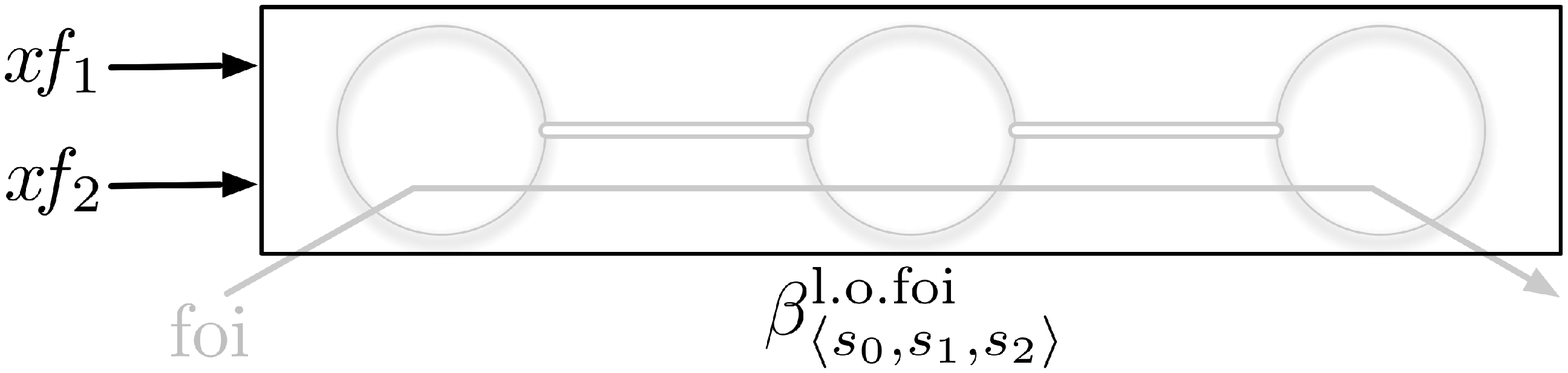}
\par\end{centering}
\centering{}}\hspace*{10mm}\subfloat[\label{fig:BetaLo_SFA-1}Construction of SFA's (min,+)-equation.]{\begin{centering}
\includegraphics[width=0.45\textwidth]{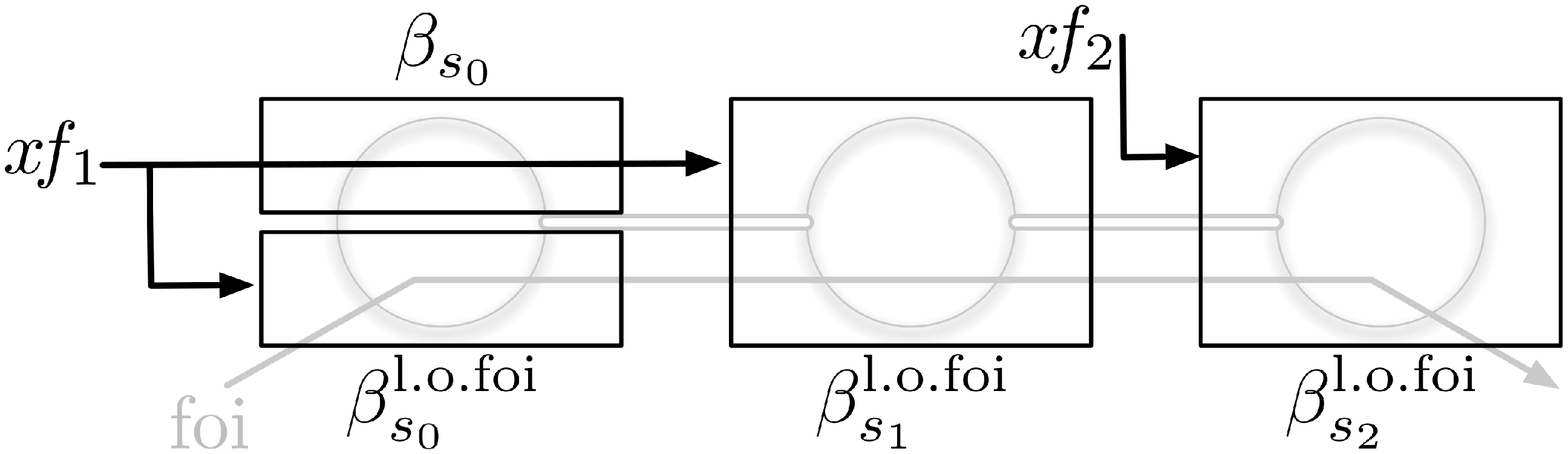}
\par\end{centering}
}
\par\end{centering}
\raggedright{}\caption{Network and derivations of (min,+)-equations illustrating the restricting
design of tandem analyses (Section~\ref{subsec:ProblemDesign-TandemA}).}
\end{figure}

\subsubsection{Restricting Design of Tandem Analyses}

\label{subsec:ProblemDesign-TandemA}Currently, decomposition of the
network is defined by the tandem analysis to be used; it is independent
of the compFFA procedure. We illustrate this on another small sample
network (Figure~\ref{fig:Sample-server-topology-1}). As already
mentioned, PMOO decomposes into longest possible tandems (see Figure~\ref{fig:BetaLo_PMOO-1})
and SFA decomposes into smallest possible tandems (see Figure~\ref{fig:BetaLo_SFA-1}).
Note, that there are no segregation effects in these equations for
tandem analysis. In~\cite{Jens_ArbMuxLurch}, it was shown that the
SFA can outperform the PMOO if the foi's left-over service strongly
differs between servers. The authors designed the first optDNC optimization
formulation to overcome this problem~\textendash{} a tandem analysis
that integrates with the compFFA \textendash{} but its cost renders
an application in feed-forward networks infeasible~\cite{KGS10-1}.
Therefore, we propose to stay with algDNC but decouple the decomposition
of a network into tandems from their analysis. We allow for the following
superior derivation that can exploit the PMOO principle at the start
of the foi's path as well as a fast left-over service curve $\beta_{s_{2}}^{\text{l.o.foi}}$
at the end:
\begin{center}
\includegraphics[width=0.5\columnwidth]{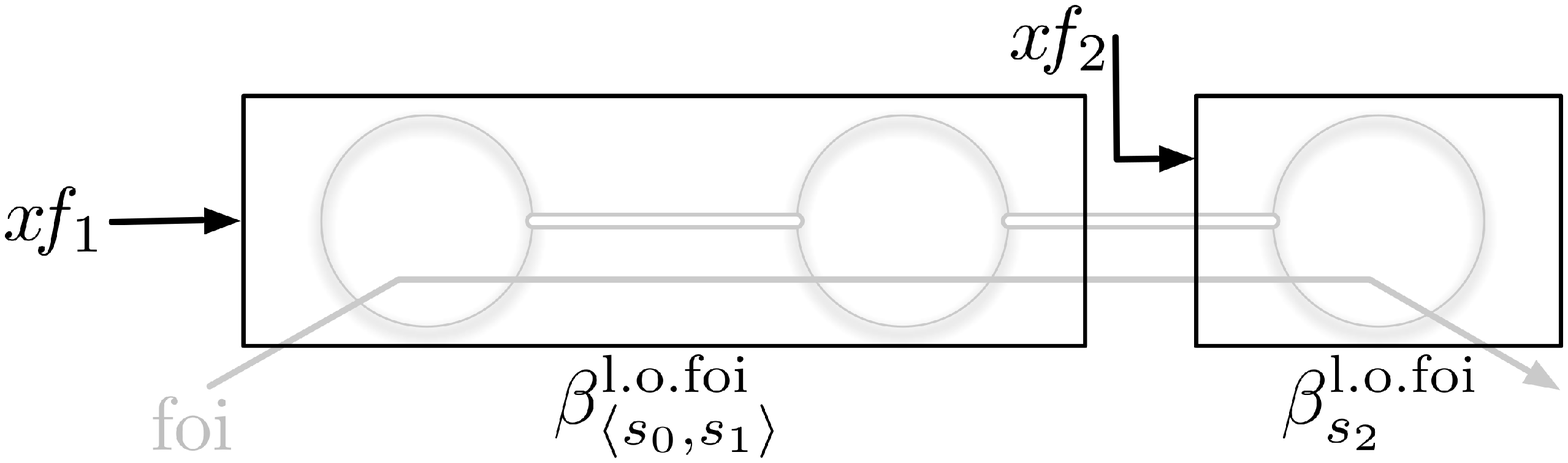}
\par\end{center}

\subsubsection{Decomposition on Insufficient Information}

\label{subsec:DecomposeInsufficientInfo}We presented the previous
two insights to already reveal countermeasures to the problems of
algDNC. However, it is not possible to derive the best tandem decomposition
from the network model in a static fashion. Neither of the alternatives
is strictly superior to the others, i.e., all alternatives need to
be assessed based on their actual cross-traffic arrival bounds and
left-over service curves. Depending on the analyzed tandem's location
in the feed-forward network, the required preceding computations can
already be very costly. Starting with a single decomposition (e.g.,
SFA or PMOO) and eventually recognizing that the analysis can be improved,
requires additional feed-forward analyses for the decomposition alternatives.

\subsubsection{Overly-pessimistic Cross-traffic Burstiness~\cite{BS16}}

\label{subsec:PessimisticBurstsMMB}A recent improvement to algDNC
was proposed in~\cite{BS16}. This work identifies a problem that
occurs when the burstiness of a flow is bounded after crossing a server,
i.e., when applying Theorem~\ref{thm:Performance-Bounds}, Output,
to compute the output bound's relevant part $\alpha'(0)=\left(\alpha\oslash\beta^{\text{l.o.}}\right)\!(0)$.
The computation relies on the deconvolution of the analyzed flow's
arrival curve $\alpha$ with the crossed server's left-over service
curve for it, $\beta^{\text{l.o.}}$. This deconvolution-based computation
is ignorant about the server's queue. It was shown that this can lead
to overly-pessimistic bounds on the worst-case burstiness. During
compFFA's step i), this inaccuracy propagates through the cross-traffic
arrival bounding and eventually leads to overly-pessimistic flow delay
bounds. In~\cite{BS16}, it is propose to derive the worst-case queue
length at a server and use this information to cap the output burstiness~$\alpha'(0)$.
This finding suggests that neither solution to compute bounds on output
burstiness is tight. However, it was only theoretically evaluated
in an artificial network designed to provoke the burst cap mechanism.
An evaluation in realistic networks, benchmarks against optDNC and
an evaluation of the additional effort are lacking.

\subsection{Derivation of all Permissible (min,+)-Equations}

\label{subsec:ExhaustiveDerivation123}Given the new insights about
compositional algDNC analysis we derived in this section, we propose
to exhaustively derive all permissible (min,+)-equations. An equation
is permissible if the result constitutes a valid worst-case delay
bound for the foi. In theory, we execute the following three steps
that somewhat resemble the optDNC feed-forward analysis:
\begin{enumerate}
\item \emph{Backtracking}: Starting at the foi's sink server, this step
derives the dependencies of flows on each other. The backtracking
progresses along the paths taken by flow aggregates in order to mitigate
PSOO violations as much as possible.
\item \emph{(min,+)-equations}: The previous step's backtracking considers
flow entanglements but still leaves the degrees of freedom presented
in Section~\ref{subsec:Feed-forward-Analyses}. We enumerate all
alternatives to match tandems onto the intermediate network representation
from step~1. Tandems are analyzed with the PMOO and interfaced with
the burst-capped output bound. Thus, we derive all permissible (min,+)-equations
that bound the foi's delay in the given feed-forward network.
\item \emph{Finding the best solution}: All equations are permissible and
derive a valid delay bound. The minimum of all solutions is the most
accurate valid delay bound.
\end{enumerate}
Note, that the SFA and the PMOO analysis are algDNC heuristics that
reduce the computational cost of steps~2 and~3 by reducing the search
space to a single tandem matching / (min,+)-equation~\textendash{}
just like the ULP only derives a single linear program. Therefore,
our exhaustive decomposition scheme is guaranteed to be at least as
good as the SFA and the PMOO analysis.

Whereas optDNC's LP analysis searched for the worst delay of any linear
program, we search for the best delay bound of any algebraic (min,+)-equation.
Our search space, the set of permissible equations bounding the foi's
delay, is constructed subject to the result of the backtracking step
and the composition rules for algDNC operators. Therefore, our search
follows the objective to find the permissible algDNC (min,+)-equation
that
\begin{itemize}
\item maximizes aggregation of cross-flows~\cite{BS15-2},
\item minimizes the impact of enforced segregation,
\item maximizes benefits of long tandems,
\item maximizes gain from fast local left-over service curves, and
\item caps overly pessimistic cross-traffic burstiness~\cite{BS16}.\vspace{2mm}
\end{itemize}
As our exhaustive procedure's design somewhat resembles the optDNC's
LP, an obvious concern is its computational cost. As a matter of fact,
the amount of decompositions and the involved operations increase
with the network size, reaching $10^{9}$ fast. Figure~\ref{fig:Combinatorial-compFFA}
illustrates the combinatorial explosion on a small network with just
$32$ devices. Routing $100$ flows over this network, the worst-case
amount of permissible (min,+)-equations of a single flow reaches close
to $\text{10}^{20}$, and for $300$ flow, more than $\text{10}^{100}$
equations exist for some flows. Next, we design an efficient algorithm
that allows to attain the same delay bound quality as this total enumeration
approach, yet, it imposes only a tiny fraction of the cost.
\begin{figure}[H]
\begin{centering}
\includegraphics[width=0.55\textwidth]{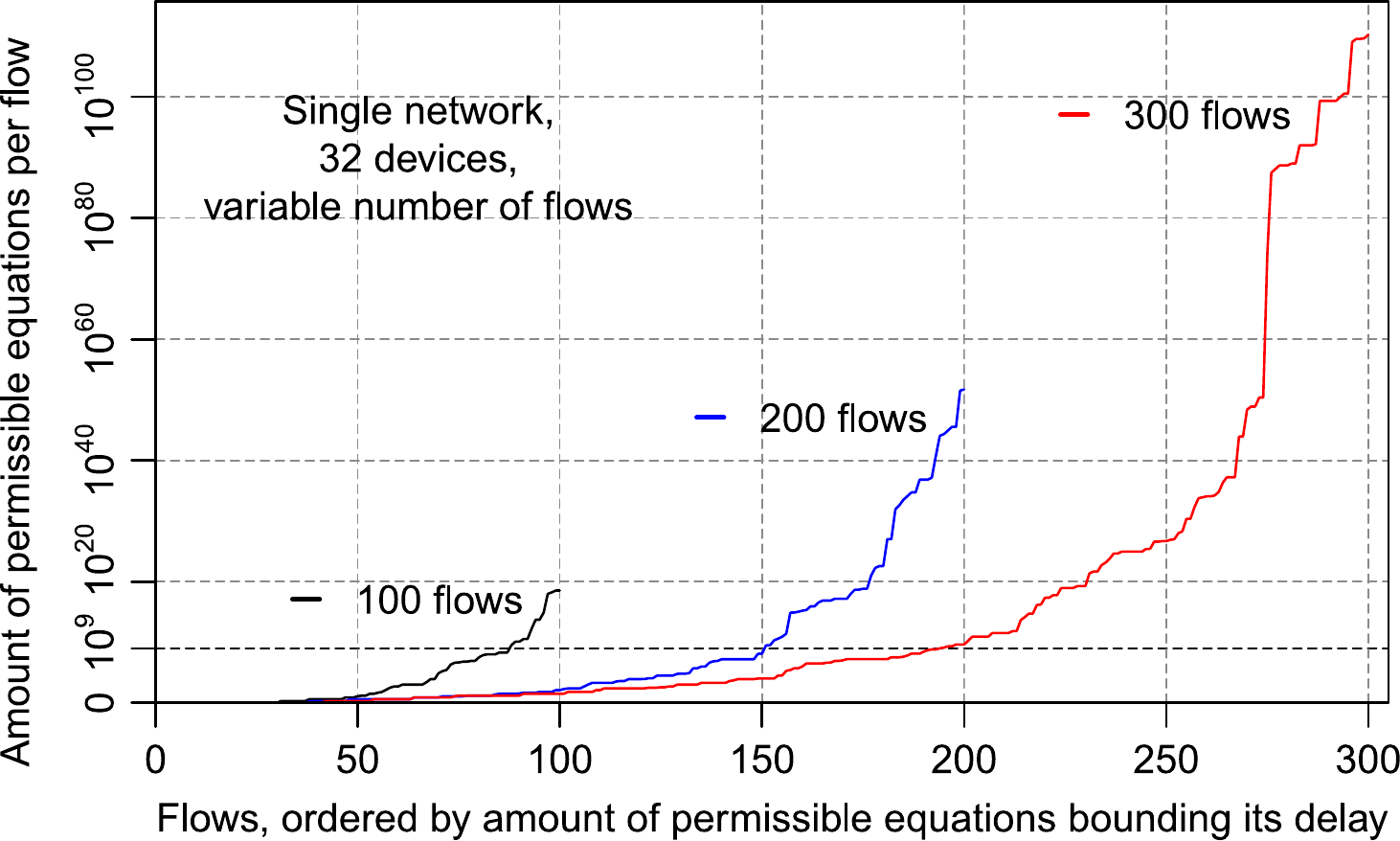}
\par\end{centering}
\caption{\label{fig:Combinatorial-compFFA}Combinatorial explosion in exhaustive
enumeration of all permissible (min,+)-equations with the compFFA
procedure.}
\end{figure}

\section{Exhaustiveness \& Efficiency \textendash{} A Novel Algorithm Design
for DNC}

\label{sec:algDNCwithOpt}In this section, we provide an algorithm
that efficiently computes the best delay bound attainable with the
exhaustive set of permissible (min,+)-equations. In Section~\ref{sec:Evaluation},
we will benchmark it against optDNC.

\subsection{Integrated Design}

We depart from separately deriving global knowledge and the permissible
(min,+)-equations. Instead, we unify steps~1 and~2 of Section~\ref{subsec:ExhaustiveDerivation123}'s
compFFA into a single one Step~1, the backtracking, was designed
to derive information of the entanglement of flow aggregates and step~2
used this information derive a permissible (min,+)-equation based
on the composition rules for (min,+)-operations. In this Section,
we show that it is possible to unify both steps, yet, still consider
all permissible equations and derive the best valid result.

We aim to derive all decompositions of a feed-forward network in a
single step. The exhaustiveness we propose is defined by a multitude
of alternatives that were previously considered in step~2. In order
to allow our algorithm to attain the same set of permissible equations,
we need to extend the new backtracking to consider alternatives as
well. The solution is simple yet powerful. The analysis starts with
a single tandem (the foi's path) and recursively backtracks cross-traffic
flow aggregates over the tandems they jointly traverse until reaching
the tandem currently analyzed. Given any tandem in this procedure,
the integrated analysis decomposes it into all disjoint sequences
of sub-tandems by deciding whether to decompose (``cut'') the tandem
at a link or not. On sub-tandems, the left-over service curve is computed
and all sub-tandem ones are convolved into the tandem one for a specific
decomposition alternative. In order to derive left-over service curves,
the backtracking needs to be started each of sub-tandem, exhaustively
working on decompositions again. The exhaustive derivation all sub-tandem
decompositions mitigates the causes for compFFA's pessimism:
\begin{description}
\item [{Section~\ref{subsec:EnforcedSegregations}}] Compositions that
align with sub-path sharing of cross-flows are tested. They trade
algDNC's need for segregated bounding against composite left-over
derivation.
\item [{Section~\ref{subsec:ProblemDesign-compFFA}}] Small sub-tandems
on the foi's path are combined with large (sub-)tandems for cross-traffic
arrival bounding.
\item [{Section~\ref{subsec:ProblemDesign-TandemA}}] Neglected decompositions
for the foi's are included.
\item [{Section~\ref{subsec:DecomposeInsufficientInfo}}] Sufficiently
many decompositions are tested to find the best one for any tandem
in the analysis.
\item [{Section~\ref{subsec:PessimisticBurstsMMB}}] The output of each
tandem can be easily capped with the backlog bound of its last server.
\end{description}
In fact, as the backtracking simultaneously considers the entanglement
of flows (backtracking of flow aggregates) and the composition rules
of (min,+)-operations, it derives the same set of permissible equations
as presented in Section~\ref{subsec:ExhaustiveDerivation123}.

\subsection{Efficiency Concerns and Improvements}

The exhaustiveness of our proposed procedure naturally raises concerns
about its computational costs. A tandem of length $n$ can be decomposed
into $2^{n-1}$ distinct sub-tandem sequences. All sub-tandem sequence
of two adjacent levels of the backtracking recursion need to be combined
in order to find the best attainable algDNC result. This consideration
reveals a combinatorial explosion resulting in the large amount of
permissible equations presented in Figure~\ref{fig:Combinatorial-compFFA}.
In detail, a large search tree (Figure~\ref{fig:Search-tree-Sample})
is constructed where branching corresponds to the sub-tandem decomposition
that, in turn, requires bounding of cross-traffic arriving on a tandem
to cut. Each path through this search tree derives a single permissible
(min,+)-equation. With our integrated design, we aim for an efficient
solution to fully cover the search tree and achieve the objective
to find the permissible equation for most accurate results. In algDNC,
we can compute intermediate results by solving already derived parts
of permissible (min,+)-equations. Their results can be used to cut
down the search tree and thus improve efficiency.
\begin{figure}
\begin{centering}
\includegraphics[width=0.425\textwidth]{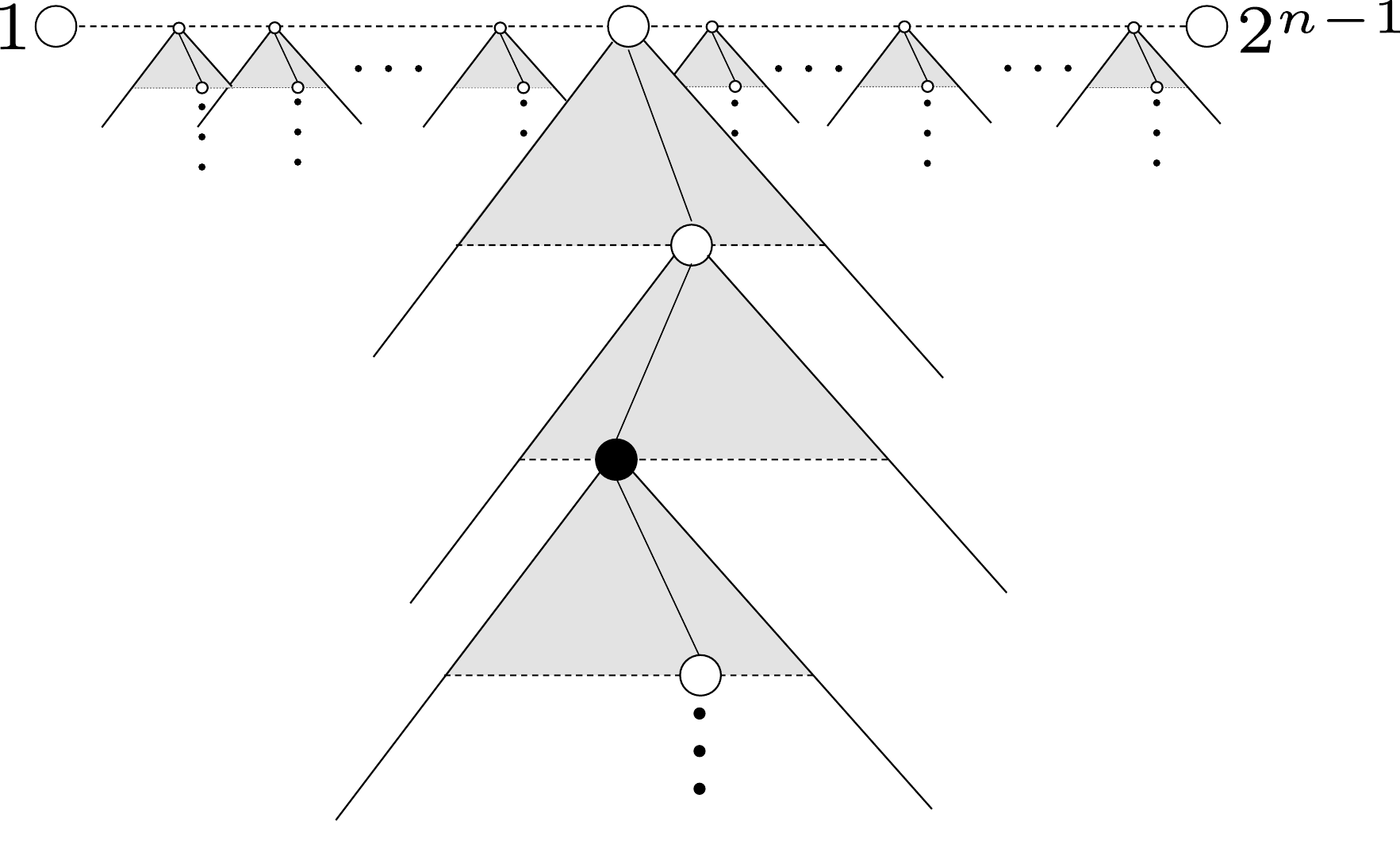}
\par\end{centering}
\caption{\label{fig:Search-tree-Sample}Search tree \cite{SearchTree} for
permissible (min,+)-equations. The black dot marks a computation of
cross-traffic arrivals.}
\end{figure}

\subsubsection{Caching of Arrival Bounds}

Disjoint sub-tandem decompositions only need to differ by a single
link that was cut, so they often share many sub-tandems which all
require the same derivations of traffic arrivals. These arrival bounds
are solely defined by the sub-tandem, the flows to bound, and the
foi~\textendash{} i.e., the path through the search tree that leads
to a certain arrival bound (black dot in Figure~\ref{fig:Search-tree-Sample})
does not matter. Caching and reusing arrival bounds therefore reduces
the computational effort by allowing the analysis to terminate a search
before reaching any of the search tree's leaves.

\subsubsection{Convolution of Alternative Arrival Bounds}

In the search tree, effort spreads over adjacent levels: Each decomposition
results in one left-over service curve that computes one arrival bound,
all of which are combined with those of the next higher level. In
an algDNC analysis, we compute these arrival bounds as internal, intermediate
results and will use them to counteract the combinatorial explosion.
All of these bounds are valid arrival curves and therefore their
convolution is a valid arrival curve as well~\cite{LeBoudec_NCbook}.
This countermeasure reduces the amount of arrival bounds per level
in the recursion to a single one. Hence, it prevents the combinatorial
explosion when combining any two adjacent recursion levels' results.

\subsection{The Exhaustive yet Efficient Decomposition Algorithm}

\label{subsec:Algorithm}Our efficient algorithm depicted in Algorithm~\ref{alg:OnlineAlgorithm}
achieves very accurate compositional algDNC delay bounds in feed-forward
networks. It computes a flow of interest's alternative end-to-end
left-over service curves using the following steps: First, ${\tt get}\mathbb{B}_{\mathcal{T}}^{{\tt l.o.}\mathbb{F}}$
is invoked with the foi's path $\mathcal{P}$, i.e., $\mathcal{T}\coloneqq\mathcal{P}$,
and the foi itself. $\mathcal{P}$ constitutes the first tandem to
decompose into $2^{n-1}$ disjoint sub-tandem decompositions $\{\mathbb{T}_{1},\ldots,\mathbb{T}_{2^{n-1}}\}$,
where $n$ is the number of servers on $\mathcal{P}$. For each sub-tandem
$T$ of such a decomposition, we next backtrack (forcibly segregated,
Section~\ref{subsec:EnforcedSegregations}) cross-traffic to bound
its arrivals. Our efficiency improvements can be found in this part
of the algorithm: retrieval of a cached bound (line 20), convolution
of alternative bounds (line~35) and caching of a bound (line~42).
After the arrival bounding recursion has terminated, its results
are used to derive the foi's left-over service curve for composition
alternative $\mathbb{T}$, $\beta_{\mathbb{T}}^{\text{l.o.\ensuremath{\mathbb{F}}}}$.
The efficiency of this step could be increased by caching a sub-tandem
$\beta_{T}^{\text{l.o.}}$ which we leave for future work. In a final
step (not shown in Algorithm~\ref{alg:OnlineAlgorithm}) the delay
bounds for all decomposition alternatives are computed. The minimum
among them is the best algDNC delay bound attainable with our exhaustive
decomposition design. In Appendix~\ref{sec:Computational-complexity-algDNC}
we provide more details about our algDNC algorithm's complexity and
compare it to the SFA from the literature.

\IncMargin{0.75mm}

\begin{algorithm2e}
\caption{\label{alg:OnlineAlgorithm} Exhaustive Decomposition Algorithm:
\mbox{Efficient} Derivation of all Left-over Service Curves of $\mathcal{T}$.}

\SetKwProg{Fn}{}{}{}
\SetKwFunction{CUT}{getDecompositions}
\SetKwFunction{FsharedPathTo}{backtrackTandem}
\SetKwFunction{BETAS}{get$\mathbb{B}_{\mathcal{T}}^{\text{l.o.}\mathbb{F}}$\!}
\SetKwFunction{GROUP}{xtxSegregation}
\SetKwFunction{ADD}{put}
\SetKwFunction{GETCACHE}{getCacheEntry}
\SetKwFunction{ADDCACHE}{addCacheEntry}
\SetKwFunction{SRCACs}{getSrcFlow$\alpha$s}
\SetKwFunction{TFA}{getBurstCap}
\SetKwFunction{BurstCap}{cap}
\SetKwFunction{FABs}{AB}
\SetKwFunction{Finlink}{splitAsPerInlink}
\SetKwFunction{FgetSource}{getSrc}
\SetKwFunction{FgetSink}{getSink}
\AlgoDisplayBlockMarkers\SetAlgoBlockMarkers{}{}
\SetAlgoNoEnd

\Fn(){\BETAS{\emph{Tandem} $\mathcal{T}$, \emph{Flow aggregate of interest} $\mathbb{F}$}}{
 /* Disjoint subtandem decompositions of $\mathcal{T}$ */ \\
 $\{\mathbb{T}_{1},\ldots,\mathbb{T}_{2^{n-1}}\}$ = \CUT{$\mathcal{T}$}\!;\\
 \ForEach{Decomposition $\mathbb{T}\in\{\mathbb{T}_{1},\ldots,\mathbb{T}_{2^{n-1}}\}$}{
  \ForEach{Subtandem $T\in\mathbb{T}$}{
   /* Enforced segregations (Section~\ref{subsec:EnforcedSegregations}) */\\
   $\left(\mathbb{F}_{i},\ldots,\mathbb{F}_{i+j} \right)$ = \GROUP\!\!$(T,\mathbb{F})$;\\
   \BlankLine
   /* Arrival bounding and left-over service*/\\
   $(\alpha^{\mathbb{F}_{i}}_{T},\ldots,\alpha^{\mathbb{F}_{i+j}}_{T})$ = \FABs\!\!$(T,\left(\mathbb{F}_{i},\ldots,\mathbb{F}_{i+j} \right))$;\\
   $\beta_{T}^{\text{l.o.}\mathbb{F}}$ = $\beta_{T}\ominus(\alpha^{\mathbb{F}_{i}}_{T},\ldots,\alpha^{\mathbb{F}_{i+j}}_{T})$;\\
   \BlankLine
   /* End-to-end service of decomposition $\mathbb{T}$ */\\
   $\beta_{\mathbb{T}}^{\text{l.o.}\mathbb{F}}$ $\otimes$= $\beta_{T}^{\text{l.o.}\mathbb{F}}$;\\
  }
  \textbf{end}\\
  $\mathbb{B}_{\mathcal{T}}^{\text{l.o.}\mathbb{F}}$.\ADD$\!\!(\beta_{\mathbb{T}}^{\text{l.o.}\mathbb{F}})$;\\
 }
 \textbf{end}\\
}
\Return $\mathbb{B}_{\mathcal{T}}^{\text{l.o.}\mathbb{F}}$

\BlankLine
\Fn(){\FABs\!\!(\emph{Tandem} $T$, \emph{Flow\,aggregates\,to\,bound\,$\left(\mathbb{F}_{i},\ldots,\mathbb{F}_{i+j} \right))$}}{
 \ForEach{Flow aggregate $\mathbb{F} \in \left(\mathbb{F}_{i},\ldots,\mathbb{F}_{i+j} \right)$}{
  /* Efficiency: Check for cached $\alpha^{\mathbb{F}}_{T}$ */\\
  $\textbf{try}\{\alpha^{\mathbb{F}}_{T}=\GETCACHE\!\left(T,\,\mathbb{F}\right)$;\\
\hspace{6.5mm}$\mathbb{A}_{T}^{\mathbb{F}}.\ADD\left(\alpha^{\mathbb{F}}_{T}\right)$;\\
\hspace{6.5mm}\textbf{continue}; \}\\
  /* Recursively backtrack flows in $\mathbb{F}$: */\\
  /* Get $\mathbb{F}$'s tandem before reaching $T$ */\\
  $\mathcal{T}_\text{shared}$ = \FsharedPathTo\!\!($\mathbb{F},T$);\\
  /*\,Compute and store left-over service curves\,*/\\
  $\mathbb{B}_{\mathcal{T}_\text{shared}}^{\text{l.o.}\mathbb{F}}$ = \BETAS\!($\mathcal{T}_\text{shared},\mathbb{F}$);\\
  /*\,Get arrivals of flows in\,$\mathbb{F}$ at $\mathcal{T}_\text{shared}$'s\,source\,*/\\
  $src=\mathcal{T}_\text{shared}.\FgetSource();$  /* $src$ has $l$ inlinks */\\
  $\left(\mathbb{F}_{1},\ldots,\mathbb{F}_{l}\right)= \Finlink(src,\mathbb{F})$;\\
  $\alpha_{\mathcal{T}_\text{shared}}^{\mathbb{F}}$ = $\sum$ \FABs\!\!$(\left\langle src\right\rangle\!,\left(\mathbb{F}_{1},\ldots,\mathbb{F}_{l} \right))$;\\
  \BlankLine
  /* Arrivals at $T$ is their output from $\mathcal{T}_\text{shared}$ */\\
\ForEach{\emph{$\beta_{\mathcal{T}_\text{shared}}^{\text{l.o.}\mathbb{F}}\in\mathbb{B}_{\mathcal{T}_\text{shared}}^{\text{l.o.}\mathbb{F}}$}}{
  /* Efficiency: Convolve alternative $\alpha^{\mathbb{F}}_{T}$s */\\
  $\alpha^{\mathbb{F}}_{T}$ $\otimes$= $(\alpha_{\mathcal{T}_\text{shared}}^{\mathbb{F}}\oslash\beta_{\mathcal{T}_\text{shared}}^{\text{l.o.}\mathbb{F}})$;\\
 }
 \textbf{end}\\
 \BlankLine
 /* Burst cap \cite{BS16} */\\
 $B^{\text{max}}$ = \TFA\!\!($\mathcal{T}_\text{shared}$.\FgetSink\!\!());\\
 $\alpha^{\mathbb{F}}_{T}$.\BurstCap\!\!($B^{\text{max}}$);\\
 $\mathbb{A}_{T}^{\mathbb{F}}$.\ADD\!\!($\alpha^{\mathbb{F}}_{T}$);
 \BlankLine
 /* Efficiency: Cache $\alpha^{\mathbb{F}}_{T}$ */\\
 $\ADDCACHE\!\left(\alpha^{\mathbb{F}}_{T}\right)$;\\
 }
 \textbf{end}\\
}
\Return $\mathbb{A}_{T}^{\mathbb{F}}=(\alpha^{\mathbb{F}_{i}}_{T},\ldots,\alpha^{\mathbb{F}_{i+j}}_{T})$\\

\end{algorithm2e}

\section{Evaluation}

\label{sec:Evaluation}\label{subsec:Numerical-Experiments}In this
numerical evaluation, we benchmark our new algDNC analysis algorithm
against the optDNC's ULP as well as the SFA. It is known that the
ULP will derive the most accurate delay bounds among the alternatives
and that SFA will derive the least accurate ones. We show that our
novel algDNC analysis considerably outperforms the SFA. It even derives
delay bounds close to the ULP, yet, in a fraction of the ULP's computation
time.\begin{figure}
\begin{centering}
\hspace*{-1mm}\subfloat[\label{fig:Eval-AFDX-DelayBounds}AFDX Case Study: Delay Bounds.]{\begin{centering}
\includegraphics[width=0.525\textwidth]{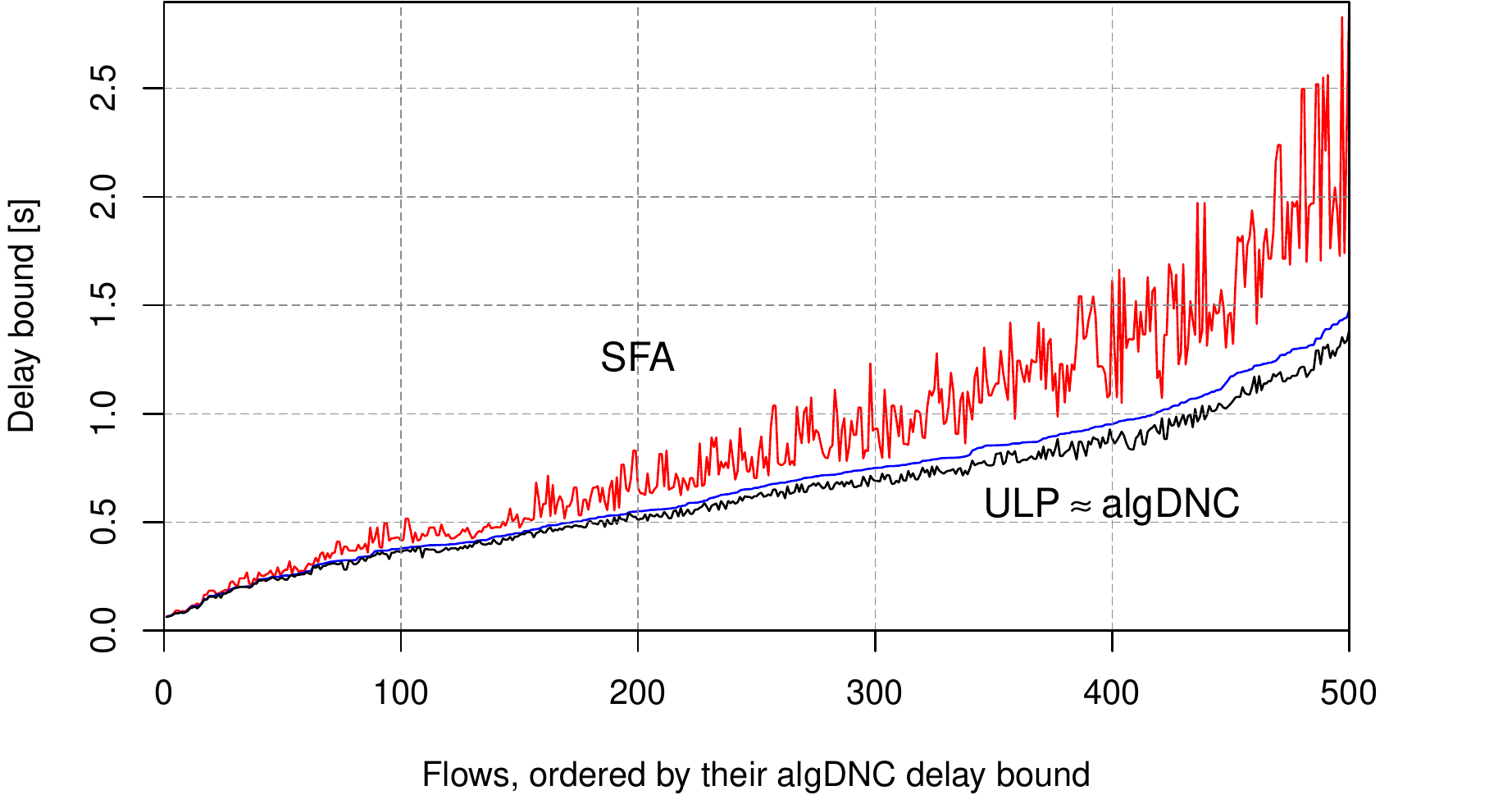}
\par\end{centering}
}\hspace*{-7.5mm}\subfloat[\label{fig:Eval-AFDX-RunTimes}AFDX Case Study: Computational Effort.]{\begin{centering}
\includegraphics[width=0.525\textwidth]{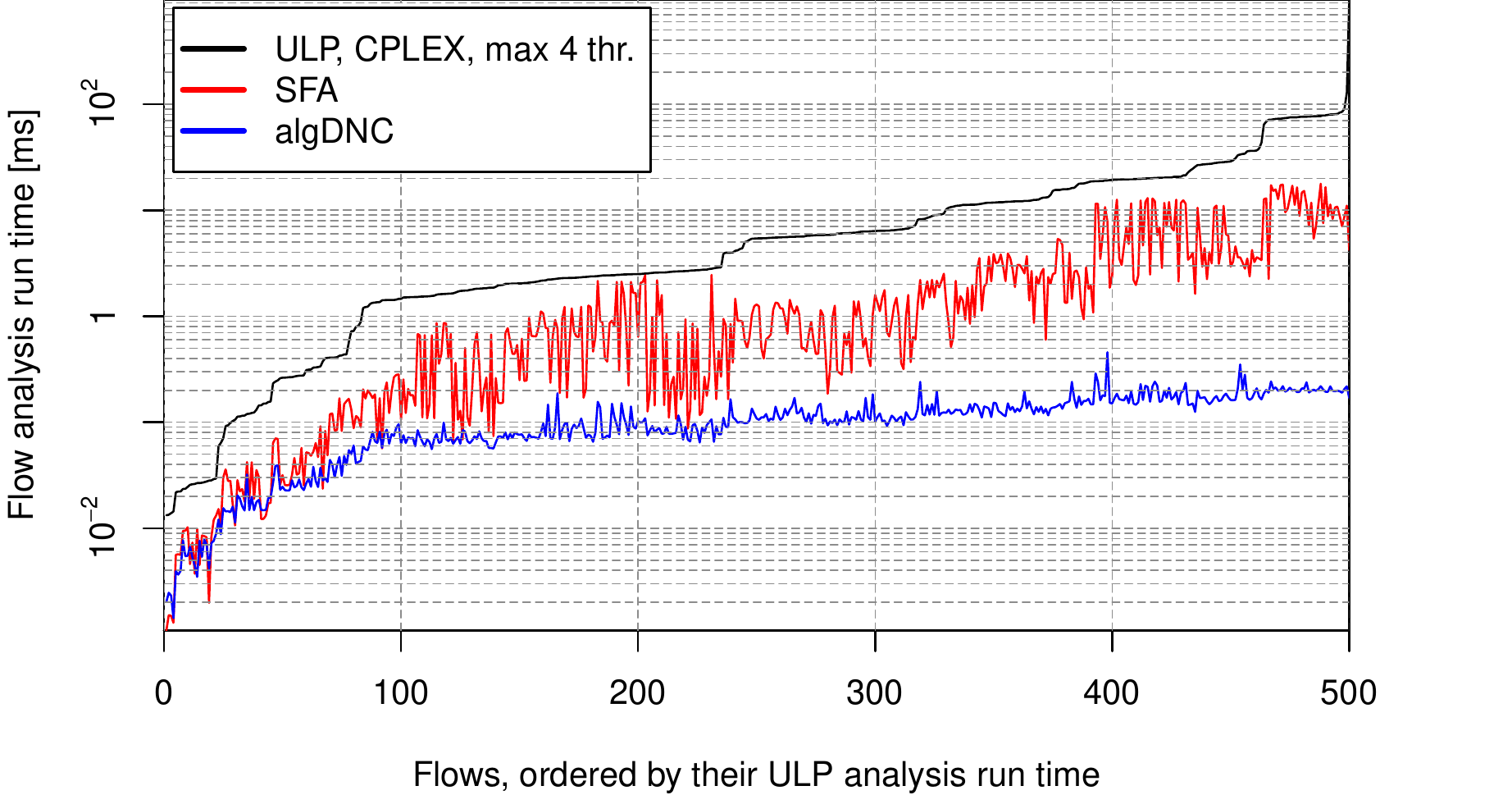}
\par\end{centering}
}
\par\end{centering}
\caption{\label{fig:Eval-AFDX}Delay bounds and computational effort in the
AFDX topology with randomly routed unit size flows.}
\end{figure}

\subsection{AFDX Topology Case Study}

\label{subsec:AFDX-Case-Study}First, we investigate delay bound accuracy
in a network topology as found in the industrial avionics context.
The topology we exemplarily analyze is dimensioned similarly to the
backbone network in the Airbus~A380. It has a dense core of $16$~switches
that connect a total of $125$~end-systems located in the network
periphery. Each server has a service curve resembling a $100$Mbps
Ethernet link. We created a representative AFDX topology according
to the algorithm presented in~\cite{Pegase_ERTSS}. This topology
generation scheme has some random factors in it, i.e., from an industrial
point of view, the network we analyze may correspond to a single alternative
in the pre-deployment design space exploration.

According to the current AFDX specification, flows are routed within
so-called virtual links (VLs). Each VL connects a single source end-system
to multiple sink end-systems (in the device graph) with fixed resource
reservation on the path between these systems. In the view of the
network calculus, VLs correspond to multicast flows that reserve large
resource shares. An examination of the problems due to VLs\textquoteright{}
coarse granularity can be found in~\cite{AFDX_LinkUtil}. The analysis
of multicast flows with algDNC was presented in~\cite{BG16}, yet,
a multicast ULP has not been provided. For an expressive comparison,
we thus focus on the AFDX network topology itself and $500$ randomly
routed unicast flows to the network. They are shaped to unit size
token buckets wit rate $1$Mbps and bucket size $1$Mb.

Figure~\ref{fig:Eval-AFDX-DelayBounds} depicts the \textbf{$500$}
individual flow delay bounds. The SFA's delay bounds show a gap to
algDNC and ULP bounds that tends to grow on average. Additionally,
the SFA delay bounds keep oscillating compared to the ULP and algDNC
bounds, such that this analysis is not suitable to confidently rank
AFDX design alternatives regarding their performance.

Figure~\ref{fig:Eval-AFDX-RunTimes} shows the time to analyze each
flow in the network. This per-flow effort can differ more than three
orders of magnitude within the SFA and the ULP analysis. This is surprising
as the AFDX has a small network diameter and therefore the recursive
backtracking is not very deep. For the very same reason, algDNC's
effort stays within a much smaller range. In absolute terms, the algDNC
outperforms ULP as well as SFA run times by multiple orders of magnitude
\textendash{} a decisive advantage in design space explorations.
\begin{figure*}
\begin{centering}
\subfloat[\label{fig:Delay-bound-eval-DelayBounds-20}Sample network I (20 devices).]{\begin{centering}
\includegraphics[width=0.325\textwidth]{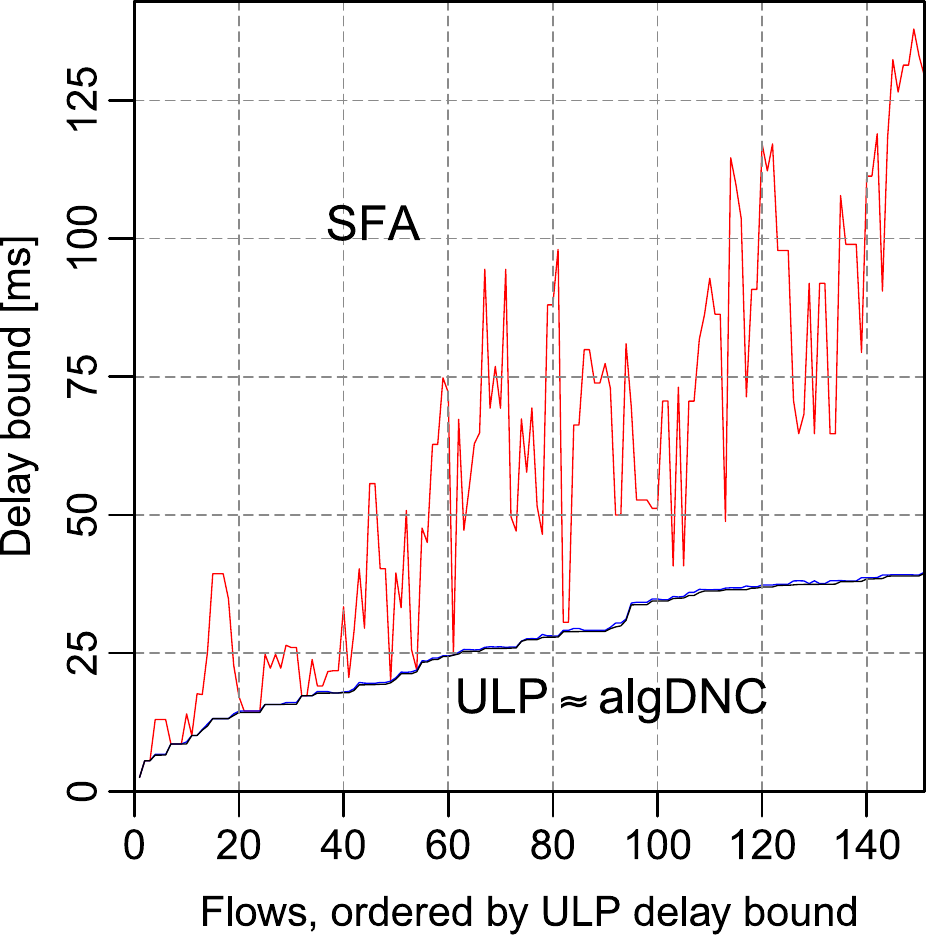}
\par\end{centering}
}\hspace*{10mm}\subfloat[\label{fig:Delay-bound-eval-DelayBounds-180}Sample network II (180
devices).]{\begin{centering}
\includegraphics[width=0.325\textwidth]{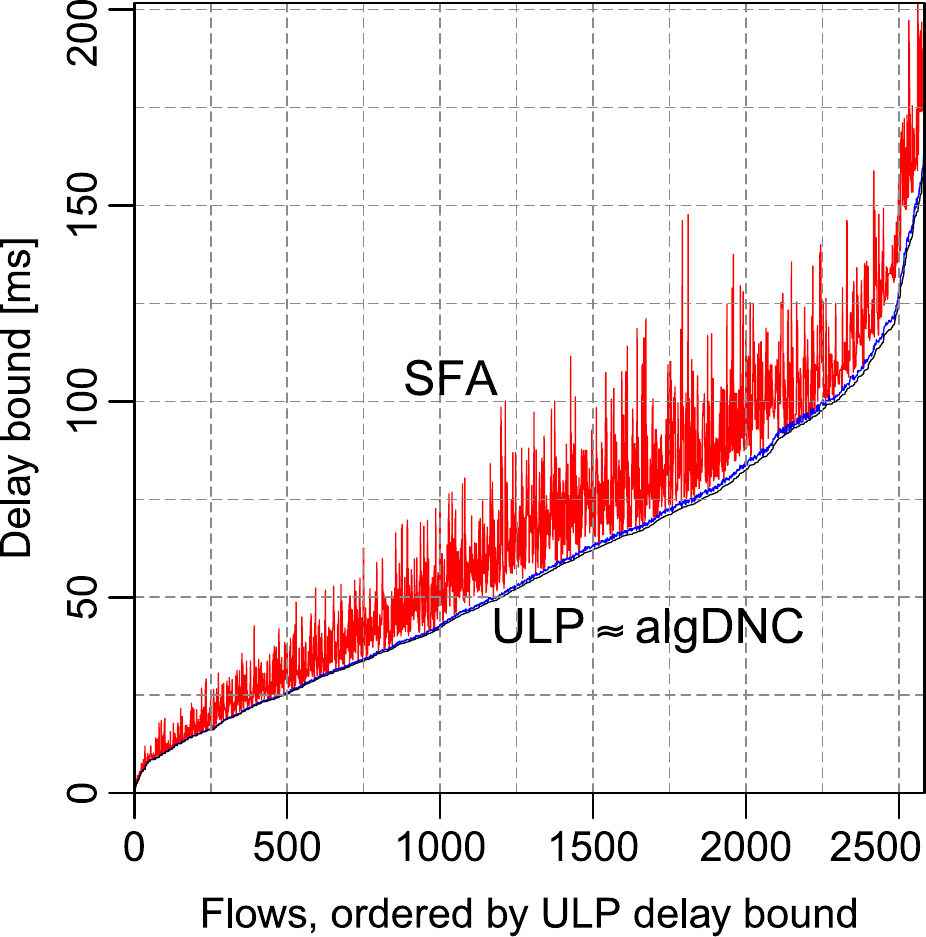}
\par\end{centering}
\centering{}}
\par\end{centering}
\begin{centering}
\subfloat[\label{fig:Delay-bound-eval-DeviationULP}Comparison to ULP (networks
from 20 to 180 devices).]{\begin{centering}
\includegraphics[width=0.75\textwidth]{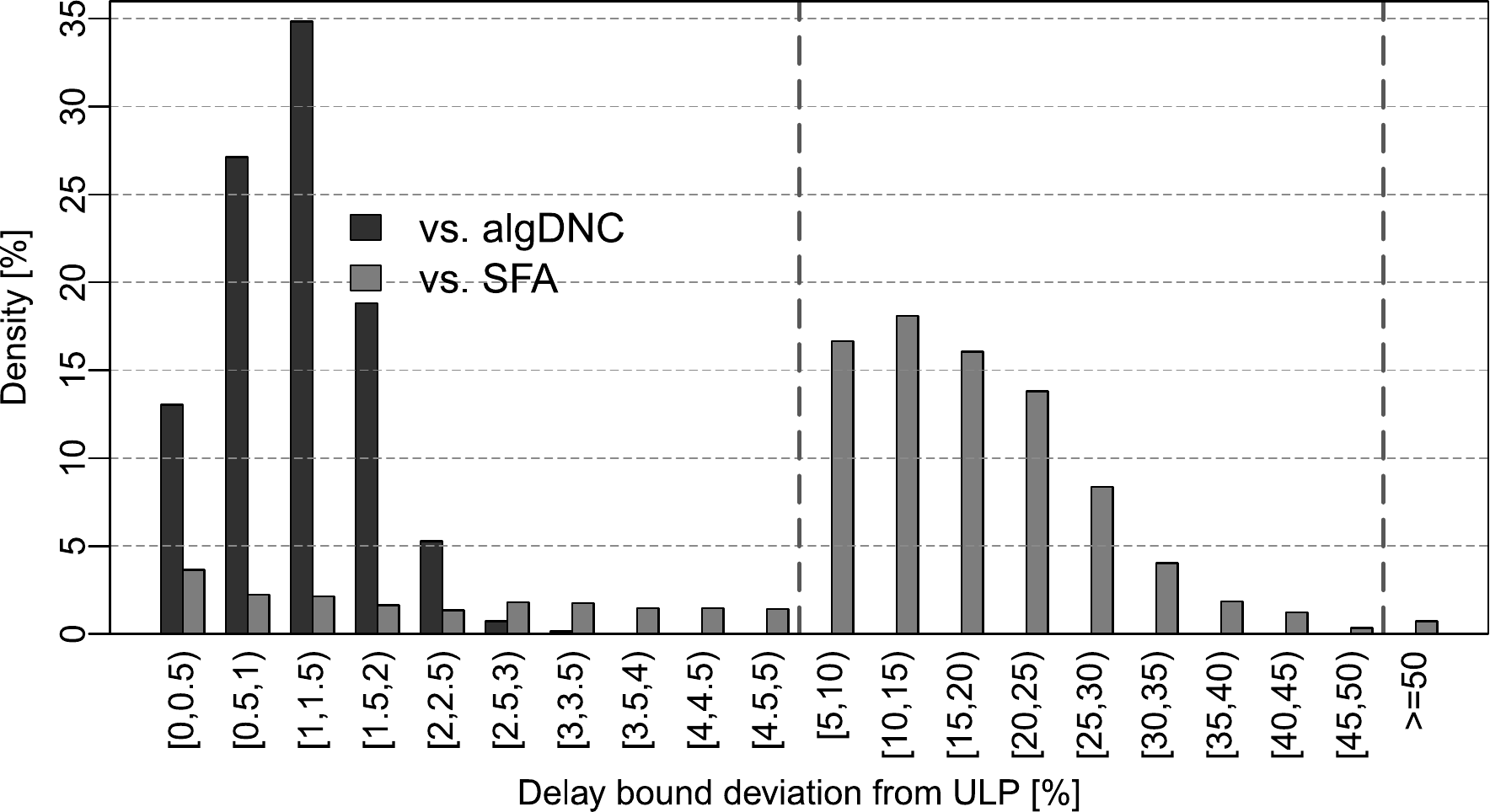}
\par\end{centering}
}
\par\end{centering}
\centering{}\caption{\label{fig:Delay-bound-evaluation}Delay Bound Accuracy. Subfigures~\ref{fig:Delay-bound-eval-DelayBounds-20}
and~\ref{fig:Delay-bound-eval-DelayBounds-180} show the delay bounds
for two sample networks of different sizes. They depict each flow's
end-to-end delay bound computed with SFA, ULP and our new exhaustive
algDNC. While the old SFA cannot compete with the ULP, the results
of our new algDNC analysis are extremely close to the ULP delay bounds.\vspace{1.5mm}
\protect \\
Subfigure~\ref{fig:Delay-bound-eval-DeviationULP} provides quality
statistics over all the networks from $20$ to $180$ devices ($12376$
flows, Table~\ref{tab:NetServerFlow}, left, in Appendix~\ref{subsec:SetNetworks})
by depicting the deviation of delay bounds relative to the ULP's results.
The accuracy of our new algDNC analysis is stable across all network
sizes and, except for a single outlier at $7.57$\%, all new algDNC
bounds deviate from ULP by at most $4.2$\%.}
\end{figure*}

\subsection{Scalability of DNC Analyses}

Next, we turn to the scalability. To do so, we created a set of larger
test networks (see Appendix~\ref{subsec:SetNetworks}) to evaluate
quality and cost.

\subsubsection{Quality: Accuracy of Delay Bounds}

 Figures~\ref{fig:Delay-bound-eval-DelayBounds-20} and~\ref{fig:Delay-bound-eval-DelayBounds-180}
show each flow's end-to-end delay bound in two sample networks of
very different size and complexity: the smallest in our set of networks
($20$ devices, $38$ servers, $158$ flows) and the largest one still
feasible to analyze with ULP ($180$ devices, $646$ servers, $2584$
flows). For the small network, we already showed that the SFA results
oscillate wildly with a large amplitude above the ULP (Section~\ref{subsec:optDNC}).
Figure~\ref{fig:Delay-bound-eval-DelayBounds-20} extends this evaluation
with the algDNC bounds and reveals that they increase in lockstep
with the ULP while staying in close range. The same holds true in
the larger sample network of Figure~\ref{fig:Delay-bound-eval-DelayBounds-180}.
Both observations are also confirmed by the overall results of our
experimental investigation. Figure~\ref{fig:Delay-bound-eval-DeviationULP}
depicts the deviation of the SFA and the algDNC delay bounds from
the ULP as observed in all network sizes from $20$ to $180$ with
a total of $12376$ observation points (Table~\ref{tab:NetServerFlow},
flows on the left, in Appendix~\ref{subsec:SetNetworks}). The numerical
evaluation confirms that optDNC's ULP barely outperforms the algDNC:
\begin{itemize}
\item Our new exhaustive decomposition analysis deviates from the ULP by
only $1.142$\% on average in our evaluations.
\item The $99$th percentile is as low as $2.48$\% deviation.
\item Results do not deviate by more than $4.2$\%, except for a single
outlier at $7.57$\%. Yet, this outlier is still more accurate than
$73.2$\% of the SFA-derived delay bounds.
\item These accuracy characteristics are stable across different network
sizes.
\end{itemize}
These results confirm that the problems of algDNC found in Section~\ref{subsec:Problems-of-algDNC}
were indeed crucial for its previous lack in accuracy. Moreover, our
approach to find the (min,+)-algebraic equation with the minimal combined
impact of all algDNC problems turned out to allow for delay bounds
whose quality is competitive with optDNC's feasible heuristic ULP.

\subsubsection{Cost: Network Analysis Times}

Last, we evaluate the cost of attaining delay bounds in feed-forward
networks. We show that our efficiency improvements have a crucial
impact on the network analysis times of the algDNC analysis. Figure~\ref{fig:AnalysisEffort}
shows that both outperform the ULP considerably and scale better with
increasing network size. Our novel algDNC analysis performs best,
a discernible increase in effort can only be observed in network sizes
that are impractical to analyze with any of the other analyses. Although
the experimental results are subject to fluctuations due to the high
degree of randomness in the creation of our set of sample networks
(see Appendix~\ref{subsec:SetNetworks}), we can draw clear conclusions
about fundamental trends in DNC network analysis:
\begin{itemize}
\item The ULP becomes computationally infeasible fast; the $180$ devices
network requires $\sim\!13$~days to be analyzed.
\item The SFA scales better than the ULP. However, absolute effort increases
to levels unsuitable for a design space exploration. Increasing the
network size by only $20$ devices can cause a huge difference of
analysis effort. In our sample, analyzing the $260$ devices network
is relatively fast while the $280$ devices network is already impractical
to analyze. The SFA's effort seems barely predictable.
\item Our novel algDNC scales better than the other analyses. It is also
more resilient to the randomness of our network creation. Provoking
a considerable increase of network analysis time required to vastly
increase the network size to $1000$ devices ($3626$ servers, $14504$
flows).
\end{itemize}
Figure~\ref{fig:AnalysisEffort} also depicts results for different
levels of deterministically parallelized optimization with CPLEX\footnote{Note, that among the tools employed in this evaluation, only CPLEX
offers parallelization.}. A maximum amount of $4$~threads yields larger network analysis
times than single-threaded execution due to the overhead of thread
synchronization. However, for networks $>\!120$~devices, an optimization
with up to $8$~parallel threads becomes faster. In Section~\ref{subsec:optDNC},
Figure~\ref{fig:ULP-Share-DiscoDNC-CPLEX}, we saw that CPLEX optimization
consumes the vast majority of analysis time. In an attempt to improve
optDNC analyses run time, we also investigated the potential benefit
of further parallelization of this optimization step with CPLEX. Yet,
run times were not affected significantly as the results in Appendix~\ref{sec:EvalCostParallelCPLEX}
show. Deriving accurate delay bounds in large feed-forward networks
is solely possible with the novel algDNC analysis. OptDNC's trends
w.r.t. delay bounds of individual flows are retained such that algDNC
possesses a similar power in ranking of design alternatives.\begin{figure*}[!t]
\begin{centering}
\includegraphics[width=0.8\textwidth]{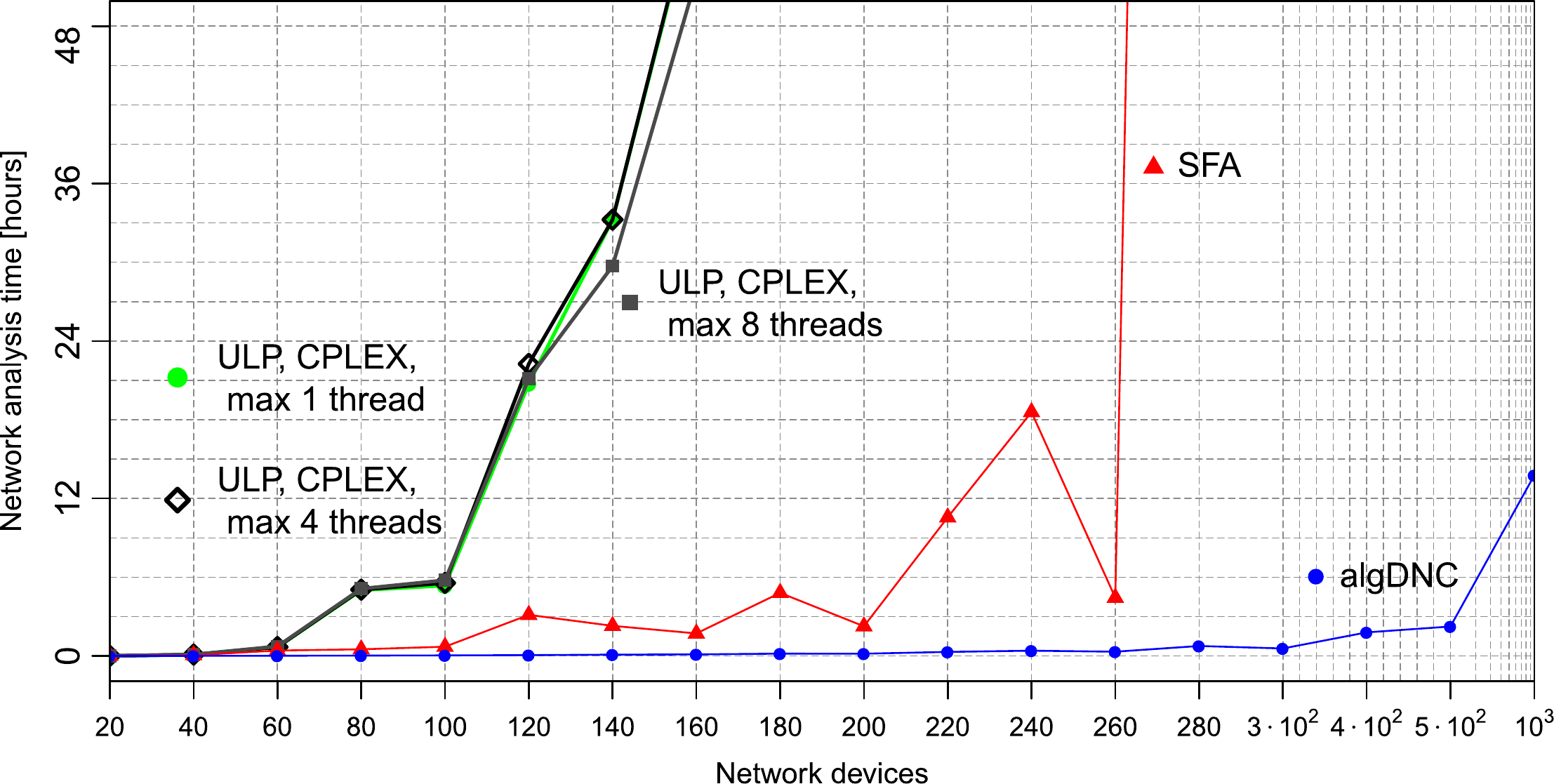}
\par\end{centering}
\vspace{3mm}
\begin{centering}
%{\smaller
\begin{tabular}[b]{|r||c|c|c|}
\hline
Net & ULP & SFA & algDNC\tabularnewline
\hline 
\hline
20 & 0:00:12 & 0:00:07 & 0:00:13\tabularnewline
\hline
40 & 0:05:49 & 0:00:30 & 0:00:16\tabularnewline
\hline
60 & 0:39:44 & 0:01:49 & 0:00:48\tabularnewline
\hline
80 & 5:00:33 & 0:03:01 & 0:01:34\tabularnewline
\hline
100 & 5:22:02 & 0:03:39 & 0:02:31\tabularnewline
\hline
120 & 20:45:44 & 0:10:19 & 0:03:41\tabularnewline
\hline
140 & 33:15:36 & 0:11:18 & 0:05:58\tabularnewline
\hline
160 & 58:06:08 & 0:10:55 & 0:07:05\tabularnewline
\hline
180 & $\sim$13 days & 0:29:52 & 0:10:27\tabularnewline
\hline
200 & -- & 0:12:15 & 0:10:13\tabularnewline
\hline
220 & -- & 0:51:16 & 0:18:23\tabularnewline
\hline
\multicolumn{4}{|c|}{$\cdots$}\tabularnewline
\hline
400 & -- & 128:27:16 & 1:47:39\tabularnewline
\hline
\multicolumn{4}{|c|}{$\cdots$}\tabularnewline
\hline
$10^3$ & -- & -- & $\;$13:45:52$\;$\tabularnewline
\hline
\multicolumn{4}{c}{
%\footnotesize
Network analysis time (hh:mm:ss)}\tabularnewline
\multicolumn{4}{c}{
%\footnotesize
of single threaded analysis runs.}\tabularnewline
\end{tabular}
%}
\par\end{centering}
\caption{\label{fig:AnalysisEffort}Computational Effort.\emph{ }The ULP becomes
computationally infeasible at moderate network sizes, its analysis
time increases to $\sim\negmedspace13$~days at $180$ devices. The
old SFA (without improvements) scales better, yet, its cost becomes
unpredictable when increasing the network size. Moreover, it reaches
$>\negmedspace5$~days at $400$ devices. Our new algDNC scales better
and is more resilient to the randomness of our network creation. Its
underlying improvements can benefit the inaccurate SFA similarly.}
\end{figure*}

\section{Related Work}

\label{sec:RelatedWork}In this work, we rely on the PMOO left-over
service curve under arbitrary multiplexing of flows. DNC also offers
an algebraically derived left-over service curve for FIFO-multiplexing
servers $\beta_{\theta}^{\text{l.o.}}$ \cite{LeBoudec_NCbook}. Similar
to the arbitrary-multiplexing one, it is only applicable to a single
server. Thus, it allows for an analysis akin to the SFA. Again, this
causes the different problems, both on tandems where multiplexing
with cross-flows is paid for more than once (Section~\ref{subsec:Network-Calculus-Analysis})
and in the compositional feed-forward network analysis (Section~\ref{subsec:Problems-of-algDNC}).

Effort to achieve the PMOO principle in the analysis of FIFO-multiplexing
servers resulted in the Least Upper Delay Bound (LUDB) analysis~\cite{Lenzini_methodologyFIFOe2e,LUDB_journal}.
If paths of cross-flows do not overlap (i.e., the flows are nested
into each other), the LUDB suggests to convolve servers before removing
cross-flows, subject to the nesting. The latter step is done by computing
the above FIFO left-over service curve. E.g., in Figure~\ref{fig:Sample-server-topology},
$x\negmedspace f_{1}$'s arrival at server $s_{1}$ ($\alpha_{s_{1}}^{x\negmedspace f_{1}}$)
is removed from $s_{1}$, then the left-over service curve $\beta_{\theta,s_{1}}^{\text{l.o.}}$
is convolved with $\beta_{s_{2}}$, and finally $\alpha_{s_{2}}^{x\negmedspace f_{1}}$
is removed from this curve. Note, that this approach for tandem analysis
enforces a PSOO violation during cross-traffic arrival bounding like
any other algDNC tandem analysis. If paths of cross-flows are not
nested, this approach cannot be applied, for instance, if we analyzed
$x\!f_{2}$ in Figure~\ref{fig:Sample-server-topology}. In this
case, \cite{LUDB_journal} suggests to decompose into sub-tandems
with nested interference patterns before applying the LUDB procedure.
The end-to-end left-over service curve for delay bounding is then
derived by convolution of the sub-tandem ones. Oftentimes, there are
multiple alternatives to decompose to nested interference patterns
(Figure~\ref{fig:Sample-server-topology}: both links are potential
demarcations of a decomposition). LUDB suggests exhaustive enumeration
of alternatives, computation of all delay bounds and returning the
least one among them~\textendash{} the compFFA procedure depicted
in Section~\ref{subsec:alg_comp_DNC} is strictly executed. Obvious
drawbacks of this procedure are a lack of the PMOO principle implementation
and a combinatorial explosion as shown in Section~\ref{subsec:ExhaustiveDerivation123}.

The LUDB's sub-tandem decomposition and our novel algDNC's decomposition
are similar, yet, they differ in some key aspects. First, our algDNC
does not require to result in tandems with nested interference as
the arbitrary-multiplexing PMOO analysis fully implements the eponymous
principle. Our exhaustive approach thus results in more decompositions
per tandem than the LUDB. Therefore, more permissible (min,+)-equations
are derived and solved. Secondly, \cite{Lenzini_methodologyFIFOe2e}
and \cite{LUDB_journal} are concerned with a tandem analysis only.
They do not address the problems in feed-forward analyses presented
in Section~\ref{subsec:Problems-of-algDNC}, although the LUDB enforces
PSOO violations. Moreover, they do not provide a technical solution
for the potential combinatorial explosion problem; \cite{LUDB_journal}
rather presents a heuristic to trade accuracy against computational
effort on a single tandem. Our novel analysis design is very generic
and thus not restricted to arbitrary multiplexing. The LUDB tandem
analysis can be embedded into it and benefit from our efficiency improvements.

An optimization-based DNC approach for tight FIFO-multiplexing feed-forward
network analysis exists as well~\cite{BouillardStea_FF_FIFO}. It
transforms the DNC description of the network to a Mixed-Integer Linear
Program (MILP) where the integer variables encode the (partially)
parallel paths of flows. This circumvents the step of explicitly extending
a partial order to the set of all compatible total orders, the root
cause the LP analysis' combinatorial explosion. Again, the computational
effort to solve the MILP for large networks is not evaluated. Instead,
the authors advise to remove constraints such that all integer variables
are removed, leaving an ordinary linear program to solve. I.e., tightness
is traded for computational effort reduction. This is similar to the
ULP heuristic that we showed to be computationally infeasible nonetheless.

\section{Conclusion}

\label{sec:Conclusion}In this article, we contribute the first DNC
analysis for high quality and low cost end-to-end delay bounds in
large feed-forward networks. We demonstrate this contribution on the
novel set of feed-forward networks we created for DNC evaluation as
well as the first comprehensive optDNC evaluation in such networks.
Figure~\ref{fig:DNC-QandC-after} summarizes the findings of our
article: Against previous belief, we showed that optDNC's most efficient
heuristic, ULP, is computationally infeasible even for moderately
sized networks. For larger networks, we also showed that the algebraic
SFA is more costly than expected, becoming barely feasible to execute.
The remaining, grayed-out TFA and PMOO were not categorized more precisely~\textendash{}
their quality does not change and more detailed knowledge about their
cost would still have left the desired area in the intersection of
low cost and high quality empty. Therefore, we developed a novel algDNC
analysis that combines the strengths of all previously existing analyses,
crucially improves their quality and incorporates decisive efficiency
improvements. This novel algDNC analysis is currently the only network
analysis providing highly accurate delay bounds while being computationally
feasible even in large feed-forward networks. Its algebraically derived
delay bounds deviate from the optimization-based ULP analysis by only
1.142\% on average in our evaluations while computation times are
several orders of magnitude smaller.
\begin{figure}[H]
\begin{centering}
\includegraphics[width=0.55\columnwidth]{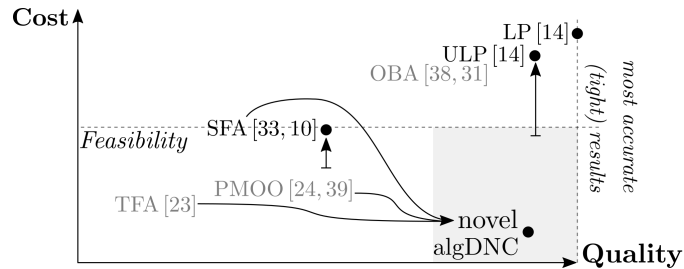}
\par\end{centering}
\centering{}\caption{\textbf{\label{fig:DNC-QandC-after}}Categorization of DNC analyses
regarding their quality and cost. For analyses with black dots we
generated sufficient evaluation data to pinpoint them with high certainty.
Our novel algDNC analysis is evidently the only network analysis of
good quality that is also feasible to execute.}
\end{figure}

\bibliographystyle{abbrv}
\bibliography{CR_00_ArXiv}

\appendix

\section{Generation of Networks for DNC Scalability Tests}

\label{sec:GenerateTestNetworks}

\subsection{Device Graph and Server Graph}

Data communication networks are commonly modeled as graphs where nodes
represent individual devices like a router or a switch. These devices
can have multiple outputs to connect to other devices. Network calculus
analyzes flows that cross the servers at the output of devices. Therefore,
it needs to transform the device graph representation of a network
into a server graph representation~\cite{BS10-1,BS15} where these
servers are directly connected. We use the term \emph{network} to
refer to a server graph crossed by flows.

\subsection{Network Generation}

\label{subsec:SetNetworks}Currently, there is neither standard set
of feed-forward networks to test a DNC analysis on nor a common procedure
to create them. Previous work provided small networks tailored to
illustrate a specific advancement~\cite{Jens_ArbMuxLurch,BouillardINFOCOM2010,BS16-1}
and oftentimes restricted its tool support to these networks. In contrast,
we created an extensible set of networks as follows:

We start with a device graph provided by a topology generator (aSHIIP~\cite{aSHIIP}).
We used the general linear preference (GLP) model~\cite{Towsley_GLP}
with its provided default parameter setting ($m_{0}=20$, $m=1$,
$p=0.4695$, $\beta_{\text{GLP}}=0.6447$) to create Internet-like
topologies of different sizes. Then, we transformed them to their
server graph representation where servers resemble the capability
to transmit via $10$Gbps links (service curves $\beta_{10\text{Gbps},0}$).
Next, we applied the turn prohibition algorithm~\cite{NC_TurnProhibition}
to break potential cycles (``feed-forwardize the network'') and
added flows with a fixed server-to-flow ratio of 1:4 for all network
sizes. Flows' arrivals were uniformly shaped to token buckets with
rate $5$Mbps and bucket size $5$Mb, i.e., $\gamma_{5\text{Mbps},5\text{Mb}}$,
to scale up from the unit-sized arrivals used in the AFDX evaluation.
For path creation, a pair of source/sink devices were randomly selected
from the device graph. The shortest path between these devices, yet,
in the feed-forwardized server graph, was then computed. Table~\ref{tab:NetServerFlow}
show the resulting network sized we evaluate in Sections~\ref{fig:Eval-AFDX-DelayBounds}
and~\ref{fig:Eval-AFDX-RunTimes}.
\begin{table}[H]
\centering{}\caption{\label{tab:NetServerFlow}Networks to Evaluate Quality and Cost of
DNC.}
\begin{tabular}{|c|c|c|}
\hline 
Devices & Servers & Flows\tabularnewline
\hline 
\hline 
20 & 38 & 152\tabularnewline
\hline 
40 & 118 & 472\tabularnewline
\hline 
60 & 164 & 656\tabularnewline
\hline 
80 & 282 & 1128\tabularnewline
\hline 
100 & 364 & 1456\tabularnewline
\hline 
120 & 398 & 1592\tabularnewline
\hline 
140 & 512 & 2048\tabularnewline
\hline 
160 & 572 & 2288\tabularnewline
\hline 
180 & 646 & 2584\tabularnewline
\hline 
\end{tabular}\hspace{5mm}%
\begin{tabular}{|c|c|c|}
\hline 
Devices & Servers & Flows\tabularnewline
\hline 
\hline 
200 & 740 & 2960\tabularnewline
\hline 
220 & 744 & 2976\tabularnewline
\hline 
240 & 882 & 3528\tabularnewline
\hline 
260 & 976 & 3904\tabularnewline
\hline 
280 & 994 & 3976\tabularnewline
\hline 
300 & 1124 & 4496\tabularnewline
\hline 
400 & 1478 & 5912\tabularnewline
\hline 
500 & 1876 & 7504\tabularnewline
\hline 
1000 & 3626 & 14504\tabularnewline
\hline 
\end{tabular}
\end{table}

\section{(min,+)-Equations for Figures~4 and 5}

\label{sec:Appendix-Quality-algDNC}We show that the previously neglected
decomposition alternative for an algDNC analysis, Alt3 shown in Figure~\ref{fig:Fig4Alt3},
can arbitrarily outperform the existing analyses SFA and PMOO. First,
we derive the respective left-over service curves, $\beta_{\left\langle s_{1},s_{2}\right\rangle }^{\text{l.o.Alt3}}$,
$\beta_{\left\langle s_{1},s_{2}\right\rangle }^{\text{l.o.SFA}}$,
and $\beta_{\left\langle s_{1},s_{2}\right\rangle }^{\text{l.o.PMOO}}$,
for the flow of interest (foi) in Figure~\ref{fig:Sample-server-topology}.
They are crucial for each (min,+)-equation bounding the foi's delay.
\begin{eqnarray*}
\beta_{\left\langle s_{1},s_{2}\right\rangle }^{\text{l.o.Alt3}} & = & \beta_{\left\langle s_{1},s_{2}\right\rangle }\ominus\left(\alpha_{s_{1}}^{\left[x\!f_{1},x\!f_{2}\right]},\alpha_{s_{2}}^{x\!f_{2}}\right)\\
 & = & \beta_{s_{1}}^{\text{l.o.foi}}\otimes\beta_{s_{2}}^{\text{l.o.foi}}\\
 & = & \left(\beta_{s_{1}}\ominus\alpha_{s_{1}}^{\left[x\!f_{1},x\!f_{2}\right]}\right)\otimes\left(\beta_{s_{2}}\ominus\alpha_{s_{2}}^{x\!f_{2}}\right)\\
 & = & \left(\beta_{s_{1}}\ominus\left(\alpha_{s_{0}}^{\left[x\!f_{1},x\!f_{2}\right]}\oslash\beta_{s_{0}}\right)\right)\otimes\left(\beta_{s_{2}}\ominus\left(\alpha_{s_{0}}^{x\!f_{1}}\oslash\beta_{\left\langle s_{0},s_{1}\right\rangle }^{\text{l.o.}x\!f_{2}}\right)\right)\\
 & = & \left(\beta_{s_{1}}\ominus\left(\alpha^{\left[x\!f_{1},x\!f_{2}\right]}\oslash\beta_{s_{0}}\right)\right)\otimes\left(\beta_{s_{2}}\ominus\left(\alpha^{x\!f_{1}}\oslash\left(\beta_{\left\langle s_{0},s_{1}\right\rangle }\ominus\left(\alpha^{x\!f_{2}}\right)\right)\right)\right)\\
\\
\beta_{\left\langle s_{1},s_{2}\right\rangle }^{\text{l.o.SFA}} & = & \beta_{\left\langle s_{1},s_{2}\right\rangle }\ominus\left(\alpha_{s_{1}}^{\left[x\!f_{1},x\!f_{2}\right]},\alpha_{s_{2}}^{x\!f_{2}}\right)\\
 & = & \beta_{s_{1}}^{\text{l.o.foi}}\otimes\beta_{s_{2}}^{\text{l.o.foi}}\\
 & = & \left(\beta_{s_{1}}\ominus\alpha_{s_{1}}^{\left[x\!f_{1},x\!f_{2}\right]}\right)\otimes\left(\beta_{s_{2}}\ominus\alpha_{s_{2}}^{x\!f_{2}}\right)\\
 & = & \left(\beta_{s_{1}}\ominus\left(\alpha_{s_{0}}^{\left[x\!f_{1},x\!f_{2}\right]}\oslash\beta_{s_{0}}^{\text{l.o.}\left[x\!f_{1},x\!f_{2}\right]}\right)\right)\otimes\left(\beta_{s_{2}}\ominus\left(\alpha_{s_{2}}^{x\!f_{2}}\oslash\beta_{\left\langle s_{0},s_{1}\right\rangle }^{\text{l.o.}x\!f_{2}}\right)\right)\\
 & = & \left(\beta_{s_{1}}\ominus\left(\left(\alpha_{s_{0}}^{x\!f_{1}}+\alpha_{s_{0}}^{x\!f_{2}}\right)\oslash\beta_{s_{0}}\right)\right)\otimes\left(\beta_{s_{2}}\ominus\left(\alpha^{x\!f_{2}}\oslash\left(\beta_{s_{0}}^{\text{l.o.}x\!f_{2}}\otimes\beta_{s_{1}}^{\text{l.o.}x\!f_{2}}\right)\right)\right)\\
 & = & \left(\beta_{s_{1}}\ominus\left(\left(\alpha^{x\!f_{1}}+\alpha^{x\!f_{2}}\right)\oslash\beta_{s_{0}}\right)\right)\otimes\left(\beta_{s_{2}}\ominus\left(\alpha^{x\!f_{2}}\oslash\left(\left(\beta_{s_{0}}\ominus\alpha_{s_{0}}^{x\!f_{1}}\right)\otimes\left(\beta_{s_{1}}\ominus\alpha_{s_{1}}^{x\!f_{1}}\right)\right)\right)\right)\\
 & = & \left(\beta_{s_{1}}\ominus\left(\left(\alpha^{x\!f_{1}}+\alpha^{x\!f_{2}}\right)\oslash\beta_{s_{0}}\right)\right)\otimes\left(\beta_{s_{2}}\ominus\left(\alpha^{x\!f_{2}}\oslash\left(\left(\beta_{s_{0}}\ominus\alpha^{x\!f_{1}}\right)\otimes\left(\beta_{s_{1}}\ominus\left(\alpha^{x\!f_{1}}\oslash\beta_{s_{0}}^{\text{l.o.}x\!f_{1}}\right)\right)\right)\right)\right)\\
 & = & \left(\beta_{s_{1}}\ominus\left(\left(\alpha^{x\!f_{1}}+\alpha^{x\!f_{2}}\right)\oslash\beta_{s_{0}}\right)\right)\otimes\left(\beta_{s_{2}}\ominus\left(\alpha^{x\!f_{2}}\oslash\left(\left(\beta_{s_{0}}\ominus\alpha^{x\!f_{1}}\right)\otimes\left(\beta_{s_{1}}\ominus\left(\alpha^{x\!f_{1}}\oslash\left(\beta_{s_{0}}\ominus\alpha^{x\!f_{2}}\right)\right)\right)\right)\right)\right)\\
\\
\beta_{\left\langle s_{1},s_{2}\right\rangle }^{\text{l.o.PMOO}} & = & \beta_{\left\langle s_{1},s_{2}\right\rangle }\ominus\left(\alpha_{s_{1}}^{x\!f_{1}},\alpha_{s_{1}}^{x\!f_{2}}\right)\\
 & = & \beta_{\left\langle s_{1},s_{2}\right\rangle }\ominus\left(\alpha^{x\!f_{1}}\oslash\beta_{s_{0}}^{\text{l.o.}x\!f_{1}},\alpha^{x\!f_{2}}\oslash\beta_{s_{0}}^{\text{l.o.}x\!f_{2}}\right)\\
 & = & \beta_{\left\langle s_{1},s_{2}\right\rangle }\ominus\left(\alpha^{x\!f_{1}}\oslash\left(\beta_{s_{0}}\ominus\alpha^{x\!f_{2}}\right),\alpha^{x\!f_{2}}\oslash\left(\beta_{s_{0}}\ominus\alpha^{x\!f_{1}}\right)\right)
\end{eqnarray*}
Next, we construct a sample parameter setting for the curves in our
three (min,+)-equations that simplifies the complex tandem left-over
derivations $\beta_{\mathcal{T}}^{\text{l.o.}}$. This parameter setting
allows us to continue by arguing over the exact curve shapes.

Assume the following arrival curves from $\mathcal{F}_{\text{TB}}$:
$\alpha^{x\!f_{1}}(d)=r^{x\!f_{1}}\cdot d$, $\alpha^{x\!f_{2}}(d)=r^{x\!f_{2}}\cdot d$,
and $\alpha^{\text{foi}}(d)=r^{\text{foi}}\cdot d$. Moreover assume
the following strict service curves from $\mathcal{F}_{\text{RL}}$:
$\beta_{s_{n}}(d)=R_{s_{n}}$, $n\in\left\{ 1,2\right\} $, $\beta_{s_{0}}(d)=R_{s_{0}}\cdot\max\left\{ 0,\,d-T_{s_{0}}\right\} $,
with $R_{s_{m}}\ge r^{x\!f_{1}}+r^{x\!f_{2}}+r^{\text{foi}}$, $m\in\left\{ 0,1,2\right\} $
for finite delay bounds. For positive burstiness increase of the cross-flows,
assume $T_{s_{0}}>0$, i.e., the burst terms (bucket sizes of $\mathcal{F}_{\text{TB}}$)
$b_{s_{n}}^{\text{AB}\mathbb{F}}$ become positive for each arrival
bounding $\text{AB}\in\left\{ \text{SFA},\text{PMOO,Alt3}\right\} $
of any cross-flow (aggregate) $\mathbb{F}\in\left\{ x\!f_{1},\,x\!f_{2},\,\left[x\!f_{1},x\!f_{2}\right]\right\} $,
at servers $s_{1}$ and $s_{2}$.

In this simplified setting, the flow of interest's delay bound equals
the $\beta_{\left\langle s_{1},s_{2}\right\rangle }^{\text{l.o.}}$'s
latency term $T_{\left\langle s_{1},s_{2}\right\rangle }^{\text{l.o.}}$.
Therefore, we derive the latency terms for all three decomposition.
The SFA left-over latency, abbreviated $T_{\left\langle s_{1},s_{2}\right\rangle }^{\text{l.o.SFA}}$
, respectively the SFA delay $D^{\text{SFA}}$, are
\[
D^{\text{SFA}}\,=\,T_{\left\langle s_{1},s_{2}\right\rangle }^{\text{l.o.SFA}}\,=\,T_{s_{1}}^{\text{l.o.SFA}}+T_{s_{2}}^{\text{l.o.SFA}}\,=\,\frac{b_{s_{1}}^{\text{SFA}x\!f_{1}}+b_{s_{1}}^{\text{SFA}x\!f_{2}}}{R_{s_{1}}\!-r^{x\!f_{1}}-r^{x\!f_{2}}}+\frac{b_{s_{2}}^{\text{SFA}x\!f_{2}}}{R_{s_{2}}\!-r^{x\!f_{2}}},
\]
the PMOO left-over latency $T_{\left\langle s_{1},s_{2}\right\rangle }^{\text{l.o.PMOO}}$
and delay $D^{\text{PMOO}}$ are
\[
T_{\left\langle s_{1},s_{2}\right\rangle }^{\text{l.o. PMOO}}\,=\,D^{\text{PMOO}}\,=\,\frac{b_{s_{1}}^{\text{PMOO}x\!f_{1}}+b_{s_{1}}^{\text{PMOO}x\!f_{2}}}{\left(R_{s_{1}}\!-r^{x\!f_{1}}-r^{x\!f_{2}}\right)\wedge\left(R_{s_{2}}\!-r^{x\!f_{2}}\right)},
\]
and Alt3's left-over latency $T_{\left\langle s_{1},s_{2}\right\rangle }^{\text{l.o.Alt3}}$
and delay $D^{\text{Alt3}}$ are
\[
D^{\text{Alt3}}\,=\,T^{\text{l.o.Alt3}}\,=\,T_{s_{1}}^{\text{l.o.Alt3}}+T_{s_{2}}^{\text{l.o.Alt3}}\,=\,\frac{b_{s_{1}}^{\text{Alt3}\left[x\!f_{1},x\!f_{2}\right]}}{R_{s_{1}}\!-r^{x\!f_{1}}-r^{x\!f_{2}}}+\frac{b_{s_{2}}^{\text{Alt3}x\!f_{2}}}{R_{s_{2}}\!-r^{x\!f_{2}}}.
\]
We see that $D^{\text{PMOO}}\le D^{\text{SFA}}$, iff
\begin{eqnarray*}
\frac{b_{s_{1}}^{\text{PMOO}x\!f_{1}}+b_{s_{1}}^{\text{PMOO}x\!f_{2}}}{\left(R_{s_{1}}\!-r^{x\!f_{1}}-r^{x\!f_{2}}\right)\wedge\left(R_{s_{2}}\!-r^{x\!f_{2}}\right)} & \le & \frac{b_{s_{1}}^{\text{SFA}x\!f_{1}}+b_{s_{1}}^{\text{SFA}x\!f_{2}}}{R_{s_{1}}\!-r^{x\!f_{1}}-r^{x\!f_{2}}}+\frac{b_{s_{2}}^{\text{SFA}x\!f_{2}}}{R_{s_{2}}\!-r^{x\!f_{2}}}
\end{eqnarray*}
1)\emph{ }$R_{s_{1}}\!-\!r^{x\!f_{1}}\!-\!r^{x\!f_{2}}\!>\!R_{s_{2}}\!-\!r^{x\!f_{2}}$:
\begin{eqnarray*}
\frac{b_{s_{1}}^{\text{PMOO}x\!f_{1}}+b_{s_{1}}^{\text{PMOO}x\!f_{2}}}{R_{s_{2}}\!-r^{x\!f_{2}}} & \le & \frac{b_{s_{1}}^{\text{SFA}x\!f_{1}}+b_{s_{1}}^{\text{SFA}x\!f_{2}}}{R_{s_{1}}\!-r^{x\!f_{1}}-r^{x\!f_{2}}}+\frac{b_{s_{2}}^{\text{SFA}x\!f_{2}}}{R_{s_{2}}\!-r^{x\!f_{2}}}\\
\frac{b_{s_{1}}^{\text{PMOO}x\!f_{1}}+b_{s_{1}}^{\text{PMOO}x\!f_{2}}-b_{s_{2}}^{\text{SFA}x\!f_{2}}}{R_{s_{2}}\!-r^{x\!f_{2}}} & \le & \frac{b_{s_{1}}^{\text{SFA}x\!f_{1}}+b_{s_{1}}^{\text{SFA}x\!f_{2}}}{R_{s_{1}}\!-r^{x\!f_{1}}-r^{x\!f_{2}}}\\
\frac{R_{s_{1}}\!-r^{x\!f_{1}}-r^{x\!f_{2}}}{R_{s_{2}}\!-r^{x\!f_{2}}} & \le & \frac{b_{s_{1}}^{\text{SFA}x\!f_{1}}+b_{s_{1}}^{\text{SFA}x\!f_{2}}}{b_{s_{1}}^{\text{PMOO}x\!f_{1}}+b_{s_{1}}^{\text{PMOO}x\!f_{2}}-b_{s_{2}}^{\text{SFA}x\!f_{2}}}
\end{eqnarray*}
2) $R_{s_{2}}\!-r^{x\!f_{2}}\ge R_{s_{1}}\!-r^{x\!f_{1}}-r^{x\!f_{2}}$:
\begin{eqnarray*}
\frac{b_{s_{1}}^{\text{PMOO}x\!f_{1}}+b_{s_{1}}^{\text{PMOO}x\!f_{2}}}{R_{s_{1}}\!-r^{x\!f_{1}}-r^{x\!f_{2}}} & \le & \frac{b_{s_{1}}^{\text{SFA}x\!f_{1}}+b_{s_{1}}^{\text{SFA}x\!f_{2}}}{R_{s_{1}}\!-r^{x\!f_{1}}-r^{x\!f_{2}}}+\frac{b_{s_{2}}^{\text{SFA}x\!f_{2}}}{R_{s_{2}}\!-r^{x\!f_{2}}}\\
\frac{b_{s_{1}}^{\text{PMOO}x\!f_{1}}+b_{s_{1}}^{\text{PMOO}x\!f_{2}}-b_{s_{1}}^{\text{SFA}x\!f_{1}}+b_{s_{1}}^{\text{SFA}x\!f_{2}}}{R_{s_{1}}\!-r^{x\!f_{1}}-r^{x\!f_{2}}} & \leq & \frac{b_{s_{2}}^{\text{SFA}x\!f_{2}}}{R_{s_{2}}\!-r^{x\!f_{2}}}\\
\frac{R_{s_{2}}\!-\!r^{x\!f_{2}}}{R_{s_{1}}\!-\!r^{x\!f_{1}}\!-\!r^{x\!f_{2}}} & \leq & \frac{b_{s_{2}}^{\text{SFA}x\!f_{2}}}{b_{s_{1}}^{\text{PMOO}x\!f_{1}}\!-\!b_{s_{1}}^{\text{SFA}x\!f_{1}}\!+\!b_{s_{1}}^{\text{PMOO}x\!f_{2}}\!-\!b_{s_{1}}^{\text{SFA}x\!f_{2}}}
\end{eqnarray*}
Neither condition is strictly fulfilled as both depend on the service
rates at $s_{1}$ and $s_{2}$. In \cite{Jens_ArbMuxLurch}, it was
shown that the PMOO \emph{tandem} analysis cannot exploit fast residual
service rates at the end; this derivation shows that the PMOO \emph{network}
analysis also struggles with fast rates at the front of the flow of
interest's path due the enforced PSOO violation. 

Last, we show that the previously neglected Alt3 can outperform the
existing compositional feed-forward analyses SFA and PMOO. The relations
between $D^{\text{Alt3}}$, $D^{\text{SFA}}$ and $D^{\text{PMOO}}$
are:\\
1) $D^{\text{Alt3}}\le D^{\text{SFA}}$, iff
\begin{eqnarray*}
\frac{b_{s_{1}}^{\text{Alt3}\left[x\!f_{1},x\!f_{2}\right]}}{R_{s_{1}}\!-r^{x\!f_{1}}-r^{x\!f_{2}}}+\frac{b_{s_{2}}^{\text{Alt3}x\!f_{2}}}{R_{s_{2}}\!-r^{x\!f_{2}}} & \le & \frac{b_{s_{1}}^{\text{SFA}x\!f_{1}}+b_{s_{1}}^{\text{SFA}x\!f_{2}}}{R_{s_{1}}\!-r^{x\!f_{1}}-r^{x\!f_{2}}}+\frac{b_{s_{2}}^{\text{SFA}x\!f_{2}}}{R_{s_{2}}\!-r^{x\!f_{2}}}\\
\frac{b_{s_{1}}^{\text{Alt3}\left[x\!f_{1},x\!f_{2}\right]}-b_{s_{1}}^{\text{SFA}x\!f_{1}}-b_{s_{1}}^{\text{SFA}x\!f_{2}}}{R_{s_{1}}\!-r^{x\!f_{1}}-r^{x\!f_{2}}} & \le & \frac{b_{s_{2}}^{\text{SFA}x\!f_{2}}-b_{s_{2}}^{\text{Alt3}x\!f_{2}}}{R_{s_{2}}\!-r^{x\!f_{2}}}\\
\frac{R_{s_{2}}\!-r^{x\!f_{2}}}{R_{s_{1}}\!-r^{x\!f_{1}}-r^{x\!f_{2}}} & \le & \frac{b_{s_{2}}^{\text{SFA}x\!f_{2}}-b_{s_{2}}^{\text{Alt3}x\!f_{2}}}{b_{s_{1}}^{\text{Alt3}\left[x\!f_{1},x\!f_{2}\right]}-b_{s_{1}}^{\text{SFA}x\!f_{1}}-b_{s_{1}}^{\text{SFA}x\!f_{2}}}
\end{eqnarray*}
2)\emph{ }$D^{\text{Alt3}}\le D^{\text{PMOO}}$, iff\\
\tabto{5mm}a) $R_{s_{1}}\!-r^{x\!f_{1}}\!-r^{x\!f_{2}}>R_{s_{2}}\!-r^{x\!f_{2}}$:
\begin{eqnarray*}
\frac{b_{s_{1}}^{\text{Alt3}\left[x\!f_{1},x\!f_{2}\right]}}{R_{s_{1}}\!-r^{x\!f_{1}}-r^{x\!f_{2}}}+\frac{b_{s_{2}}^{\text{Alt3}x\!f_{2}}}{R_{s_{2}}\!-r^{x\!f_{2}}} & \le & \frac{b_{s_{1}}^{\text{PMOO}x\!f_{1}}+b_{s_{1}}^{\text{PMOO}x\!f_{2}}}{R_{s_{2}}\!-r^{x\!f_{2}}}\\
\frac{b_{s_{1}}^{\text{Alt3}\left[x\!f_{1},x\!f_{2}\right]}}{R_{s_{1}}\!-r^{x\!f_{1}}-r^{x\!f_{2}}} & \leq & \frac{b_{s_{1}}^{\text{PMOO}x\!f_{1}}+b_{s_{1}}^{\text{PMOO}x\!f_{2}}-b_{s_{2}}^{\text{Alt3}x\!f_{2}}}{R_{s_{2}}\!-r^{x\!f_{2}}}\\
\frac{R_{s_{2}}\!-r^{x\!f_{2}}}{R_{s_{1}}\!-r^{x\!f_{1}}-r^{x\!f_{2}}} & \leq & \frac{b_{s_{1}}^{\text{PMOO}x\!f_{1}}+b_{s_{1}}^{\text{PMOO}x\!f_{2}}-b_{s_{2}}^{\text{Alt3}x\!f_{2}}}{b_{s_{1}}^{\text{Alt3}\left\{ x\!f_{1},x\!f_{2}\right\} }}
\end{eqnarray*}
\tabto{5mm}b) $R_{s_{2}}\!-r^{x\!f_{2}}\ge R_{s_{1}}\!-r^{x\!f_{1}}\!-r^{x\!f_{2}}$:
\begin{eqnarray*}
\frac{b_{s_{1}}^{\text{Alt3}\left[x\!f_{1},x\!f_{2}\right]}}{R_{s_{1}}\!-r^{x\!f_{1}}-r^{x\!f_{2}}}+\frac{b_{s_{2}}^{\text{Alt3}x\!f_{2}}}{R_{s_{2}}\!-r^{x\!f_{2}}} & \le & \frac{b_{s_{1}}^{\text{PMOO}x\!f_{1}}+b_{s_{1}}^{\text{PMOO}x\!f_{2}}}{R_{s_{1}}\!-r^{x\!f_{1}}-r^{x\!f_{2}}}\\
\frac{b_{s_{2}}^{\text{Alt3}x\!f_{2}}}{R_{s_{2}}\!-r^{x\!f_{2}}} & \leq & \frac{b_{s_{1}}^{\text{PMOO}x\!f_{1}}+b_{s_{1}}^{\text{PMOO}x\!f_{2}}-b_{s_{1}}^{\text{Alt3}\left[x\!f_{1},x\!f_{2}\right]}}{R_{s_{1}}\!-r^{x\!f_{1}}-r^{x\!f_{2}}}\\
\frac{R_{s_{1}}\!-r^{x\!f_{1}}-r^{x\!f_{2}}}{R_{s_{2}}\!-r^{x\!f_{2}}} & \leq & \frac{b_{s_{1}}^{\text{PMOO}x\!f_{1}}+b_{s_{1}}^{\text{PMOO}x\!f_{2}}-b_{s_{1}}^{\text{Alt3}\left[x\!f_{1},x\!f_{2}\right]}}{b_{s_{2}}^{\text{Alt3}x\!f_{2}}}
\end{eqnarray*}
The observations about the relation of delay bounds still hold: Relations
1), 2a), and 2b) reflect the influence of the rates on the flow of
interest's path (left terms). A large service rate $R_{s_{1}}$ can,
in fact, best be exploited by Alt3. Alt3 can thus simultaneously outperform
both existing analyses. However, there is no strict ordering with
respect to delay bounds; each analysis can potentially outperform
the others.

Considering the parameters omitted for the ease of presentation, we
can find a simple parameter setting that allows Alt3 to arbitrarily
outperform SFA and PMOO. Alt3 scales better with increasing $b^{x\negmedspace f_{1}}$
when parameters are set to: $\beta_{s_{0}}=\beta_{25,5}$, $\beta_{s_{1}}=\beta_{25,0}$,
$\beta_{s_{2}}=\beta_{3,5}$, $\alpha^{\text{foi}}=\gamma_{0.5,5}$,
$\alpha^{xf_{1}}=\gamma_{2.5,b^{xf_{1}}}$, $\alpha^{xf_{2}}=\gamma_{2.5,5}$.
This is illustrated by the following Figure.

\begin{figure}[H]
\begin{centering}
\includegraphics[width=0.6\columnwidth]{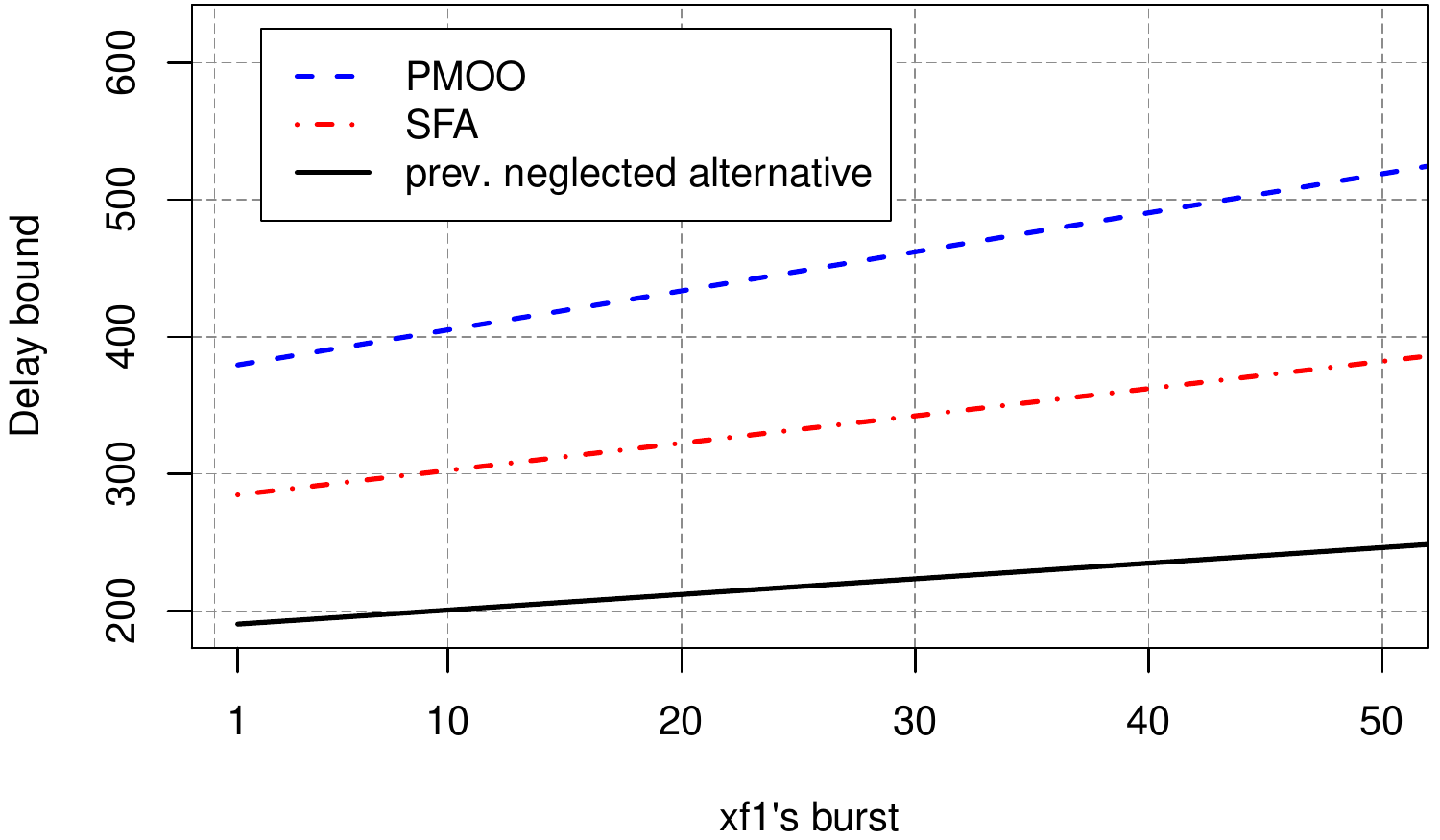}
\par\end{centering}
\caption{\label{fig:TMnew_arbBetter}The previously neglected alternative (see
Section~\ref{subsec:ProblemDesign-compFFA}) can arbitrarily outperform
PMOO and SFA in the network shown in of Figure~\ref{fig:Sample-server-topology}. }
\end{figure}

\section{Computational Complexity}

\label{sec:Computational-complexity-algDNC}The computational complexity
of algDNC algorithms mainly depends on two aspects:
\begin{enumerate}
\item the shape of curves used in the model (Section~\ref{subsec:System-Description})
defines the complexity of the operations applied to them (Section~\ref{subsec:Network-Calculus-Analysis}).
\item the network put into the algorithm, including the entanglement of
flows crossing it, defines the operations in a permissible algDNC
equation (Section~\ref{subsub:Server-Graph}).
\end{enumerate}
For (1), complexity of basic operations,~\cite{Bouillard_JDEDS}
gives results. Most notably, if arrival curves are from the set $\mathcal{F}_{\text{TB}}$
and service curves are from the set $\mathcal{F}_{\text{RL}}$, the
(min,+)-algebraic DNC operations are in $\mathcal{O}(1)$. The second
impact factor, the analyzed network, has not been investigated in
detail w.r.t. algDNC analyses presented in this article. We provide
results about the amount of operations required for an analysis of
tandems as well as sink trees.

\subsection{Tandem of Length $h$}

We start with an analysis of the two algebraic analyses presented
in this article, SFA and algDNC, applied to a tandem of length $h$,
crossed by $m$ flows.

\subsubsection{SFA}

\label{subsec:SFA-tandem}The presented SFA tandem analysis requires
cross-traffic arrival bounds at every server on the tandem in order
to compute the foi's left-over service curve (step~1 in Section~\ref{subsec:Feed-forward-Analyses}).
Then, the end-to-end left-over service curve can be computed in order
to derive the foi's delay bound (step~2 in Section~\ref{subsec:Feed-forward-Analyses}).
For upper bounding the required amount of operations, we assume the
foi crosses the maximum amount of servers, giving it a path of length
$h$.

\begin{enumerate}

\item Bounding the arrivals of all cross-flows at every server in the tandem.

\end{enumerate}The literature~\cite{BouillardHDR} proposes to segregately
bound each cross-flow's arrival with a SFA. As the SFA only implements
the PBOO principle, this procedure became known as segregated PBOO
arrival bounding (segrPBOO)~\cite{BS15-2}. It starts at any server
on the foi's path and recursively backtracks cross-flows to derive
a bound on the arrival of each cross-flow there. We denote the distance
to the analyzed tandem's last server with $d$, $d\in\left\{ 0,\ldots,h-1\right\} $.
\begin{itemize}
\item Assume a server crossed by the foi at distance $d$. At this server,
there are $m-1<m$ cross-flows to be bounded, i.e., as many segrPBOO
arrival bounding recursions are started.
\item In this worst-case tandem, we can compute the number of recursions
invoked for distance $d+i+1$ by a server at distance $d+i$, $1\leq i\leq h-d-1$.
A server will invoke one segrPBOO arrival bounding for each of its
$m-1<m$ flows' $m-2<m$ cross-flows. Therefore, we can upper bound
the amount of these segrPBOO invocations with $m^{2}$.
\end{itemize}
Next, we need to count the operations that are actually invoked at
a distance $d+i$. Each invocation corresponds to the need to compute
an output bound for a single flow. Thus, we need to aggregate this
flow's $m-1$ cross-flows with $m-2$ operations, derive the flow's
left-over service curve and compute the actual output bound. This
results in $m$ operations at each server. 

For cross-traffic arrival bounding required by a server at distance
$d$ on the foi's path, we need to sum up the operations on the entire
part of the tandem to be backtracked by segrPBOO arrival bounding,
i.e., the servers at distances $i$, $d<i\leq h-1$. We saw that these
servers will be invoked multiple times, depending on the amount of
server in distances $j$, $d<j\leq i-1$, i.e., $\prod_{j=d+1}^{i-1}\left(m^{2}\right)^{j}$.
The $m$ operations for each flow at each server are scaled by this
number. Last, the server crossed by the foi at distance $d$ invokes
$m$ segrPBOO boundings itself. This gives the total number of operations:
\begin{eqnarray*}
m\cdot\sum_{i=d+1}^{h-1}m\cdot\prod_{j=d+1}^{i-1}\left(m^{2}\right)^{j} & = & m^{2-d\left(d+1\right)}\cdot\sum_{i=d+1}^{h-1}m^{\left(i-1\right)i}
\end{eqnarray*}
Last, this procedure needs to be repeated for all servers on the
foi's path. The total amount of operations to bound cross-traffic
arrivals is thus upper bounded by
\begin{equation}
\sum_{d=0}^{h-1}m^{2-d\left(d+1\right)}\cdot\sum_{i=d+1}^{h-1}m^{\left(i-1\right)i}\label{eq:SFAsegrPBOOtandem}
\end{equation}

\begin{enumerate}[resume]

\item Executing a SFA for the flow of interest.

\end{enumerate}The second analysis step corresponds to the compFFA
step~2, the tandem analysis on the foi's path. It consists of the
basic algDNC operations, cross-flow aggregation, left-over service
curve derivation, convolution to an end-to-end service curve, and
the eventual bounding of the foi's delay. These operations have to
be executed at each of the $h$ servers crossed by the foi. Remember
that there are $m$ flows at each server, one of which is the foi.

\emph{Cross-flow Aggregation:} After step~1, we know all $m-1$ segregately
derived cross-flow arrivals to be aggregated with $m-2$ operations.
I.e., there are $h\cdot(m-2)$ aggregations in total.

\vspace{1mm}

\emph{Left-over Service Curve Derivation:} The aggregated cross-traffic
arrivals are used to compute the left-over service curve. There are
$h$ left-over operations in total.

\vspace{1mm}

\emph{End-to-end Service Curve Computation:} Convolving the $h$ left-over
service curves requires $h-1$ operations.

\emph{Delay Bounding:} This is done with a single operation.

\vspace{1mm}

\emph{Total Amount of Operations for foi Analysis: 
\begin{eqnarray}
h\cdot(m-2)+h+(h-1)+1 & = & h\cdot m.\label{eq:SFAtandemDelay}
\end{eqnarray}
}

\vspace{1mm}

\emph{Total Amount of Operations for the tandem SFA (Equations (\ref{eq:SFAsegrPBOOtandem})
+ (\ref{eq:SFAtandemDelay})):
\[
h\cdot m\cdot\sum_{d=0}^{h-1}m^{2-d\left(d+1\right)}\cdot\sum_{i=d+1}^{h-1}m^{\left(i-1\right)i}
\]
}

which defines the an upper bound on the complexity of the SFA in a
tandem of length $h$, depending on the complexity imposed on the
operations by the shape of curves chosen to model flow arrivals and
service curves.

\subsubsection{The novel, exhaustive algDNC analysis}

\label{subsec:AlgDNC-tandem}Next, we evaluate the amount of operations
required in our novel algDNC analysis. For comparison with the SFA
above, we use the same tandem of $h$ servers entirely crossed by
$m$ flows.

\begin{enumerate}

\item Bounding the arrivals of all cross-flows at every server in the tandem.

\end{enumerate}In contrast to the SFA, we employ aggregation during
arrival bounding to mitigate the most basic cause of PSOO violations.
The corresponding procedure backtracks along paths of cross-flow aggregates.
It is known as aggrAB. We will combine aggrAB with the exhaustive
decomposition and the efficiency improvements of our novel algDNC
analysis.

In a tandem of $h$ servers that is entirely crossed by $m$ flows,
aggrAB only separates the foi and its single cross-traffic aggregate
of $m-1$ flows. I.e., the cross-traffic arrival bounding will not
become recursive like the SFA's segrPBOO. The cross-traffic aggregate
has a path of $h$ servers, i.e, it contains $h-1$ links which results
in $2^{h-1}$ different decompositions into sub-tandem sequences.
This leads to an average of $\frac{h+1}{2}$ tandems per decomposition,
as Proposition~\ref{prop:average-number-tandem-matching-1} shows.
Each tandem requires one output bound operation. Left-over service
curve operations are not required as there is not recursive bounding
of cross-traffic. However, tandems need to be convolved for the derivation
of output bounds arriving further to the end of the tandem. That results
in the need for $\frac{h+1}{2}-1=\frac{h-1}{2}$ convolution operations
on average. In total, these steps give us
\begin{eqnarray}
2^{h-1}\left(\frac{h+1}{2}+\frac{h-1}{2}\right) & = & h\cdot2^{h-1}\label{eq:algDNC-tandem-aggrAB}
\end{eqnarray}
algDNC operations to bound the foi's cross-traffic. 
\begin{prop}
\label{prop:average-number-tandem-matching-1}In a tandem of length
$h$, a decomposition's average amount of sub-tandem is $\frac{h+1}{2}.$
\end{prop}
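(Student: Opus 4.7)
The plan is to exploit the elementary bijection between sub-tandem decompositions of a length-$h$ tandem and subsets of its internal links. A tandem $\langle s_1,\dots,s_h\rangle$ has exactly $h-1$ internal links, and a decomposition is uniquely determined by specifying at which of these links the tandem is ``cut''. This gives the $2^{h-1}$ decompositions that were already used in Section~\ref{subsec:Algorithm}. Moreover, if $k$ links are cut, the resulting decomposition has exactly $k+1$ sub-tandems.

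With this bijection in hand, I would compute the average sub-tandem count in one of two equivalent ways. The direct counting route sums $(k+1)$ weighted by the number $\binom{h-1}{k}$ of decompositions with exactly $k$ cuts and divides by the total $2^{h-1}$:
\[
\frac{1}{2^{h-1}}\sum_{k=0}^{h-1}(k+1)\binom{h-1}{k} \;=\; \frac{(h-1)\cdot 2^{h-2} + 2^{h-1}}{2^{h-1}} \;=\; \frac{h+1}{2},
\]
using the standard identities $\sum_k\binom{h-1}{k}=2^{h-1}$ and $\sum_k k\binom{h-1}{k}=(h-1)2^{h-2}$.

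Alternatively, and more conceptually, I would apply linearity of expectation: choosing a decomposition uniformly at random corresponds to flipping an independent fair coin at each of the $h-1$ internal links to decide whether to cut. Let $X_i\in\{0,1\}$ indicate that link $i$ is cut; then the number of sub-tandems is $1+\sum_{i=1}^{h-1}X_i$, and $\mathbb{E}[X_i]=1/2$, yielding $1+(h-1)/2 = (h+1)/2$ immediately.

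There is no real obstacle here; the only point to be careful about is stating precisely that a ``decomposition'' in the sense of Algorithm~\ref{alg:OnlineAlgorithm} is in one-to-one correspondence with a subset of the $h-1$ internal links (so that the uniform distribution over decompositions matches the product of fair Bernoullis). Once that correspondence is recorded, either of the two calculations above closes the proof in a line.
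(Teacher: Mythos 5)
Your first computation is exactly the paper's proof: count decompositions by the number $k$ of cut links, weight $k+1$ sub-tandems by $\binom{h-1}{k}$, and evaluate $\sum_{k}(k+1)\binom{h-1}{k}/2^{h-1}=\frac{h+1}{2}$ with the same two binomial identities. The linearity-of-expectation variant you add is a pleasant shortcut but mathematically the same argument, so the proposal is correct and essentially coincides with the paper's proof.
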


\begin{proof}
There are $\binom{h-1}{k}$ possibilities to find $k-1$ subtandems
in a tandem of length $h$: only one ($\binom{h-1}{0}$) configuration
to have one subtandem, $h-1=\binom{h-1}{1}$ to have 2 subtandem etc.
Since the total numbers of tandem matchings equals $2^{h-1}$, we
obtain for the average number of subtandems
\[
\frac{\sum_{k=0}^{h-1}\binom{h-1}{k}\left(k+1\right)}{2^{h-1}}\,=\,\frac{\sum_{k=0}^{h-1}\binom{h-1}{k}+\sum_{k=0}^{h-1}k\binom{h-1}{k}}{2^{h-1}}\,=\,\frac{2^{h-1}+\left(h-1\right)\cdot2^{h-2}}{2^{h-1}}\,=\,\frac{\left(h+1\right)2^{h-2}}{2^{h-1}}\,=\,\frac{h+1}{2}.
\]
\end{proof}
\begin{enumerate}[resume]

\item Executing the novel algDNC analysis for the flow of interest.

\end{enumerate}

On the foi's path, we proceed similarly. The tandem consists of $h$
servers and can be decomposed into $2^{h-1}$ different sub-tandem
sequences. In contrast to arrival bounding, we need to derive one
left-over service curve per sub-tandem, yet, no output bound. This
gives us $\frac{h+1}{2}$ as in aggrAB. Additionally, we convolve
$\frac{h+1}{2}-1=\frac{h-1}{2}$ sub-tandems per decomposition on
average, again. Finally, the foi analysis requires one delay bounding.
Thus, the total amount of operations is
\begin{eqnarray}
2^{h-1}\left(\frac{h+1}{2}+\frac{h-1}{2}\right)+1 & = & h\cdot2^{h-1}+1.\label{eq:algDNC-tandem-foi}
\end{eqnarray}

\vspace{1mm}

\emph{Total Amount of Operations for the exhaustive tandem algDNC
analysis (Equations (\ref{eq:algDNC-tandem-aggrAB}) + (\ref{eq:algDNC-tandem-foi})):
}
\begin{eqnarray*}
h\cdot2^{h-1}+h\cdot2^{h-1}+1 & = & h\cdot2^{h}+1
\end{eqnarray*}

which defines the complexity of the exhaustive algDNC in a tandem
of $h$ servers that is entirely crossed by $m$ flows. Note, that
the result is independent of the actual size of $m$ as the aggrAB
convolves all cross-flows into a single cross-traffic aggregate. This
constitutes a major advantage over the segrPBOO arrival bounding of
the literature's SFA~\cite{BouillardHDR}.

\subsection{Full $k$-ary Sink-Tree Networks of Height $h$}

We continue with an analysis of SFA's and algDNC's application to
full $k$-ary sink-tree networks of a maximum height $h$. These sink
trees were already used to illustrate the combinatorial explosion
of optDNC's LP analysis (Section~\ref{subsec:LPcombinatorialExplosion}).

Such a sink tree has $n=\frac{k^{h+1}-1}{k-1}$ nodes, each node corresponds
to one server, and we assume the foi to originate at a leaf node in
order to cross the maximum amount of nodes, $h+1$ (the sink is at
height $0$).

\subsubsection{SFA}

For the SFA, the amount of operations in a full $k$-ary sink tree
of height $h$ is composed of the following parts:

\begin{enumerate}

\item Bounding the arrivals of all cross-flows at every node in the tree.

\end{enumerate}The SFA applies segrPBOO arrival bounding. I.e., we
need to recursively unfold the computations for each individual flow
at every node in the sink tree. We use structural information of the
sink tree to count the operations necessary to bound cross-traffic
arrivals for the foi:
\begin{itemize}
\item The amount of nodes in the full $k$-ary sink tree of height $h$
is $\frac{k^{h+1}-1}{k-1}<\frac{k}{k-1}k^{h}$. We assume that one
flow originates at every node and all flows cross the sink. Then,
there are $\frac{k}{k-1}k^{h}$ flows at the sink.
\item Each node at distance $d\in\left\{ 0,\ldots,h\right\} $ from the
sink is itself the root of a sub-tree. This sub-tree has $\frac{k^{h-d+1}-1}{k-1}<\frac{k}{k-1}k^{h-d}$
nodes, i.e., the it is crossed by $\frac{k}{k-1}k^{h-d}$ flows.
\item There are $k^{d}$ nodes at a distance of $d$ from the sink.
\end{itemize}
The segrPBOO arrival bounding starts at any node on the foi's path
and recursively derives a bound on the arrival of each cross-flow
there.
\begin{itemize}
\item Assume a node crossed by the foi at distance $d$. There are at most
$\frac{k}{k-1}k^{h-d}-1<\frac{k}{k-1}k^{h-d}$ cross-flows to be bounded,
i.e., as many segrPBOO recursions are started. They result in $\frac{k}{k-1}k^{h-d}$
recursions invoking further recursions on the next level in the sink
tree, i.e., distance $d+1$ from the sink.
\item In the sink tree, we can easily compute the number of times that the
$k^{d+i}$ nodes at distance $d+i$ will invoke operations at nodes
at distance $d+i+1$ from the sink, $1\leq i\leq h-d$. The $k^{d+i}$
nodes will each invoke one segrPBOO arrival bounding for each of their
$\frac{k}{k-1}k^{h-(d+i)}$ flows' $\frac{k}{k-1}k^{h-(d+i)}-1<\frac{k}{k-1}k^{h-(d+i)}$
cross-flows. Note that, in the sink tree, it is guaranteed that all
flows will need to bound their cross-traffic as the arrival bound
at the next node closer to the sink is required. Also note, that we
assume all cross-flows always arrive from a node one level further
away from the source. Therefore, we can upper bound the amount of
invocations at level $d+i+1$ as we neglect that flows originate at
distance $d+i$; the flows for which the arrival curve is already
known. Thus, level $d+i$ will invoke $k^{d+i}\cdot\left(\frac{k}{k-1}k^{h-(d+i)}\right)^{2}$
segrPBOO arrival boundings at level $d+i+1$.
\end{itemize}
Next, we need to count the operations that are actually invoked at
a level $d+i$. For non-leaf nodes, these are $\frac{k}{k-1}k^{h-(d+i)}-2$
aggregations of cross-flow arrival bounds, $1$ left-over service
curve derivation and $1$ output bound. I.e., $\frac{k}{k-1}k^{h-(d+i)}$
operations in total. At leaf-nodes, we only have a single output bound
operation each, as there in only a single flow present at this level.

For cross-traffic arrival bounding of a server at distance $d$ on
the foi's path, we need to sum up the operations in the entire sink
tree backtracked by segrPBOO arrival bounding, i.e., the nodes at
distances $i$, $d<i\leq h$. We saw that these $k^{i}$ nodes will
be invoked depending on the size of the sink tree between $d$ and
$i$, i.e., distances $j$, $d<j<i-1$, causing $\prod_{j=1}^{i-1}k^{d+j}\cdot\left(\frac{k}{k-1}k^{h-(d+j)}\right)^{2}$
invocations. These, in turn, cause $\frac{k}{k-1}k^{h-(d+i)}$ operations
at non-leaf nodes and a single one at leaf nodes. Last, the node crossed
by the foi at distance $d$ invokes $\frac{k}{k-1}k^{h-d}$ segrPBOO
boundings itself. This gives the total number of operations in segrPBOO
arrival bounding for a node at distance $d$ from the sink:
\begin{align*}
 & \frac{k}{k-1}k^{h-d}\cdot\left(\sum_{i=d+1}^{h}\left(\frac{k}{k-1}k^{h-\left(d+i\right)}\cdot\prod_{j=1}^{i-1}k^{d+j}\left(\frac{k}{k-1}k^{h-\left(d+j\right)}\right)^{2}\right)+\prod_{j=d}^{h+1}k^{h-j}\left(\frac{k}{k-1}k^{h-\left(d+j\right)}\right)^{2}\right)\\
= & \left(\frac{k}{k-1}\right)^{3}k^{h-d}\cdot\left(\frac{k}{k-1}k^{3h-2d}\sum_{i=d+1}^{h}\left(k^{-\frac{1}{2}i\left(i-3\right)}\right)+k^{-\frac{3}{2}\left(h^{2}+6h+2-d^{2}-d\right)}\right)\\
\leq & \,8k^{4h-3d}\cdot\left(2\sum_{i=d+1}^{h}\left(k^{-\frac{1}{2}i\left(i-3\right)}\right)+k^{\frac{3}{2}\left(-h^{2}-8h+d^{2}+3d-2\right)}\right),
\end{align*}
where we have used that $k\geq2$, the outdegree of a node defining
the difference between tandem and tree networks.

Finally, this procedure needs to be repeated for all nodes on the
foi's path. For upper bounding the required amount of operations,
we assume the foi to cross the maximum amount of nodes, giving it
a path of length $h+1$, i.e., $d\in\left\{ 0,\ldots,h\right\} $.
The total amount of operations to bound cross-traffic arrivals is
thus upper bounded by
\[
8k^{4h}\sum_{d=0}^{h}k^{-3d}\cdot\left(2\sum_{i=d+1}^{h}\left(k^{-\frac{1}{2}i\left(i-3\right)}\right)+k^{\frac{3}{2}\left(-h^{2}-8h+d^{2}+3d-2\right)}\right).
\]

\begin{enumerate}[resume]

\item Executing a SFA for the flow of interest.

\end{enumerate}The second analysis step corresponds to the compFFA
step~2. It equals the same step in the tandem, yet for $h+1$ nodes.\vspace{3mm}

\emph{Total Amount of Operations for the Sink-tree SFA
\[
\left(h+1\right)\cdot m\cdot8k^{4h}\sum_{d=0}^{h}k^{-3d}\cdot\left(2\sum_{i=d+1}^{h}\left(k^{-\frac{1}{2}i\left(i-3\right)}\right)+k^{\frac{3}{2}\left(-h^{2}-8h+d^{2}+3d-2\right)}\right).
\]
}which defines the complexity of the SFA in a full $k$-ary sink
tree of maximum height $h$.

\subsubsection{The novel, exhaustive algDNC analysis}

\begin{enumerate}

\item Bounding the arrivals of all cross-flows at every node in the sink tree.

\end{enumerate}\textbf{}This analysis step covers compFFA step~1
presented in Section~\ref{subsec:Feed-forward-Analyses} and consists
of the algDNC operations output bounding and aggregation of flows.
For aggrAB with efficiency improvements, we can derive the amount
of operations in this recursion as follows.

\emph{Output Bounding:} Derivation of these bounds requires one (min,+)-deconvolution
at all nodes except the sink whose output is not considered in any
analysis. Neither do we consider the the foi's source node. Therefore,
we require $n-2$ deconvolution operations.

\emph{Aggregation:} All non-leaf nodes have $k$ inlinks, each contributing
one flow (aggregate), all of which need to be aggregated during the
arrival bounding in the sink tree. A full $k$-ary sink tree of height
$h$ has $k^{h}$ leaf nodes and thus $n-k^{h}$ non-leaf nodes. Note,
that the leaf nodes' share of the total network is increasing in $k$.
The entire sink tree has $\left(n-k^{h}\right)\cdot k=n-1$ links
whose flows need to be aggregated. One exception exists: similar to
output bounding, the link after the foi's source does not contribute
to the aggregation requirement as it is only crossed by the foi. Therefore,
we obtain a total of $n-2$ algDNC aggregations operations.

\vspace{1mm}

\emph{Total Amount of Operations for Arrival Bounding:
\begin{equation}
n-2+n-2=2n-4.\label{eq:k-ary-arrive-bound-SFA}
\end{equation}
}

\begin{enumerate}[resume]

\item Executing the novel algDNC analysis for the flow of interest.

\end{enumerate}The foi's path is a tandem where we know all the required
cross-traffic arrival bounds for our algDNC analysis. I.e., we take
the same steps as shown in Section~\ref{subsec:AlgDNC-tandem}'s
foi analysis. Note, however, that this tandem has an additional hop
and consists of consisting of $h+1$ nodes. Then, we obtain $2^{h}$
different decompositions into sub-tandem sequences with an average
of $\frac{h+2}{2}$ sub-tandems. This results in 
\begin{equation}
2^{h}\left(\frac{h+2}{2}+\frac{h}{2}\right)+1=2^{h}\left(h+1\right)+1<4n\cdot\log_{2}\left(n\right)+13\label{eq:k-ary-foi-ana-algDNC}
\end{equation}
algDNC operations for bounding the foi delay.

\vspace{3mm}
\emph{Total Amount of Operations for the exhaustive Sink-tree algDNC
analysis (Equations (\ref{eq:k-ary-arrive-bound-SFA}) + (\ref{eq:k-ary-foi-ana-algDNC})):
\[
2n\cdot\left(1+2\cdot\log_{2}\left(n\right)\right)+9,
\]
}which defines the complexity of the exhaustive algDNC in a full $k$-ary
sink tree of maximum height $h$, depending on the complexity imposed
on the operations by the shape of curves chosen to model flow arrivals
and nodal service.

\section{Cost Reduction by Parallelized Optimization}

\label{sec:EvalCostParallelCPLEX}\begin{figure*}[!t]
\begin{centering}
\includegraphics[width=0.8\textwidth]{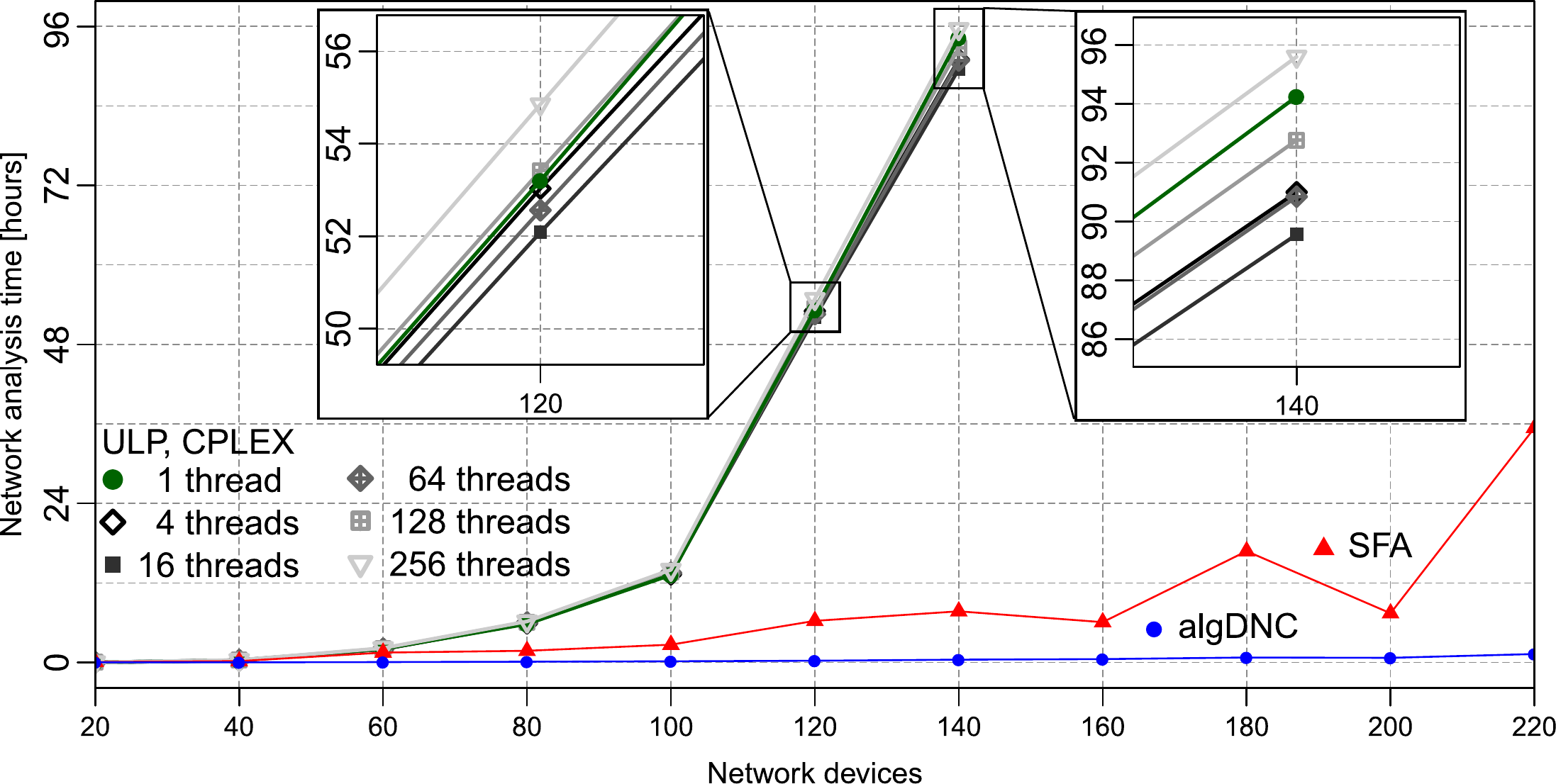}
\par\end{centering}
\vspace{3mm}
\begin{centering}
%{\smaller
\begin{tabular}[b]{|r||c|c|c|}
\hline
Net & \multicolumn{2}{|c|}{ULP, CPLEX} & algDNC\tabularnewline
    & \scriptsize{(256 threads)} & \scriptsize{(16 threads)} & \scriptsize{(1 thread)}\tabularnewline
\hline 
\hline
20 & 0:01:04 & 0:00:49 & 0:00:51\tabularnewline
\hline
40 & 0:21:54 & 0:20:17 & 0:01:12\tabularnewline
\hline
60 & 2:10:14 & 2:00:52 & 0:03:21\tabularnewline
\hline
80 & 6:22:44 & 5:57:59 & 0:06:44\tabularnewline
\hline
100 & 13:59:08 & 13:23:08 & 0:10:39\tabularnewline
\hline
120 & 54:51:04 & 52:05:02 & 0:15:45\tabularnewline
\hline
140 & 95:36:08 & 89:33:38 & 0:25:09\tabularnewline
\hline
160 & -- & -- & 0:30:18\tabularnewline
\hline
180 & -- & -- & 0:44:46\tabularnewline
\hline
200 & -- & -- & 0:41:56\tabularnewline
\hline
220 & -- & -- & 1:16:23\tabularnewline
\hline
240 & -- & -- & 1:39:18\tabularnewline
\hline
\multicolumn{4}{c}{\footnotesize Network analysis time}\tabularnewline
\multicolumn{4}{c}{\footnotesize (hh:mm:ss)}\tabularnewline
\end{tabular}
%}
\par\end{centering}
\caption{\label{fig:AnalysisEffort-Parallel}Network analysis cost on a modern
many-core architecture allowing for massively parallelized optimization
over slower individual CPU cores. The slowdown for single-threaded
execution increases algDNC network analysis times, yet, parallelized
optimization does not allow optDNC to attain a similar performance
in any network larger than $20$ devices.}
\end{figure*}

Figure~\ref{fig:AnalysisEffort-Parallel}
depicts the results of a second run of the same sample analyses of
Section~\ref{fig:Eval-AFDX-DelayBounds} and~\ref{fig:Eval-AFDX-RunTimes}.
They were executed on a Colfax Ninja Workstation equipped with an
Intel Xeon Phi 7210 CPU. In contrast to the Intel Xeon server CPU
of the previous evaluation, the Xeon Phi is a modern many-core CPU
offering 64~physical cores that can, employing simultaneous multithreading,
process up to 256~threads in parallel. These cores do, however, clocked
at only 1.3GHz. The workstation offers 110GB RAM to cope with the
massively parallelized optimization's memory demand without swapping
to disk.

The decreased single-thread performance negatively impacts the network
analysis times of both algebraic DNC analyses. While only the increase
in SFA analysis time is already visible in Figure~\ref{fig:AnalysisEffort-Parallel},
on the left, the measured algDNC times given in the table on the right
show a significant increase as well. Relative to the server-grade
CPU, SFA network analysis takes between $2.0$ and $3.82$ times as
long, with an average of $3.27$. For the algDNC analysis, the slowdown
is between factor $4.05$ and $4.51$, with an average of $4.19$.
Similarly, the ULP analyses became slower. For both, the 4-threaded
and the single-threaded analysis, this slowdown was between factor
$1.16$ and factor $3.33$. Although this slowdown is smaller than
algDNC's one, parallelized optimization did not allow optDNC to attain
a similar performance in any network larger than $20$ devices.

Most interestingly, this disadvantage of modern many-core CPUs could
not be compensated with massively parallelizing the ULP's optimization.
In fact, network analyses using all $256$~logical cores were the
slowest among the tested parallelization levels $1$, $4$, $16$,
$64$, $128$, and $256$. Fastest analysis times could be achieved
with a parallelization over a maximum of $16$~threads, far below
the amount of physical CPU cores (see Figure~\ref{fig:AnalysisEffort-Parallel},
inlets zooming into results for $120$ and $140$ devices). The difference
between the best and worst setting for parallelization is $\approx\!23\%$
in the smallest network but it decreases to constantly less than $8\%$
in all larger networks. Thus, massively parallelized optimization
did not yield considerable improvements for current optDNC analysis.

\end{document}